\documentclass[12pt]{article}
\usepackage{amsmath}
\usepackage{graphicx,psfrag,epsf}
\usepackage{enumerate}
\usepackage{natbib}
\usepackage{url} % not crucial - just used below for the URL 
\usepackage{amsmath,amssymb}
\usepackage{hyperref} 
\usepackage{subcaption}
\usepackage{listings}
\usepackage{xcolor}
\usepackage{array}          % for >{\raggedright\arraybackslash} p{..}
\usepackage{enumitem}
\usepackage{etoolbox}       % for \small inside the table caption, if desired
\usepackage[ruled,vlined]{algorithm2e} 
\usepackage{amsthm}
\usepackage{comment}
%\pdfminorversion=4
% NOTE: To produce blinded version, replace "0" with "1" below.
\newcommand{\blind}{0}

% DON'T change margins - should be 1 inch all around.
\addtolength{\oddsidemargin}{-.5in}%
\addtolength{\evensidemargin}{-.5in}%
\addtolength{\textwidth}{1in}%
\addtolength{\textheight}{1.3in}%
\addtolength{\topmargin}{-.8in}%

\newtheorem{definition}{Definition}[section]
\newtheorem{lemma}[definition]{Lemma} 
\newtheorem{theorem}{Theorem}[section]
 
\newtheorem{proposition}{Proposition}[section]

% Appendix
\usepackage{booktabs, tabularx}
\usepackage{xparse}  

\newcolumntype{L}{>{\raggedright\arraybackslash}X}
\newcolumntype{C}{>{$}l<{$}} 
\newcolumntype{Y}{>{\raggedright\arraybackslash}X}

\NewDocumentEnvironment{showas}{ m m m m O{} }{%
  \begingroup
  \setcounter{#4}{\numexpr #2-1\relax}%
  \csname begin\endcsname{#3}[#5]%
}{%
  \csname end\endcsname{#3}%
  \endgroup
}

\counterwithin{table}{section}

\begin{document}

\def\spacingset#1{\renewcommand{\baselinestretch}%
{#1}\small\normalsize} \spacingset{1}

% Appendix:

% structure: abstract; keywords; main text introduction, materials and methods, results, discussion; acknowledgments; ; declaration of interest statement; references; appendices (as appropriate); table(s) and figure(s) and their captions placed within the text.

%%%%%%%%%%%%%%%%%%%%%%%%%%%%%%%%%%%%%%%%%%%%%%%%%%%%%%%%%%%%%%%%%%%%%%%%%%%%%%

\if0\blind
{
  \title{\bf Differentially Private Computation of the Gini Index for Income Inequality}
  \author{Wenjie Lan\\
   % The authors gratefully acknowledge \textit{please remember to list all relevant funding sources in the unblinded version}}\hspace{.2cm}\\
    Department of Statistical Science, Duke University\\
    and \\
    Jerome P. Reiter \\
    Department of Statistical Science, Duke University }
  \maketitle
} \fi

\if1\blind
{
  \bigskip
  \bigskip
  \bigskip
  \begin{center}
    \title{\bf Differentially Private Computation of the Gini Index for Income Inequality}
\end{center}
  \medskip
\maketitle
} \fi

\bigskip
\begin{abstract}
% 200 or fewer words
The Gini index is a widely reported measure of income inequality. In some settings, the underlying data used to compute the Gini index are confidential.
%subject to requirements to protect data subjects' confidentiality. 
%an administrative database on earnings such as tax records. 
The organization charged with reporting the Gini index may be concerned that its release %the result of the computation 
could leak information about the underlying  data. We present an approach for bounding this information leakage by releasing a differentially private version of the Gini index. In doing so, we 
%Existing mechanisms based on global sensitivity are overly conservative, while sample-and-aggregate approaches often assume moment bounds or sub-Gaussian tails, yielding substantial bias on heavy-tailed data such as incomes. 
analyze how adding, deleting, or altering a single observation in any specific dataset can affect the computation of the Gini index; this is known as the local sensitivity. We then derive a smooth upper bound on the local sensitivity. Using this bound, we define a mechanism that adds noise to the Gini index, thereby satisfying differential privacy.
%computed using the confidential data. 
Using simulated and genuine income data, we show that the mechanism can reduce the errors from noise injection substantially relative to differentially private algorithms that rely on the global sensitivity, that is, the maximum of the local sensitivities over all possible datasets. 
%enabling practical privacy-aware inequality measurement.Using empirical experiments, 
We characterize settings where using smooth sensitivity %mechanism calibrates output noise to a normalized range, showing that when this range is small, the Gini index is 
can provide highly accurate estimates, as well as settings where the noise variance is simply too large to provide reliably useful results. 
%itthan less sensitive and thus requires less noise. 
%The algorithm avoids trimming or subsampling, makes no parametric assumptions on the income distribution, and requires only a justifiable upper bound \(U\) (e.g., from economic size or market constraints).
We also present a 
%pair the output-perturbation release with a simple
Bayesian post-processing step that provides interval estimates about the value of the Gini index computed with the confidential data. %maps any out-of-range values back to \([0,1]\) without additional privacy cost. 
%The result is a computationally efficient procedure with near-optimal accuracy under mild sample-complexity conditions. 
\end{abstract}

\noindent%
{\it Keywords:}  confidentiality; perturbation; privacy; smooth sensitivity. 
\vfill
% 3-6 words
\newpage
\spacingset{1.45} % DON'T change the spacing!
\section{Introduction}
\label{sec:intro}

The Gini index  is a widely used statistic for quantifying inequality of monetary variables, such as income and wealth, within some defined group, such as a state or country \citep{rothschild1973some}.  Its value ranges from zero to one, with zero representing perfect equality, e.g., everyone has the same income, and one representing perfect inequality, e.g., all the income is concentrated in one individual. To give a sense of values seen in practice, according to the World Bank, a recently computed Gini index for the United States is 0.418, which sits squarely in the middle of the distribution of other countries' Gini index values.  Two countries near the extremes include India at 0.255 and South Africa at 0.630 (\url{https://data.worldbank.org/indicator/SI.POV.GINI}).

%https://worldpopulationreview.com/country-rankings/gini-coefficient-by-country

Often, the incomes used to compute the Gini index derive from data collected by a statistical organization, such as a government agency or the World Bank.  The organization may be ethically or legally obligated to protect data subjects'  confidentiality. This requirement motivates the central question of this article: how can statistical organizations release values of the Gini index that provide provable guarantees of confidentiality protection?

To release other types of statistics, many organizations are turning to differential privacy to provide such guarantees \citep{dwork:mcsherry, dwork2006differential, dwork:roth}.  For example, the Census Bureau released differentially private counts from the 2020 decennial census \citep{topdown}.  The Internal Revenue Service and Department of Education use differentially private algorithms in the College Scorecard to protect statistics related to educational and financial outcomes for students \citep{tumultlabs}. The Opportunity Insights project at Harvard University adds noise to statistics related to social mobility using an algorithm inspired by differential privacy \citep{chetty}.  To the best of our knowledge, however, researchers have not developed a differentially private algorithm for releasing the Gini index.

In this article, we develop such an  
%a novel---and, to the best of our knowledge, the first---differentially private 
algorithm.
%for releasing the Gini index. 
To do so, we 
determine how much the Gini index can change when altering any one income in the data; this is known as the local sensitivity in the differential privacy literature \citep{dwork:roth}.  We  derive a smooth upper bound on the local sensitivity, which we use to develop the differentially private algorithm. Using theory and empirical investigations, we characterize conditions on the data and privacy demands under which the algorithm is likely to provide accurate results, and under which it is not.  We use  Bayesian inference to provide an interval estimate for the Gini index, accounting for the uncertainty introduced by the differentially private noise. 

The remainder of this article is organized as follows.  In Section \ref{sec: bg_and_def}, we review the Gini index, differential privacy, and smooth sensitivity \citep{nissim2007smooth}.  In Section \ref{sec:alg}, we present our algorithms.  In Section \ref{sec:analysis}, we provide theoretical support for these algorithms.  In Section \ref{sec:experiments}, we present empirical investigations of the properties of the differentially private Gini index.  Finally, in Section \ref{Sec: Discussion}, we conclude with ideas for future research.  

%Our contribution is to explore 
%the change in the Gini Index through rigorous proof of altering a person's income and complete the study of personal contribution to inequality.Based on these findings, we design an efficient algorithm to calculate and release the final value of the Gini index under differential privacy. 
%Our algorithm adds suitable noise to the Gini index, which ensures the near-optimal accuracy under certain sample complexity. We look at the extreme case, based on which we derive the smooth upper bound of the Gini. Based on these points, we show that if the ratio of $\frac{(U-L)}{\bar{x}}$ (inequality ratio) is small, the Gini index becomes less sensitive to changes in its inputs. Hence, we can compute the amount of noise injection to apply to the data as a function of the inequality ratio. Since noise injection is based on the output perturbation mechanism, the revision of noise injection that results in an out-of-range value is obvious under Bayesian noise revision.

\section{Background} \label{sec: bg_and_def}
%Now, we give a formal definition to our problem.
%Let $D$ be a distribution on $\mathbb{R}$.
For $i=1, \dots, n$, let $z_i$ be the income (or other numerical variable) for individual $i$. Let $Z = (z_1, \dots, z_n)$.  We presume $Z$ has no missing values. 
For convenience, we define an ordered version of $Z$ as $X=(x_1, \dots, x_n)$, so that $x_1$ is the smallest value in $Z$, $x_2$ is the second smallest value in $Z$, and so on until $x_n$ is the largest value in $Z$. We refer to the position of each $x_i$ as its rank in $X$, e.g., $x_i$ has rank $i$.
%\le\cdots\le x_n,\quad 
%   \{x_1,\ldots,x_n\}=\{z_{1},\ldots,z_{n}\}.
%Where $x_i$ denotes the $i^{th}$ order statistic. 
%Throughout, $Z$ refers to the original (unordered) data vector, while $X$ denotes its ordered version. 
%Both contain exactly the same values and differ only in arrangement. The ordered dataset remains bounded, $ x_i\in [L, U]$.

\subsection{Gini Index}
%\begin{definition}[The Gini index]\label{def:gini}
    %The Gini index is a measure of inequality. It is typically used as a measure of income inequality but can be applied to any distribution.
    %If $z_i$ is the wealth or income of person $i$, and there are $n$ persons, 
    Given a set of data $Z$, the Gini index $g(Z)$ can be computed as 
\begin{equation}
    g(Z) = \frac{\sum_{i=1}^{n}\sum_{j=1}^n|z_i-z_j|}{2n^2\bar{z}}.\label{eq:giniZ}
\end{equation}
%\end{definition}
There are various ways to rewrite \eqref{eq:giniZ}. We use a representation based on the ordered data $X$. As shown by \citet{rothschild1973some}, 
\begin{equation}
g(Z) = g(X) = \frac{\sum_{i=1}^n(2i-n-1)x_i}{n\bar{x} (n-1) }. \label{eq:giniX}
\end{equation}

Several authors have explored how shifting income from one person in $Z$ to another person in $Z$ affects $g(Z)$; see \citet{gastwirth2017gini} for a summary of this work. These authors do not consider the  local sensitivity of $g(Z)$, that is,  
%examines how perturbing a single observation 
%is examined by , which analyzes how perturbing individual incomes 
%affects $g(Z)$ i but neither provides a fully rigorous proof nor considers
the effects on $g(Z)$ of adding, deleting, or changing a single observation, which are useful quantities for differential privacy. 
%\cite{charpentier2022pareto} showed that the income distribution can be measured by the Generalized Pareto distribution.
%These findings encourage us to calculate the smooth sensitivity of the Gini index directly.

%When considering methods for xisting methods for protecting confidentiality of the incomes in $g(Z)$. In the literature on the Gini index, s
Several researchers have noted that top-coding, which is a commonly used disclosure protection method for income data, can result in underestimation of the Gini index \citep{armour2016using,epi, Sun03042025}. With top-coding, any income value above a selected threshold $T$ is blanked and replaced with the mean of the income values exceeding $T$, or with just $T$ itself. However,  
%because this approach does not fall within the differential privacy framework, 
it is difficult to quantify how much privacy top-coding leaks about the underlying incomes, especially when other data products are released that involve income distributions, possibly without the use of top-coding. 
%Moreover, it is unclear how to select $t$. CAN WE GET A CITATION FOR THIS CLAIM?

%Some studies address the leakage problem by top-coding earnings \cite{armour2016using}. However, because this approach does not fall within the differential privacy framework, it is difficult to quantify the privacy leakage. Moreover, selecting a reasonable threshold is challenging and may lead to an underestimation of the Gini index.

\subsection{Differential Privacy}

Differential privacy (DP) 
%(\cite{dwork2006differential})
is a mathematical definition of 
%has become a gold standard criterion for
privacy protection. 
%\begin{definition}[Differential Privacy]
Let $\mathcal{M}$  be a randomized algorithm that takes as input any database $D$ and returns an output $R$. Here, we consider databases that are vectors of $n$ real numbers like incomes.  Define neighboring databases $D$ and $D_1$ so that they differ on only one element.  To develop the DP Gini index, we work with two types of neighboring databases.  The first case is when $D$ and $D_1$ have $n$ elements in common but $D_1$ has one more or one fewer observation than $D$.  The second case is when $D_1$ is constructed by changing one element in $D$ to some other value, so that both $D$ and $D_1$ have $n$ elements. 

We say that $\mathcal{M}$ satisfies $(\varepsilon, \delta)$‑differential privacy if, for every pair of neighboring databases $(D, D_1)$ and for every measurable subset of outputs $R$, 
%\subseteq\mathcal{R}[0,1]$,
\begin{equation}
  \Pr\bigl(\mathcal{M}(D)\in R\bigr)\;\le\;e^{\varepsilon}\,
  \Pr\bigl(\mathcal{M}(D_1)\in R\bigr) + \delta.\label{DP}
\end{equation}
In this article, we work mainly with so-called exact or pure DP, in which $\delta=0$.

%\end{definition}
Heuristically, when $\mathcal{M}$ satisfies pure DP, for any dataset $D$, the probability that the algorithm produces any particular set of outcomes is similar regardless of whether or not any one individual's data are present in $D$. We emphasize that the definition applies over all possible instantiations of $D$, not just the observed data at hand. The degree of similarity is controlled by the parameter $\epsilon$, whereby smaller values of $\epsilon$ generally result in greater privacy protection.  However, smaller values of $\epsilon$ also generally result in randomized algorithms that introduce more noise into the outputs. Ideally, the value of $\epsilon$ is selected to achieve a satisfactory balance in accuracy and disclosure risks \citep{abowd2016economic, reiter2019differential}.  For a summary discussion of setting $\epsilon$, see \citet{kazan:reiter}.

%Specifically, it serves as a way to quantify the risk of privacy leakage associated with any individual's participation in the database, thereby helping to ensure a high level of privacy. 
%This measure has been applied to many areas, such as machine learning (\cite{abadi2016deep}; \cite{papernot2018scalable}), \cite{abowd2016economic};), social sciences, and healthcare (\cite{reiter2019differential}).

%When the Gini Index is released under \emph{differential privacy} (DP),
For many DP algorithms, a key quantity is the global  
%the scale of the noise that must be added to a statistic is governed by that 
{sensitivity} of the statistic. For all  neighboring databases $D$ and $D_1$ that differ in one element, which we write as $d(D,D_1)=1$, 
%\begin{definition}[Global Sensitivity]
and some function,
\(f: D\!\to\mathbb{R}\), we define the global sensitivity of the function  as 
\begin{equation}
      GS_{f}
      \;=\;
      \max_{D,D_1:\,d(D,D_1)=1}\;|f(D)-f(D_1)|.\label{eq:gs}
\end{equation}
%\end{definition}
For example, when $f$ is the sample mean of a set of $n$ elements each lying in $[0,1]$, and defining neighboring databases as having sample size $n$,
the 
%$D = \{X_1,...,X_n\} = \{0,0,...,0\}$ and $D' = D \cup\{1\}$, thus 
$GS_f = 1/n$. 

When $D=(z_1, \dots, z_n)$ comprises numerical values, many DP algorithms impose upper and lower bounds to ensure $GS_f< \infty$ \citep{Kamath10022025}. That is, they presume $L \leq z_i \leq U$ for all possible values of $z_i$, clipping any values outside this range to the boundary points. In some contexts, the bounds may be derived from domain knowledge.  As examples, test scores may lie between 0 and 100; health measurements like blood pressures may be bounded by plausible values; and, monetary values in surveys may be bounded by the format of the questionnaire, e.g., the American Community Survey provides enough boxes to enter five digits for annual property taxes. 
%Following normative inequality modeling, $L$ can be set to a recognized policy minimum—such as a social protection floor or poverty threshold, or simply fixed at $0$ for convenience. $U$ can be set to a publicly disclosed salary ceiling, such as a posted maximum from income data sources. 
To preserve the DP guarantee, these bounds should be selected without using the observed data at hand. When domain knowledge does not define $(L,U)$, one can allocate some (small) portion of the privacy budget to privately learn noisy bounds on the observed data. For example, one can obtain a differentially private estimate of the maximum of $D$ using the algorithm in \cite{durfee2023unbounded}.  We illustrate this approach in Section \ref{sec:alg} to set $U$ for incomes, setting $L=0$.
%In cases where the noisy bound  

The $GS_f$ can be used to define randomized algorithms that satisfy DP. A commonly used algorithm is the Laplace mechanism, in which one adds to $f(D)$ a random draw sampled from a Laplace distribution 
%Formally, we compute and release 
%\begin{equation}
%\tilde{f}(Z) = f(Z) + \eta, \label{laplace}
%\end{equation} where $\eta$ is a random draw from a Laplace distribution 
with mean zero and scale parameter $GS_f/\epsilon$, resulting in $\tilde{g}(D)$. Unfortunately, the Laplace mechanism cannot be used directly to construct a differentially private version of $g(Z)$ with low error for reasonable privacy guarantees.  This is because the global sensitivity of the Gini index $GS_{g(Z)}=1$.  To see this, consider the 
%In this context, the global sensitivity of the Gini index is 1. In the 
extreme case where a single individual holds all the income. Removing that individual or changing them to have zero income reduces $g(Z)$ from $1$ to $0$.  Unless $\epsilon$ is very large, adding Laplace noise scaled to this global sensitivity  obliterates the usefulness of $\tilde{g}(Z)$.
%As we showed before, \(GS_{G} = 1\) regardless of the sample size~\(n\). 

. %which indicates that the extreme case (changing a person's income resulting in the Gini index having the absolute change equal to \(\Delta g_{\mathrm{GS}}\)) rarely occurs in practice. 

\subsection{Smooth Sensitivity}\label{sec:smooth}

The global sensitivity of the Gini index derives from extreme cases that, for all practical purposes, are fictions. For example, the Gini index for most countries lies between $0.2$ and $0.7$  \citep{charles2022gini}.  This situation, i.e., the $GS_f$ is made impractically large to accommodate unrealistic instances of neighboring databases, is not uncommon in DP.  One approach to circumvent this issue is to work with a smooth bound on the local sensitivity.  We use this strategy to construct the DP Gini index, as described  in Section \ref{sec:alg}.

%\begin{definition}[Local Sensitivity]
For any function \(f : D\!\to\mathbb{R}\) and a fixed database $D$, the {local sensitivity} is defined as 
\begin{equation}
      {LS}_{f}(D)
      \;=\;
      \max_{D_1:\,d(D,D_1)=1}\;|f(D)-f(D_1)|. \label{eq:localsens}
\end{equation}
%\end{definition}
The local sensitivity measures how much $f(D)$ changes when removing, adding, or changing one observation in the specific $D$, leaving all other values fixed.  In contrast, the global sensitivity considers this change over all possible realizations of $D$. Thus, $GS_f$ is the supremum of $LS_f(D)$ over $D$.  Often, ${LS}_{f}(D)$ is much smaller than $GS_f$.   
%We note that local sensitivity was used in the privacy protection algorithms for the Opportunity Atlas.  

It is tempting to construct a Laplace mechanism using $LS_f(D)$ in place of $GS_f$ in the scale parameter. However, in DP one presumes that the parameters of the noise distribution are public.  Knowledge of $LS_f(D)$ could introduce disclosure risks, as adversaries may be able to use $LS_f(D)$ to learn about the values in $D$.  For example, if $LS_f(D)$ is large and $f$ is the sample mean, the adversary learns that $D$ includes an outlier and, further, may be able to use $LS_f(D)$ to approximate plausible values of that outlier. As this example suggests, using a Laplace mechanism based on $LS_f(D)$ does not satisfy DP.

%The $LS_f$ is not a useful target for some statistics without restricting the value to which a record may be changed (e.g., by bounding the replacement within the data range or by an explicit upper bound). For example, for the sample mean, $LS_f(X) = \Theta(GS_f) = \Theta(U/n)$. For Gini estimation, $LS_g(X) = \Theta(GS_G) = 1$. In both cases, the relaxation collapses instance optimality to worst-case optimality. For the Gini problem, the relaxation is too permissive because a neighbor $X'$ may replace a single entry by an unreasonably large value, e.g. $X'/X = \Omega(n\bar{x})$, where $\bar{x}$ is the sample mean. To rule out such pathological neighbors, we impose a support-restricted neighborhood. Let $supp(X)$ denote the set of distinct values appearing in $X$. We require $supp(X') \subseteq supp(X)$, i.e., a neighbor of $X$ may differ in one coordinate, but the replacement value must come from values already present in $X$. Under this restriction, differential private mechanisms are designed to perform well in $X$ and its support-restricted neighbors.  
%IS THIS USED??

%The local sensitivity is much smaller than the global sensitivity in most cases. But if we add a noise scale to the local sensitivity, there are potential risks. Thus, 
To avoid the risks from releasing $LS_f(D)$ yet still satisfy DP, \citet{nissim2007smooth} introduce smooth sensitivity, a framework that allows one to release $f(D)$ with instance-based additive noise, i.e., based on the  $D$ at hand.  Following their notation, we first define a smooth upper bound on $LS_f(D)$ in Definition \ref{def: smoothbound}. 

\begin{definition}[Smooth upper bound on $LS_f(D)$]  \label{def: smoothbound}
For $\beta>0$, a function $S(D)$ is a $\beta$-smooth upper bound on $LS_f(D)$ if it satisfies the following conditions.
\begin{enumerate}
\item For all $D$ of size $n$, $S(D) \geq LS_f(D)$.
\item For every pair of neighboring datasets $(D, D_1)$ such that $d(D, D_1) = 1$, $S(D) \leq e^{\beta} S(D_1)$.
\end{enumerate}
\end{definition}

Given a smooth upper bound on the local sensitivity, we follow \citet{nissim2007smooth} and define the smooth sensitivity in Definition \ref{def: smoothsens}.
\begin{definition}[$\beta$-smooth sensitivity]  \label{def: smoothsens}
For $\beta>0$ and any two databases $D$ and $D'$ of size $n$ differing in $d(D, D')$ elements, the $\beta$-smooth sensitivity of $f$ is
\begin{equation}
S_{f,\beta}(D) = \max_{D'}(LS_f(D') e^{-\beta d(D, D')}).
\end{equation}
\end{definition}

%\begin{definition}[Smooth upper bound on $LS_f$]  \label{def: sublocalsensitivity}
%Let $\{S_k(D)\}_{k=0}^n$ be a sequence of functions satisfying
%\begin{enumerate}
 %   \item \( LS_f(D) \le S_0(D) \) for all datasets \( D \),
 %   \item For every pair of neighboring datasets $(D, D_1)$ such that $d(D, D_1) = 1$,  \( S_k(D) \le S_{k+1}(D_1) \).
%\end{enumerate}
%The function $\tilde{S} : \bigcup_{n \in \mathbb{N}} \mathrm{supp}(\mathcal{D}^n) \to \mathbb{R}$ defined by %add and delete k elemnts should also be taken into consideration, n is not fixed
%\[
%\tilde{S}_{f, \beta}(D) \;=\; \max_{k= 1, %\dots, n}exp(-\beta k)U^{(k)}(D)
%\]
%is a $\beta$-smooth upper bound on the local sensitivity.
%\end{definition}

%\begin{definition} \label{def: Ak}

 \citet{nissim2007smooth} provide a general strategy to compute the smooth sensitivity. Let $D'$ be some other dataset of size $n$ such that $d(D, D') = k$.  Any such $D'$ has its own local sensitivity $LS_f(D')$.  For $k=0, \dots, n$, define $A^{(k)}(D)$ 
as the maximum of these local sensitivities over all $D'$, that is, 
%be the maximum value $f(D)$ can change when removing, adding, or changing $k$ values from the specific $D$; that is,
  %$\overline{LS}$ be any point-wise upper bound on the local sensitivity. The upper bound of sensitivity of the function $f$ and a fixed database \(X\in D^{n}\) at distance $k$ is:
  \begin{equation}
      A^{(k)}(D)\;=\; \max_{D' : d(D, D') \le k} {LS}_{f}(D').
\end{equation} 
    \citet{nissim2007smooth} show that the $\beta$-smooth sensitivity  from Definition \ref{def: smoothsens} is
    \begin{equation} \label{eq: smoothsensitivity}
   S_{f,\beta}(D) = \max_{k=0, 1, \dots, n} e^{-\beta k}A^{(k)}(D).
\end{equation}

  %  Let $\overline{LS}$ be any point-wise upper bound on the local sensitivity. The upper bound of sensitivity of the function $f$ and a fixed database \(X\in D^{n}\) at distance $k$ is:
  %  \[
  %  A^{(k)}(X)\;=\; \max_{y : d(X, Y) \le k} \overline{LS}_{f}(Y)
  %  \]
% In particular, the sequence \( \{ A^{(k)}(x) \}_{k=0}^n \) satisfies the conditions required of \( \{ U_k(x) \}_{k=0}^n \)\end{definition}
% In particular, the sequence \( \{ A^{(k)}(x) \}_{k=0}^n \) satisfies the conditions required of \( \{ U_k(x) \}_{k=0}^n \)
%The central task is to replace the global sensitivity of the Gini index with a much tighter quantity—the \emph{smooth upper bound} on the \emph{local sensitivity}.

%\subsection{Calibrating Noise to Smooth Bounds}

Smooth sensitivity enables a variety of differentially private algorithms.  %\citep{nissim2007smooth}.  
We consider 
%calibrate noise proportional to a smooth upper bound on smooth sensitivity. \citet{nissim2007smooth} show %
%\begin{theorem} \label{thm: calibratenoise}
%Let \(S\colon\mathcal{D}^{\,n}\to\mathbb{R}_{\ge 0}\) a \(\beta\)-smooth upper bound on its local sensitivity.
algorithms $\mathcal{M}(D)$ of the form, 
%For \(\varepsilon>0\) and \(\delta\in[0,1)\) define the mechanism \(M\)
\begin{equation}
  \tilde{f}(D) = f(D) +  \frac{S_{f, \beta}(D)}{\alpha}\psi,
%  \frac{S_{f, \epsilon}(D)}{\alpha}\,Z,
\end{equation}
where $\alpha$ is a constant and $\psi$ is a random draw from a noise distribution, both chosen to allow $(\epsilon, \delta)$-DP to hold. In particular, \citet{nissim2007smooth} show that one can achieve $(\epsilon, 0)$-DP by drawing $\psi$ from the density,  %\textcolor{red}{NOte to self: make sure we mention $\gamma=2$ in the simulation scetion}
%\(Z\sim h\) and \(h\) is \((\alpha,\beta)\)-admissible with slack \(\delta\). We can use the heavy-tailed family for exact \((\varepsilon,0)\)differential privacy and the Laplace family for approximate \((\varepsilon,\delta)\) differential privacy. In our experiment, we only use the following exact privacy mechanism; we can easily change it to other mechanisms.\\
%\textbf{Heavy‑tail family (exact privacy).}\;
 %       For any \(\gamma>1\) let
        \begin{align}\label{eq: mechanism}
            p(\psi) \propto \frac{1}{1+|\psi|^{\gamma}}
        \end{align}
      where $\gamma > 1$ and \( \alpha=\varepsilon/(4\gamma),\,\beta=\varepsilon/\gamma\).  We use \eqref{eq: mechanism} in the DP Gini index algorithm.
%We set $\gamma = 2$ in the simulation studies.   
%      $(\alpha = \varepsilon/(4\gamma)$.  
        
%\textcolor{red}{Wenjie: I am not sure where this comes from.} \textcolor{blue}{Re: on page 79 in Nissim et al. example 2.}
%        \(        \alpha=\varepsilon/(4\gamma),\,
%          \beta=\varepsilon/\gamma
%        \)
%        makes \(h_\gamma\) \((\alpha,\beta)\)-admissible with \(\delta=0\);
%        thus \(\mathcal{A}\) is \((\varepsilon,0)\)-differentially private. %In our algorithm we set $\gamma = 2$ without loss of generality: the induced noise magnitude is same for any $\gamma$. This let us specify the noise distribution explicitly for the accuracy derivation in Section \ref{sec: accuracy}. 
 %       \textcolor{red}{Why 2?} \textcolor{blue}{Set gamma to 2 is not essentia, and the noise magnitude does not depend on gamma. But setting it to 2, allows us to write the noise distribution explicitly when deriving accuracy in section 4.5, Deleted.}
%\end{theorem}

%\begin{definition}[Additive $\eta$-accuracy with confidence $1-\alpha$]
%Let the sample be $X=(x_1,\ldots,x_n)\sim D^n$.  A possibly randomized algorithm $\mathcal{M }$ that
%outputs $\hat U=\mathcal{M}(X)$ is \emph{$\eta$-accurate with confidence
%$1-\alpha$} if
%\[
%\Pr_{X\sim D^n,\ \mathcal{M}}\!\left[\, F_X(\hat U)
%\;\le\; \min_{U\in\mathcal{U}} F_X(U) \;+\; \eta \,\right] \;\ge\; 1-\alpha .
%\]
%\end{definition}

%One can use We can use the heavy-tailed family for exact \((\varepsilon,0)\)differential privacy and the Laplace family for approximate \

\citet{nissim2007smooth} also show that one can use a Laplace mechanism with smooth sensitivity and satisfy $(\epsilon, \delta)$-DP.  As noted previously, we focus on pure DP, although one could use $(\epsilon, \delta)$-DP as well.  We leave comparisons of the relative accuracy and disclosure risks for the different privacy  criteria to future work.

\section{Algorithms}
%In this paper, we focus on an efficient method that is compatible with Bayesian inference, called output perturbation. As the name suggests, after computing the non-private Gini index, we add noise to it. The main challenge is to set noise so that differential privacy holds for any sample, and high accuracy is achieved. 
\label{sec:alg}
In this section, we present and illustrate the algorithm for using smooth sensitivity to release the Gini index under pure DP.  Supporting derivations and proofs are presented in Section \ref{sec:analysis}.  Throughout, we work with the ordered version of the data $X$, presuming that we have bounds $L$ and $U$ so that $L \leq x_i \leq U$ for $i=1, \dots, n$.

%First, we give the Road Map of our algorithm, since the smooth upper bound for Gini is based on the calculation of the minimum and maximum Gini index of the data vector $X\in D^n$ with $k:k\in N,1 \leq k \leq n$ elements changing, as we showed in Section \ref{sec: ss_gini}. 

The DP Gini index algorithm is comprised of three main parts. Algorithm \ref{alg:fastming} computes  the minimum value of $g(X)$ that can result after changing $k$ elements in $X$.  Algorithm \ref{alg:fastmaxg} computes the maximum value of $g(X)$ that can result after changing $k$ elements. 
%\textcolor{red}{Is this really what the algorithms are doing?  Or are the finding maximum and minimum changes?} \textcolor{blue}{Yes, this is what algorithm is doing}
Combined, the results of these two algorithms can be used to compute $A^{(k)}(X)$ for any $k$.  Algorithm \ref{alg:ssgini} integrates all the algorithms to compute a smooth upper bound for use in smooth sensitivity. 

We illustrate how the algorithms work using a toy dataset with $n=4$ individuals, $X=(x_1, x_2, x_3, x_4) = (3, 6, 7, 7.5)$, where $L=0$ and $U=10$.  Here, $g(X) \approx 0.206$. Further, we suppose that possible values of the underlying data 
%in the target population 
are measured to the hundredths decimal place, so that each $x_i$ could have taken any of 1000 possible values.  

To find the smooth sensitivity, we need to find $A^{(k)}(X)$ for $k=1$ and $k=2$.  We note that when replacing $k=3$ or $k=4$ points, $A^{(k)}(X)=1$  %\textcolor{blue}{For k = 2, is 6*1000*1000 possible datasets, so the total computation is 6*1000*1000*4*1000} as we can construct datasets $X'$ in which the Gini index goes from one (exactly three of the four values are equal) to zero (all four values are equal).
%zero by making all four values the same or equals one by making exactly three of the four values the same. 
For $k=1$,  without using our algorithm, a brute force approach for finding $A^{(1)}(X)$ is to compute the local sensitivity over all possible neighboring datasets with one element changed. This requires up to $(4 \times 1000)^2$ computations, since we  construct each neighboring dataset by replacing one of the four values of $x_i$ with one of the 1000 values in $[0, 10]$, and then evaluate the local sensitivity of this dataset which requires computing the Gini index for each of its 4000 possible neighboring datasets. Following this approach for $k=2$, we would consider up to $(6 \times 1000^2)(4000)$ combinations to find $A^{(2)}(X)$, where six is the number of ways to pick two records out of four.  
%Without our algorithm, searching for smooth sensitivity is NP-hard with grid searching. Since income is continuous, we must discretize the search space (say into 100 grid points). Computing local sensitivity requires $4*100$ searches; changing $k=1$ value in $X$ results in $4\times100$ combinations, and changing $k=2$ value in $X$ results in $\mathcal{C}^2_4\times100\times100$ combinations. 
As these computations suggest, this brute force approach 
%is obvious that achieving higher accuracy  handling $X$ 
becomes intractable for any reasonable sample size $n$.

With our algorithms, \textcolor{black}{for $k=1, \dots, n,$ we find the maximum and minimum possible values of the Gini index that could be attained by changing any $k$ elements in $X$.  For each $k$, we use these values to obtain a bound on $LS_{g(X')}$ for all datasets $X'$ where $d(X, X')=k$. 
%is the difference in these maximum and minimum values. 
With these bounds, we can compute an expression that provides $A^{(k)}(X)$ for each $k$.}

To illustrate how our algorithms greatly reduce computations, we walk through how they find the maximum and minimum values of the Gini index in our example with $n=4$ values.  We first use Algorithm \ref{alg:fastming} to find the minimum value of \(g(X')\) when \(X'\)  changes one value in \(X\), i.e., \(k=1\). For $i=1, \dots, n$, let $X_{-i} = X - x_i$ and $X_{i}'(x) = X_{-i} \cup x$ where $L \leq x \leq U$.  In Section~\ref{sec:analysis}, we prove that this minimum value for $k=1$ is achieved  when $X_{-i}$ includes only consecutive values, i.e., $(x_{1}, \dots, x_{n-1})$ or $(x_{2}, \dots, x_{n})$. Further, we show that, when $k=1$, the Gini index is minimized when setting  $x$
%this minimum value is achieved when the replaced value is set 
equal to one of the points in $X_{-i}$. Thus, we can find the minimum Gini index with an efficient search.
%\textcolor{red}{Can we say which one?  We said the midpoint previously.  Is that still true?}  
%, for any \(k \in \{1,\dots,n-1\}\), \textcolor{red}{Is the proof only for k=1?} the $(n-k)$-element subset $Q \subseteq X$ with
%the minimal Gini index 
%(Definition \ref{def:kmvsmin}), 
%is realized when it $Q$ includes consecutive values, \textcolor{red}{e.g., $(x_{2}, \dots, x_{n})$ or $(x_{1}, \dots, x_{n-1})$.} \textcolor{blue}{e.g., $(x_{i}, \dots, x_{n-k+i-1})$, where $1\leq i\leq k+1$.} 
%Thus, to find the global minimum it suffices to check the $k+1$ consecutive windows of the length $n-k$. 
For our example with $n=4$, the algorithm examines all values of $x \in X_{-i}$ for 
\(X_1'(7) =(x, 6, 7, 7.5)\) and $X_4'(x) = (3, 6, 7, x)$.  After the search, we find that $X_1'(7)$ results in the smallest \(g(X_1'(7))\approx 0.054\).  The smallest values associated with the other three $X_{-i}$ include $g(X_4'(6))\approx 0.182$,  $g(X_2'(7))\approx 0.184$, and $g(X_3'(6))= 0.200$, all of which are larger than $g(X_1'(7))$.

The algorithm next finds the minimum Gini index value when replacing  \(k=2\) elements in $X$.   Let $X_{-ij} = X - \{x_i, x_j\}$ and  $X_{ij}'(x,y) = X_{-ij} \cup \{x, y\}$ where $L \leq x,y \leq U$.  As shown in Section \ref{sec:analysis}, to minimize the Gini index we should replace $k$ consecutive elements with $(x,y)$ equal to the largest value in $X_{-ij}$.  Intuitively, in the expression for the Gini index, setting $(x,y)$ to the largest value in $X_{-ij}$ makes the pairwise differences in the numerator as small as possible while making the sample mean in the denominator as large as possible (for that $X_{-ij}$). Algorithm \ref{alg:fastming} searches over all possible $\{X_{-ij}(x,y): i<n, j = i+1\}$ for the set that yields the smallest value of $g(X_{-ij}'(x,y))$.  In our example with $n=4$ observations,  
%perform a window-length-2 search, which costs \(3\times 2\) searches. When \(k=3\), the minimum satisfies \(g(X')=0\), since we can modify values to make all three changed entries equal to the remaining one.}
%Specializing to the toy example: for \(k=1\), the minimum is attained by setting  which costs \(2\times 3\) searches. 
%\(X'=(6,7,7,7.5)\), with \(g(X')\approx 0.0545\); 
for \(k=2\), the minimum Gini index value is attained for \(X_{-12}'(7.5,7.5)=(7,7.5,7.5,7.5)\), which yields \(g(X_{12}'(7.5, 7.5))\approx 0.017\). The smallest values of the Gini index for other sequences of consecutive $i,j$ include $g(X_{23}'(7.5, 7.5)) \approx 0.176$ and $g(X_{34}'(6,6)) \approx 0.143$.
%For \(k=3\), we can make all four values equal by setting the three modified entries to the (single) unmodified value.

\textcolor{black}{
%After finding the minimum value of the Gini index for each $k$, 
Algorithm \ref{alg:fastmaxg} next finds the maximum value of the Gini index for each $k$.  
%As with the minimum Gini index, 
%shown in Section \ref{sec:analysis}, 
For any $k$, the search for the maximum values is accomplished by replacing $k$ consecutive $x_i$ with values at one or the other extreme, i.e., with $L$ or $U$. 
%complementary maximization problem is solved by replacing any $k$ consecutive }
%\textcolor{red}{We need to give more detail here.  Write it referring to results in Section 4 and using examples like I did in the paragraphs above.}
%In our algorithms, we also obtain the largest possible value of \(g(X')\).  
%\textcolor{red}{EDIT} by pushing mass to the extremes. 
In our example with \([L,U]=[0,10]\), when \(k=1\) we search for the maximum value of the Gini index over $X_{-i}'(0)$ and $X_{-i}'(10)$ for $i=1, \dots, 4.$  The search results in selecting \(X_{-3}'(0) = (3, 6, 0, 7.5)\), with \(g(X_{-3}'(0))\approx 0.515\). For \(k=2\), the search for the maximum Gini index selects $X_{-23}'(0,0) =(3, 0, 0, 7.5)$, with \(g(X_{-23}'(0,0))\approx 0.809\). For \(k=3\), the maximum value of the Gini index is attained by setting any three values to zero, which attains the maximum Gini index of 1.}

%Algorithm \ref{alg:ssgini} turns these results into bounds on the smooth sensitivity. 
%we obtain \(A^{(k)}(X)\) for each \(k\) (in our case, there are only two loops). We then compute the smooth sensitivity using Equation~\ref{eq: smoothsensitivity}.
%\textcolor{red}{Wenjie, can you complete this example?  Then say something about how we get to the smooth sensitivity from the results.} \textcolor{blue}{Re: Completed}

%With $n=4$, one only needs to perform searches 
%over possible $x'$
%when $k=1$ and $k=2$. 
 We note that Algorithm \ref{alg:fastming} and Algorithm \ref{alg:fastmaxg} follow similar logic for larger $n$ and $k$, each searching over sets with consecutive values.
 %e.g., $(x_{i}, \dots, x_{n-k+i-1})$, where $1 \leq i \leq k+1$.  
 Finally, Algorithm \ref{alg:ssgini} takes the resulting maxima and minima as inputs to an equation that provides the bounds for smooth sensitivity.
%\textcolor{red}{Do we need a grid search for the replacement values here?  Or can we set all of them equal to $x_{n-k+i-1}$ always?  Can you add an intuitive explanation of what to do with larger k and n?}
%Thus, to find the global minimum it suffices to check the $k+1$ consecutive windows of the length $n-k$. 

%As a final part of our DP algorithm, w
Although not essential, given the released DP Gini index $\tilde{g} = \tilde{g}(X)$ and the public DP mechanism used to construct it, analysts can use a post-processing step to obtain an interval estimate for $g(X)$ that accounts for the noise introduced by DP.  Let $g$ represent the analyst's random variable for the value of $g(X)$. We assume the analyst has some prior distribution $p(g)$ for $g(X)$.  For example, in  the empirical illustrations in Section \ref{sec:experiments}, we presume $g \sim \mathrm{Uniform}(0,1)$. Let $p(\tilde{g}|g)$ be the density of the DP Gini index conditional on confidential Gini index as determined from the specific DP mechanism, e.g., as in \eqref{eq: mechanism}.  We compute the posterior density,
% cite refer dist. experiment algorithm (what use sample)
\begin{align}
    p\!\left(g \mid \tilde{g} \right)&\propto p(\tilde{g}|g)\times p(g). \label{rem: bayesdp}
\end{align}
In Section \ref{sec:experiments}, we sample from \eqref{rem: bayesdp} using self-normalized importance resampling with a \(\mathrm{Uniform}(0,1)\) proposal and weights proportional to \(p(\tilde{g}\mid g)\). Other strategies could be used as well, for example, a simple grid sampler.

In addition to characterizing uncertainty about $g(X)$, the Bayesian post-processing step in \eqref{rem: bayesdp} restricts the support of inferences to the feasible region of the Gini index, $[0,1]$. Thus, if $\tilde{g}(X)$ is randomly drawn to lie outside $[0,1]$, as is theoretically possible under DP mechanisms with unbounded support, the Bayesian post-processing step 
%A naive release of the DP Gini index may fall outside $[0,1]$. To 
ensures that inferences about $g(X)$ are based on feasible values. We note that this post-processing step does not use any additional privacy budget.
%we use a Bayesian Method to revise the outside values which we call posterior release.
%\end{remark}
  
% We denote the unrevised differentially-private Gini Index as $g_{dp}$, and the 

Finally, when the sample size \(n\) is large and the privacy budget \(\epsilon\) is small, the algorithms can be computationally expensive.  This may not be  problematic when releasing a single Gini index.  Nonetheless, in Section \ref{sec:analysis}, we show how to relax the smooth upper bound in Algorithm \ref{alg:ssgini} 
%and introduce a time-critical alternative that 
to reduce runtime while maintaining the DP guarantees.

%Following normative inequality modeling, $L$ can be set to a recognized policy minimum—such as a social protection floor or poverty threshold, or simply fixed at $0$ for convenience. $U$ can be set to a publicly disclosed salary ceiling, such as a posted maximum from income data sources. In the absence of a publicly available $U$, we allocate a small portion of the privacy budget to privately learn the upper bound via differential private estimation of the maximum, using the unbounded DP maximum method of \cite{durfee2023unbounded}.

% ----------  MinG  -------------------------------------------------
\begin{algorithm}[H]\label{alg:fastming}\normalsize
\caption{\textsf{FastMinG}$(X,k)$: Find the minimum value of the Gini index when replacing $k$ elements in $X$} %— $\mathcal{O}(klog(n-k))$}
\KwIn{Sorted incomes $x_1\!\le\dots\le\!x_n$; number of elements to replace $k$ ($1\le k<n$)}
\KwOut{$\min g_{(k)}$: minimum value of Gini index when replacing $k$ elements %\textcolor{red}{but as written the algorithm returns an index rather than a Gini coef.} \textcolor{blue}{The findjstar function is}
}

\textbf{/* 1. prefix tables */}\\
$P_0\!\gets0,\;C_0\!\gets0,\;W_0\!\gets0$\;
\For{$i\gets1$ \KwTo $n$}{
    $w_i\gets2i-n-1$\;
    $P_i\gets P_{i-1}+x_i$;\quad  $W_i\gets W_{i-1}+w_i$;\quad   \tcp*{$P_i=\sum_{t\le i}x_t$}
    $R_i\gets R_{i-1}+i\,x_i$ \tcp*{$R_i=\sum_{t\le i}t\,x_t$}
} 
\quad $g_{\min}\gets\infty$\; 

\textbf{/* 2. slide window */}\\
\For{$i\gets0$ \KwTo $k$ }{                            
    $L \gets i+1$; \quad $R \gets n-k+i$ \tcp*{$i$ lows, $k-i$ highs}   
    $j = \text{Findminindex}(L, R)$\;
    $S_A\gets P_{j-1}-P_{L-1}$; $S_{tA}\gets (R_{j-1}-R_{L-1})-(L-1)\,S_A$ \tcp*{$S_A = \sum_{t=L}^{j-1} x_t$; $S_{tA}= \sum_{t=L}^{j-1}(t-L+1)x_t$}
    $S_B\gets P_R-P_{j-1}$; \quad $S_{tB}\gets (R_R-R_{j-1})-(j-1)\,S_B$ \tcp*{$S_B = \sum_{t=j}^{R} x_t$; $S_{tB}= \sum_{t=j}^{R}(t-j+1+a+k)\,x_t$}
    $N = [2S_{tA}-(n+1)S_A]+kx_j(2(j-L)+k-n) +[2S_{tB}+2(j-L+k)-n-1)S_N]$\;
    $D = S_A +kx_j +S_B$\;
    $g_{min} \gets \min(g_{min}, N/(nD)$
}
\Return $g_{min}$

\SetKwFunction{Find}{Findminindex}          
\SetKwProg{Fn}{function}{:}{}
\Fn{\Find{$lo,hi$} }{                
  \While{$hi-lo>3$}{$t\gets\lfloor(hi-lo)/3\rfloor$   \tcp*{Ternary search on $[lo,hi]$} $m_1\gets lo+t$; $m_2\gets hi-t$\; 
  {\textbf{If }$g(m_1)<g(m_2)$ \ \textbf{then} \ $hi\gets m_2-1$}{ \textbf{else }$lo\gets m_1+1$}}
  \Return $\arg\min_{j\in[lo,hi]} g(j)$
}
\end{algorithm}

% ----------  MaxG  -------------------------------------------------
\begin{algorithm}[H]\label{alg:fastmaxg}\normalsize
\caption{\textsf{FastMaxG}$(x,k,L,U)$: Find the maximum value of the Gini index when replacing $k$ elements in $X$} 
%— $\mathcal{O}((n-k)k)$}
\KwIn{Sorted incomes $x_1\!\le\dots\le\!x_n$; bounds $L,U$; number of elements to replace $k$ ($1 \leq k < n)$}
\KwOut{$\max g_{(k)}$: maximum value of Gini index when replacing $k$ elements}

\textbf{/* 1. prefix tables */}\\
$P_0\!\gets0,\;C_0\!\gets0,\;W_0\!\gets0$\ $g_{\max}\leftarrow-\infty$\;
\For{$i\gets1$ \KwTo $n$}{
    $w_i\gets2i-n-1$\;
    $P_i\gets P_{i-1}+x_i$;\quad   \tcp*{$P_i=\sum_{t\le i}x_t$}
    $C_i\gets C_{i-1}+w_i\,x_i$ \tcp*{$C_i=\sum_{t\le i}w_t x_t$}
} 

\textbf{/* 2. slide window */}\\
\For{$s\gets 1$ \KwTo $n-k$}{               
    $S_L\leftarrow P_s$;\quad
    $S_R\leftarrow P_n-P_{s+k}$ \tcp*{$S_L = \sum_{t=1}^{s} x_t$; $S_R = \sum_{t=s+k+1}^{n} x_t$}
    \For{$i\gets 0$ \KwTo $k$ \tcp*{$i$ lows, $k\!-\ i$ highs}}{             
        $n_L\!\leftarrow\!i$;\quad
        $n_U\!\leftarrow\!k-i$\;
        $D\!\leftarrow\!S_L+S_R+n_L L+n_U U$\;
        $N\!\leftarrow\!
           L\,n_L\,(n_L-n)\;+\;
           C_s + 2n_L S_L\;+\;
           (C_n-C_{s+k}) - 2n_U S_R\;+\;
           U\,n_U\,(n-n_U)$\;
        $g_{\max}\leftarrow\max\!\bigl(g_{\max},\,N/(n\text{D})\bigr)$\;
    }
}
\Return $g_{\max}$\;
\end{algorithm}

\begin{algorithm}[H]\label{alg:ssgini}
\normalsize     
\caption{Smooth sensitivity of the Gini index under $(\varepsilon,0)$–DP}%\textcolor{red}{Can you check the notation especially $su_o$}}\textcolor{blue}{checked, I changed it to $ls_0$ for better understanding, since it is the smooth upper bound for the local sensitivity for the oringinal dataset}
\KwIn{Sorted incomes $x_1\!\le\dots\le\!x_n$ referred to as $X$; bounds $L,U$; number of elements to replace $k$ ($1 \leq k < n)$; privacy parameters $\varepsilon,\gamma$}
\KwOut{$S_u$ — smooth sensitivity}

%$x \leftarrow \operatorname{sort}(X)$\;
$\beta \leftarrow \varepsilon / (2(\gamma+1))$\;
$g_0 \leftarrow \textsf{Gini}(X)$\tcp*{calculate Gini index $g(X)$}
$\mu_0 \leftarrow \textsf{Mean}(X)$\tcp*{calculate $\sum_{i=1}^n x_i/n$}
$ls_0 \leftarrow \textsf{LocalSensBounds}(X,L,U)$\tcp*{as shown in Equation \ref{eq: tightbound}}
$k_{\max} \leftarrow \min\!\bigl(n,\lceil-\ln(ls_0)/\beta\rceil\bigr)$\;
$S_u \leftarrow 0$\;

% If bound = "tight"
\For{$k \gets 0$ \KwTo $k_{\max}$ \tcp*{For each iterate, change k elements}}{
    $\min g_{(k)} \leftarrow \textsf{FastMinG}(X,k)$\ \tcp*{calculate minimal Gini }
    $\max g_{(k)} \leftarrow \textsf{FastMaxG}(X,k,L,U)$\ \tcp*{calculate maximal Gini}
    $\min \bar{x}_{(k)} \leftarrow \textsf{MinAve}(X,k,L)$ \tcp*{calculate minimal average}
    $\max \bar{x}_{(k)} \leftarrow \textsf{MaxAve}(X,k,U)$\ \tcp*{calculate maximal average}
    $ls_k \leftarrow \textsf{LocalSensBounds}\bigl(\min g_{(k)}, \max g_{(k)},
               \min \bar{x}_{(k)},\max \bar{x}_{(k)},L,U\bigr)$\; \tcp*{as shown in Equation \ref{eq: tightbound}}
    $S_u \leftarrow \max\!\bigl(S_u, e^{-\beta k}\,ls_k\bigr)$\;
}

% else if bound = "relax" \textcolor{red}{Maybe we should move the relaxed part to section 4 to make the algorithm easier to follow?} \textcolor{blue}{fixed}

% \For{$k \gets 0$ \KwTo $k_{\max}$}{
%     $ls_k \leftarrow \textsf{LocalSensBoundsRelax}\bigl(X,L,U\bigr)$\; \tcp*{as showed in Equation \ref{eq: relaxingbound}}
%     $S_u \leftarrow \max\!\bigl(S_u, e^{-\beta k}\,ls_k\bigr)$\;
% }
\Return $S_u$\;
\end{algorithm}

%\begin{remark} [Bayesian post‑processing of a differentially‑private Gini index] 

\section{Analysis}\label{sec:analysis}

%\textcolor{red}{I think we need a roadmap that says what each subsection shows.  The sections should flow in a logical order, and we should say why we need each one.  Also, I think we should move all of the material on relaxtions into one subsection after all the rest.  Having the relaxation in the middle of the rest of the development could confuse readers.}

%\textcolor{red}{I also am getting confused about the logic flow. Section 4.1 says it is changing 1 observation, but then towards the end it talks about changing k observations.  Then section 4.2 is about smooth sensitivity, which should be the final result.  But then section 4.3 gives a short overview of results for the change in Gini when removing 1 element, which seems like it should come before the final result of section 4.2. And section 4.4 is not clear on how it uses the result from section 4.3.  I think we need to do some work connecting the sections better} \textcolor{blue}{Re: I see. I merge the original section 4.3 with 4.4 as we only need the conclusion from 4.3 to prove 4.4}

%\textcolor{red}{Roadmap:}
In this section, we provide theoretical support for the algorithms described in Section \ref{sec:alg}. 
%Computing the smooth sensitivity of the Gini index for any $X$ %per Definition \ref{def: smoothsens} 
%requires taking the maximum local sensitivity over all datasets that differ from $X$ in $k$ elements for $k=0, \dots, n$. Doing this by brute force is an NP-hard problem.  
%To circumvent this issue, i
In Section \ref{sec:calsmoothupperbound}, we derive smooth upper bounds on $LS_f(X)$ when changing one of its elements and on the local sensitivity of datasets differing from $X$ in $k > 1$ elements.  
%in Section \ref{sec:calsmoothupperbound} we obtain an upper bound on the local sensitivity for all datasets that differ from $X$ by up to $k$ elements. 
These bounds rely on the minimum and maximum values of the Gini index attainable by changing up to $k$ elements in $X$, which we compute using the methods described in Section \ref{sec: minmaxkG}. In Section \ref{sec: ss_gini}, 
%Combining the results of Section \ref{sec:calsmoothupperbound} and Section  \ref{sec: minmaxkG}, 
we formally establish the smooth sensitivity of our algorithm.
%of the Gini index for $X$. 
In  Section \ref{sec:speedups}, we describe two strategies to speed up computations.
%that can be useful when $n$ is large. %Finally, in Section \ref{sec: accuracy} we describe some accuracy guarantees.
%near the optimum. under an inequality assumption.

\subsection{Smooth Upper Bound on Local Sensitivity}\label{sec:calsmoothupperbound}

%\textcolor{red}{Wenjie: I think there are some places where we use $k$ to represent ranks when we should be using $m$ to represent ranks.  Can you go through this thoroughly and make sure all instances of $k$ are related to dropping $k$ records and not ranks?}

%In this section, wefind the smooth upper bound of the local sensitivity of a dataset $X_k: d(X,X_k) = k (1\leq k\leq n)$. 
%First, 
%we want to show the smooth upper bound of the local sensitivity in $X$. In other words, we need to find the upper bound of the greatest change in the Gini index when two datasets are diverse in one element: $d(X,X') =1$. 

%As showed in the Def \ref{def:gini}, the median contributes weights $0$. Deleting or adding an observation can be considered as changing an observation to the median or changing the median to an arbitrary value in $[L, U]$. 

Section \ref{subsec: suuinc} and Section \ref{subsec: suudec} respectively describe bounds on how much the Gini index can change when one element in $X$  increases or decreases. Section \ref{sec:smoothUB} presents bounds on the local sensitivity for any dataset differing from $X$ in $k > 1$ elements. 
%We then 
%in an individual's income, which gives the smooth upper bound of local sensitivity of any datasets. 
%With these results, we further get the smooth upper bound of local sensitivity of the dataset $X$ with up to $k$ elements changed. \textcolor{blue}{fixed}

\subsubsection{Change in Gini index when increasing one element in \texorpdfstring{$X$}{X}} \label{subsec: suuinc}

%\textcolor{red}{Wenjie: we have been using $i$ as the index.  Any reason we need to switch to $m$ now?  Or can we still use $i$ to avoid additional notation?}

Suppose some \(x_m \in \  X\) is increased by a positive amount \(a>0\). Let $X_m' = X_{-m} \cup (x_m+a)$, where $X_{-m} = X - x_m$.  After the perturbation, the rank of $x_m+a$ among the values in $X_m'$  increases when $x_m+a \geq x_{m+1}$  and stays the same when $x_m + a < x_{m+1}$. Let the  new rank of $(x_m+a)$ in $X_{m}'$ be $m'$, where $m \leq m' \leq n$. To simplify notation, we use $X'$ in lieu of $X_{m}'$ and $g_0$ to denote $g(X)$. 
%and $\Delta g = g(X')-g(X)$. 

Using the definition of the Gini index in \eqref{eq:giniX}, adding $a$ to $x_m$ increases the denominator  from $(n-1)(n\bar{x})$ to $(n-1)(n\bar{x}+a)$.  The numerator  changes from 
%to $x_m+a$ and its rank becomes $m'$ (with the block $\{x_i: m<i\leq m'\}$ shifting left one position).
%By Equation \ref{eq:giniX}, the Gini denominator $(n-1)(n\bar{x})$ changes to $(n-1)(n\bar{x}+a)$, and
%the Gini numerato
$\sum_{i=1}^{n}(2i-n-1)x_i$ to 
\begin{equation} 
\sum_{i < m}(2i-n-1)x_i+  (2m'-n-1)(x_m+a)  + \sum_{m < i \leq m'}(2(i-1)-n-1)x_i  + \sum_{i > m'} (2i - n - 1)x_i. 
\end{equation}
%\begin{equation} 
%\nonumber
%=\sum_{i=1}^{n}(2i-n-1)x_i+(2m'-n-1)(x_m+a)-\sum_{i=m}^{m'}(2i-n-1)x_i+\sum_{i=m}^{m'-1}(2i-n-2)x_i.
%\end{eqnarray}
%\textcolor{red}{I am confused by this term. How do we fo the last sum if m'=m?  Isn't the way that I write it above easier?
%$\sum_{i=1}^{n}(2i-n-1)x_i+(2m'-n-1)(x_m+a)-\sum_{i=m}^{m'}(2i-n-1)x_i+\sum_{i=m}^{m'-1}(2i-n-2)x_i.$}
Thus, for any $X'$, 
%we show the result of the denominator and numerator of the Gini index changes to the following after the perturbation. The 
$\Delta g = g(X')-g_0$ depends on three quantities: the increment amount \( a \), the original rank \( m \), and the resulting rank \( m' \).  To emphasize this dependence, we write $\Delta g$ as $\Delta g(m, m', a)$.  Using the algebraic manipulations presented in the supplementary material, 
%Note that \( m' \), which is a function of both \( a \) and \( m \), can be defined as a piecewise function with respect to the variable \( a \). 
%As shown in the appendix, w
%to $x_m+a$ and its rank becomes $m'$ (with the block $\{x_i: m<i\leq m'\}$ shifting left one position). By Equation \ref{eq:giniX}, the Gini denominator $(n-1)(n\bar{x})$ changes to $(n-1)(n\bar{x}+a)$, and the Gini numerator $\sum_{i=1}^{n}(2i-n-1)x_i$ changes to $\sum_{i=1}^{n}(2i-n-1)x_i+(2m'-n-1)(x_m+a)-\sum_{i=m}^{m'}(2i-n-1)x_i+\sum_{i=m}^{m'-1}(2i-n-2)x_i$.} 
%Plug these into Gini formula (full derivation in 
%Appendix~\ref{appendix: proofs}), 
we can show that 
\begin{equation}
\Delta g(m, m', a) = \frac{1}{n\bar{x}+a}\left(a\left(\frac{2m'}{n-1}-w-g_0\right)+\frac{2}{n-1}\sum_{i=m}^{m'}(x_m-x_i)\right)\label{eq:deltag}
\end{equation}
where $w=(n+1)/(n-1)$. Note that \eqref{eq:deltag} can be positive or negative (or in rare cases zero).

%\textcolor{red}{If $m'=m$,  how do we compute the second sum in (12)? Is there another way to write that sum?  Maybe use $i: m \leq i \leq m'$?   Let's include the intermediate expressions to get $\Delta g$, since they are not too complicated and will help readers understand.  Can you add them, Wenjie?}
%\textcolor{blue}{I include i=m (it doesn't change the value since $x_m - x_m = 0.$)}

We now consider \eqref{eq:deltag} as a function of $a$,  fixing  $m$ and $m'$.  Suppose that $a$ can take any rational value in  $[x_{m'}-x_m,x_{m'+1}-x_m)$.   When $a$ is in this interval, $\Delta g(m, m', a)$ is 
%continuous in $a$ and
differentiable with respect to $a$.  We have    
%$m$ and $m'$ is fixed and 
%where $a\in[x_ {m'}-x_m,x_{m'+1}-x_m)$. We take \(m\) and \(m'\) as constants and differentiate with respect to \(a\) only:
\begin{align}
    \frac{\partial\Delta g(m,m',a)}{\partial a} 
    &= \frac{(\frac{2m'}{n-1}-w-g_0)(n\bar{x}+a)-a(\frac{2m'}{n-1}-w-g_0)-\frac{2}{n-1}\sum_{i=m}^{m'}(x_m-x_i)}{(n\bar{x}+a)^2}\notag \\
    &=\frac{1}{(n\bar{x}+a)^2}\underbrace{\left((\frac{2m'}{n-1}-w-g_0)n\bar{x}+\frac{2}{n-1}\sum_{i=m}^{m'}(x_i-x_m)\right)}_{A}. \label{eq: partialg1}
\end{align}

When the ranks \(m\) and \(m'\) are held fixed, the sign of \eqref{eq: partialg1} is the same for all values of \(a\) in the open interval. Thus, within any interval $[m, m')$, we can expect $\Delta g(m,m',a)$ to be always increasing or always decreasing.  This suggests that for any fixed $m$ and $m'$, the largest absolute value of $\Delta g(m,m',a)$ is likely to occur at the maximum or minimum possible values of $a$ that change the rank from $m$ to $m'$, i.e., when $a = x_{m'}-x_m$ or $a \approx x_{m'+1}-x_m$. 
%We formalize this intuition  in Lemma \ref{lem: gchange_fixm}.} 
Further, the term that determines the sign in  \eqref{eq: partialg1}, which we refer to as $A$, increases monotonically in $m'$ when $m$ is fixed.
%and $a = x_{m'} - x_{m}$.  
To see this, note that $(2m'/(n-1))n\bar{x}$ is linear in $m'$ with positive slope, and  $\sum_{i=m}^{m'}(x_i-x_m)$ is nondecreasing in $m'$ because each additional term $x_i-x_m \geq 0$ when $m\leq i \leq m'$.  %\textcolor{blue}{The sign of Equation \ref{eq: partialg1} depends only on A, which does not involve $a$. Hence, within any interval where $m$ and $m'$ fixed and only $a$ varies, the monotonicity of Equation \ref{eq:deltag} depends solely by the rank $m$ and $m'$ and does not change within that interval. Moreover, both components in A are nondecreasing in $m'$: The first item $\frac{2m'}{n-1}-1-g_0)n\bar{x}$ in A is linear in $m'$ with positive slope $\frac{2n\bar{x}}{n-1}$; and the second item $\frac{2}{n-1}\sum_{i=m}^{m'}(x_i-x_m)$ is monotone increasing with $m'$ because $X$ is in ascending order(each added term $x_i-x_m \geq 0$, where $(m\leq i \leq m')$.} 
%\label{eq: deltagmonon}
%and $a \in [x_{m'}-x_m,x_{m'+1}-x_m)$). 
%\textcolor{red}{There is no $a$ in A, so this is not obvious.} 
%To simplify notation, 
%for the remainder of the paper, 
%we write $\frac{\partial \Delta g(m,m')}{\partial a}$ to indicate the derivative concerning \(a\), where \(a\) lies in the interval \([x_{m}, x_{m'}]\).

We use these features of \eqref{eq: partialg1} to help determine 
%Our next goal is to determine,
 the largest possible {absolute value} of \(\Delta g(m, m', a)\) for any fixed rank \(m\). To do so, we determine the values of $m'$ that result in the largest and smallest \(\Delta g(m, m', a)\), where the smallest is of interest because \(\Delta g(m, m', a)\) could be negative.
 The result is contained 
 %for fixed $m$ where we add $a$ values of \(\Delta g(m, m', a)\) 
 in Lemma \ref{lem: gchange_fixm}.
\begin{lemma}
%[Extreme of $\Delta g$ for fixed original rank $m$]
\label{lem: gchange_fixm} 
    Fix $m\in\{1,\dots,n\}$ and let $m'$ be any integer such that $m\le m'\le n$. The following two facts hold.
\begin{enumerate}
    \item When $\tfrac{2m}{n-1} - w - g_0<0$, $\Delta g(m, m', a)$ attains a unique minimum at the integer, 
        \begin{equation}
          m'=s_{m}=
          \min\Bigl\{m'<n : 
          \bigl(\frac{2m'}{n-1}-w-g_0\bigr)n\bar x
          +\frac{2}{n-1}\sum_{i=m}^{m'}(x_i-x_m)\ge 0
          \Bigr\}.
        \end{equation}
        When $\frac{2m'}{n-1} - w - g_0 \geq 0$, $\Delta g(m, m', a)$ %is monotonically increasing in $m'$ and 
        attains its minimum at $m' = m$.
 \item $\Delta g(m,m',a)$ attains its maximum at $m'=n$. 
\end{enumerate}
\end{lemma}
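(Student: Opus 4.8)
The plan is to turn the two-parameter object $\Delta g(m,m',a)$ into a one-variable problem by letting the new rank be whatever the increment $a$ dictates. Fix $m$. As $a$ runs from $0$ to its largest admissible value $U-x_m$ (the perturbed value cannot exceed $U$), the new rank $m'(a)$ is a nondecreasing step function of $a$ that increases by one each time $x_m+a$ passes an order statistic $x_{m+1},x_{m+2},\dots$. Because $g(\cdot)$ is continuous in the data vector through \eqref{eq:giniX} and is unchanged by permuting equal entries, the map $a\mapsto\Delta g(m,m'(a),a)$ is continuous and piecewise differentiable on $[0,U-x_m]$, and on the subinterval where the rank equals a fixed $m'$ its derivative equals $A/(n\bar{x}+a)^2$ with $A=A(m',m):=\bigl(\tfrac{2m'}{n-1}-w-g_0\bigr)n\bar{x}+\tfrac{2}{n-1}\sum_{i=m}^{m'}(x_i-x_m)$, as in \eqref{eq: partialg1}. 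I would also invoke the fact, proved immediately before the lemma, that for fixed $m$ the quantity $A(m',m)$ is strictly increasing in $m'$.

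The core of the argument is a monotonicity-to-unimodality step. Since $A(m',m)$ is strictly increasing in $m'$ and $m'(a)$ is nondecreasing in $a$, the sign of $\partial\Delta g/\partial a$ is a nondecreasing function of $a$; together with continuity of $\Delta g$ across the rank breakpoints, this forces $a\mapsto\Delta g(m,m'(a),a)$ to be nonincreasing on an initial interval $[0,a^{*}]$ and nondecreasing on $[a^{*},U-x_m]$, where $a^{*}$ is the first $a$ at which $m'(a)$ reaches the smallest integer $m'$ with $A(m',m)\ge0$. Both parts of the lemma are then read off this single-valley shape, once one records that $A(n,m)\ge0$ always holds: since $\tfrac{2n}{n-1}-w=1$, we have $A(n,m)=(1-g_0)n\bar{x}+\tfrac{2}{n-1}\sum_{i=m}^{n}(x_i-x_m)\ge0$ because $g_0\le1$ and each $x_i-x_m\ge0$, so the valley bottom is reached no later than rank $n$.

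For Part 1 I would split on the sign of $A(m,m)=\bigl(\tfrac{2m}{n-1}-w-g_0\bigr)n\bar{x}$. If it is nonnegative, strict monotonicity gives $A(m',m)>0$ for every $m'>m$, so $\Delta g$ is nondecreasing in $a$ throughout and its values decrease toward $0$ as $a\downarrow0$; hence the minimizing rank is $m'=m$, as claimed. If instead $\tfrac{2m}{n-1}-w-g_0<0$, then $A(m,m)<0\le A(n,m)$, so strict monotonicity produces a well-defined smallest integer $m'$ with $A(m',m)\ge0$; denote it $s_m$ (with the convention $s_m=n$ if no integer below $n$ qualifies), and note it matches the displayed formula because the bracketed expression there is precisely $A(m',m)$. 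On the rank pieces $m,m+1,\dots,s_m-1$ the map $a\mapsto\Delta g$ is strictly decreasing, and it is nondecreasing from rank $s_m$ onward, so its global minimum over admissible $a$ is attained exactly when $m'(a)=s_m$, i.e.\ at $m'=s_m$, and the strict decrease before $a^{*}$ makes the minimizing rank unique.

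For Part 2, the single-valley shape forces the maximum of $\Delta g$ over $[0,U-x_m]$ to occur at one of the two extreme endpoints: $a=0$ (no change, value $0$, rank $m$) or $a=U-x_m$ (rank $n$). On the final rank piece the slope has the sign of $A(n,m)\ge0$, so $\Delta g$ is nondecreasing up to $a=U-x_m$; hence whenever raising $x_m$ to $U$ does not leave $\Delta g$ negative, the right endpoint dominates and the maximum is attained at $m'=n$. The step I expect to be the main obstacle is not this analytic core, which is short, but the boundary bookkeeping around it: ties among the $x_i$ (so some rank pieces are empty, and continuity of $\Delta g$ in $a$ must be argued directly from \eqref{eq:giniX}), the empty-set case in the definition of $s_m$ (resolved by $A(n,m)\ge0$), and the degenerate configurations --- including $g_0=1$, which can occur only when $L=0$ --- in which $\Delta g\le0$ for every admissible $a$, where Part 2 is best read as identifying $m'=n$ as the relevant candidate rank for the maximum rather than as a strict maximizer.
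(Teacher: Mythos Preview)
Your proposal is correct and follows essentially the same route as the paper's proof: both use the formula \eqref{eq: partialg1} for $\partial\Delta g/\partial a$, the monotonicity of $A(m',m)$ in $m'$, and continuity of $\Delta g$ across the rank breakpoints to conclude that $a\mapsto\Delta g$ has a single-valley shape, from which the two parts follow. Your framing as an explicit one-variable problem in $a$ is slightly cleaner than the paper's presentation, and your attention to the degenerate boundary cases (ties, empty defining set for $s_m$, and $g_0=1$) is more careful---indeed, your caveat that Part~2 should be read as identifying $m'=n$ as the candidate rank rather than a guaranteed strict maximizer is a point the paper's proof glosses over, since when $\Delta g(m,n,U-x_m)<0$ the true maximum is $0$ at $a=0$; this does not affect the downstream use of the lemma (bounding $|\Delta g|$), but it is a genuine gap in the statement that you correctly flag.
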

\begin{proof} 
    %We already observed that Term~A  
    %that determines the sign in (\ref{eq: partialg1}) 
    %increases monotonically with $m'$.  %Thus, we 
    We begin by establishing several facts about  
    %We then analyze 
    the sign of \eqref{eq: partialg1} at the endpoints of the domain of $m'$. When $m'=n$,  \eqref{eq: partialg1} has a positive sign since 
    \begin{equation}
    \left(\frac{2n}{n-1}-w-g_0\right)n\bar{x}+\frac{2}{n-1}\sum_{i=m}^{n}(x_i-x_m) >0.\label{eq:lemma41a}
    \end{equation}
 When $m' = m$, the sign of \eqref{eq: partialg1} is determined by the sign of 
    %\begin{equation}
    %\frac{\partial\Delta g(m, m'=m, a)}{\partial a} = \frac{1}{(n\bar{x}+a)^2}\left[
    $2m/(n-1)-w-g_0$.
    %\label{eq:lemma41b}
%    \end{equation}
    %we have $ \frac{2}{n-1}\leq \frac{2m}{n-1}\leq \frac{2n}{n-1}$. \textcolor{blue}{The 
    This expression could be negative for some values of $m$ and positive for other values of $m$, which motivates examining the two cases in the first fact of the lemma statement.  
    %\textcolor{red}{The purpose of this expression is not clear.  What are we trying to show here?  Why do we need to use terms involving $\mathcal{O}$?} \textcolor{blue}{I want to show how the sign changes, so showing the constant it converge to may make it clear to show the sign, but we can also delete it, as it might cause confusion.}

  %  \textcolor{red}{Did we prove fact 1?  We need to make the connection and state the exact statement after which the proof of fact 1 follows.} \textcolor{blue}{we do not finish proof here, and we finish proof here \ref{Pf:lamma1}}

%The results in \eqref{eq:lemma41a} and \eqref{eq:lemma41b} concern the behavior of  $\Delta g(m, m', a)$ within any open interval $[x_m, x_{m+1})$. We next show that the results hold $\Delta g(m, m', a)$ changes as we approach the right endpoint of the open interval.

We next show that $\Delta g(m,m',a)$ does not have discontinuities at the points defined by $a=x_{m'}-x_m$.  
%$=(x_r-x_m)^+$
%when we cross over to an adjacent interval.
%for $m<m'<n$.
Let $r$ be an integer satisfying $m < r \leq n$. As $a \to (x_{r} - x_m)^-$, we have $m'=r-1$; as $a \to (x_{r} - x_m)^+$, we have $m'=r$.  Thus, for any $r$ we have 
\begin{align} \nonumber
    &\Delta g(m,m'=r,a=(x_r-x_m)^+)-\Delta g(m,m'=r-1,a=(x_r-x_m)^-)\\
    &= \lim_{a\to {x_r-x_m}}\frac{1}{n\bar{x}+a}\left[\frac{2a}{n-1}+\frac{2(x_m-x_{r})}{n-1}\right] = 0.
   % & = \lim_{a\to {x_r-x_m}}\frac{1}{n\bar{x}+a}[\frac{2(x_{r}-x_m)}{n-1}+\frac{2(x_m-x_{r})}{n-1}]\\
%    & = 0
\end{align}
Thus, the differences in \(\Delta g(m, m', a)\) at the endpoints of any two adjacent intervals are negligible.
%at \( a = x_r - x_k \). \textcolor{blue}{Thus, we proved the continuity at the boundaries.}

%From the previous analysis, together with 
%In \eqref{eq: deltagmonon} \textcolor{red}{some equation that proves monotonicity, as suggested on page 16 in the red text.} we argue that 
Recall that $A$ from \eqref{eq: partialg1}  is nondecreasing in $m'$ when $m$ is fixed.  
%. and $a = x_{m'} - x_{m}$. Using this fact, we can 
We use this fact to reason about the two cases described in the lemma statement.
%\begin{enumerate}
%    \item  
When $2m/(n-1)-w - g_0<0$, increasing $m$ eventually switches the sign of \eqref{eq: partialg1} from negative to positive.  
%the map $m'\mapsto\Delta g(m,m', a)$ is discrete convex. 
    %\textcolor{red}{Explain why is this true.  And make the connection to the two facts so readers know why this matters.} 
%Thus, since we have continuity at the boundaries \(a_m=x_{m'}-x_m\) 
Thus, $\Delta g(m,m', a)$ decreases until some $m'$ and increases thereafter. 
%-\Delta g(m,m',a_m)$ 
%note that for $m'=n$, we let $a = U-x_m$, 
%the function that takes $m'$ to $\Delta g(m,m', a)$  since Term~A in \eqref{eq: partialg1} is nondecreasing in \(m'\), 
%the function that maps $m'\mapsto \Delta g(m,m', a)$ has forward differences that are nonincreasing in $[m,m^*]$ and nondecreasing in $[m^*,n]$, that is, 
%\[\Delta g(m,m'+1, a_{m'+1})-\Delta g(m,m',a_m) = \int_{x_{m'}-x_m}^{x_{m'+1}-x_m} \frac{\partial}{\partial a} \Delta g(m,m',a) da \]
%,where $a_{m'+1}=x_{m'+1}-x_m$ and $a_{m'}=x_{m'}-x_m$.
%\textcolor{red}{let's write an expression for the forward differences.  Or can we remove the reference to forward differences and just say it reaches a minimum?}.  \textcolor{blue}{sorry, use discrect convex is not very accurate here. But we can reach global minimum}
%Thus, the map $m'\mapsto \Delta g(m,m', a)$ is discrete convex. By the discrete convex minimizer rule \textcolor{red}{ADD A CITATION TO A PAPER OR TEXT BOOK WITH THIS THEOREM/RULE}, 
The minimum occurs at the first \(m'\) for which $A$ is greater than or equal to zero, i.e., the value $s_{m}$ given in the lemma statement. 
%\textcolor{red}{But what about the denominator with $a$?  It is increasing as $m'$ increases.  Is looking just at A enough? Is there a theorem that you are using to support this claim?  If so, cite it here.}
    When $2m/(n-1)-w-g_0\geq 0$, \eqref{eq: partialg1} is always positive, so that  
    %$m'\mapsto
    $\Delta g(m,m', a)$ increases monotonically with $m'$. %\textcolor{red}{what about the "no interior minimum" part of fact 2?} 
    Hence, in this case, the minimum value of $\Delta g(m,m', a)$ occurs at \(m'=m\).  Finally, for both cases the maximum value occurs at \(m'=n\). Since $A$ is nondecreasing and positive for all $m' \geq s_{m}$, $\Delta g(m, m', a)$ continues to grow for each increment of $m'$ until reaching $n$.
%\end{enumerate} \label{Pf:lamma1}
%Thus we finish the proof for Lemma \ref{lem: gchange_fixm}.
\end{proof}
% By Lemma \ref{lem: gchange_fixm}, for any fixed \(k\), the largest change in the Gini index occurs at \(m' = m^{\ast}\) or at \(m' = n\). To find the extremum of $|\Delta g|$, we first fix \(m'=n\) and find \(m\) that maximize \(|\Delta g|\). Then we we fix \(m=m^*\) which defined in Lemma \ref{lem: rankdepGadd} and find \(m\) that maximize \(|\Delta g|\).

Lemma~\ref{lem: gchange_fixm} shows that, for each fixed \(m\), we can find the the largest value of \(|\Delta g(m,m',a)|\) by evaluating it at $m'=n$ with $a = a_{n} = U - x_{m}$ and at the relevant minimum $m'=s_{m}$ with  $a = a_{m'}=x_{m'}-x_m$.  Note the minimum value of $\Delta g(m,m,a)=0$ when $m'=m$, so we need not consider $m'=m$ for determining the largest value of \(|\Delta g(m,m',a)|\).
%need to consider only two possible values of \(m'\) are confined to the two candidates \(m'_{\min}(m)\). hence 
As a result, finding the largest value of \(|\Delta g(m,m',a)|\) over all admissible \((m,m')\) when increasing a single $x_i$ amounts to finding the larger of \(|\Delta g(m,m'=n,a_n = U-x_m)|\) and \(|\Delta g(m,m'=s_m,a_{s_m}=x_{s_m}-x_m)|\) for each \(m\), and searching among these for the largest value over $m$.
%This makes clear what varies (only \(m\)) and what is fixed in each step, and avoids any ambiguity about the role of \(m^{*}\) from Lemma~\ref{lem: rankdepGadd}.}

%\textcolor{red}{I am confused here.  I am not following what we are trying to show.}

In fact, for many realizations of $X$, we need not evaluate these quantities for every value of $m=1, \dots, n$.  The maximum of \(|\Delta g(m,m',a)|\) occurs either when $(m=1, m'=s_m)$ or when $(m=m_I, m'=n)$, where $m_I = \lfloor n - (1 - g_0)(n-1)/2 \rfloor$, i.e., the greatest integer less than $n - (1 - g_0)(n-1)/2$.  % \textcolor{red}{use $m_I$ for increase here, and $m_D$ for decrease in section 4.1.2} 
Intuitively, for $m'=s_m$, we want to  increase the smallest element in $X$ to obtain the largest decrement in the Gini index.  For $m'=n$, we want to increase the element in $x$ that allows for the largest increment in the Gini index, which generally occurs for some $1<m<n$.
%the a value of  $a$ that balances the increase $a$ lain the largest change in one e  
%\textcolor{red}{Let's move the rest of section 4.1.1 (from here to the end) to the appendix.}

%\textcolor{red}{ You initially wrote: $m = m^*$, where $m^*$ satisfies $2\bigl(n-m^{\ast}\bigr)/(n-1)-1+g_0 = 0$ which I solved into an expression for $m$. But maybe I should not have solved for $m^*$ if you mean we should consider all the $m^*$ from Lemma 1 and find one that satisfies the equation?  Is there a guarantee that this equation has a solution?  And if we are trying every $m^*$ from Lemma 4.1, we have to compute every $m^*$ which seems like almost as much work as searching over $m$.  I think I am confused!} \textcolor{blue}{Sorry for the confusion,this  is different from the $m*$ in lemma, I should use different notation. And its value is from setting B in Equation 19) = 0}
%\textcolor{blue}{Thanks, solve this equation will gives $m = n - (1 - g)(n-1)/2$ } 
To argue why this should be the case, we first fix \(m' = n\) and determine which choice of \(m\) maximizes $|\Delta g(m,m'=n,a)|$. We have 
%We want to show that the greatest $\Delta g$.
%By increasing any $x_m$ to $U$, the denominator of \eqref{} $\bar{x}n+a$ equals the same value for all $m$, which we call . Hence, we have  
%\textcolor{red}{should the equation below have $\frac{2n}{n-1} - g_0 - 1$ instead of $-g_0+1$?  Are you approximating $n/(n-1) \approx 1$?} \textcolor{blue}{yes, I use approxmation here,I change it back here, but in \ref{eq: changegm} we also used this approximation}
\begin{eqnarray}
\Delta g(m,m'=n,a) &=& \frac{1}{n\bar{x}+a}\left[a(\frac{2n}{n-1} - g_0 - w)+\frac{2}{n-1}\sum_{i=m}^n(x_m-x_i)\right] \label{eq:changewithntrue}\\ 
&=& \frac{1}{n\bar{x}+a}\left[a(- g_0 + 1)+\frac{2}{n-1}\sum_{i=m}^n(x_m-x_i)\right].\label{eq:changewithn}
\end{eqnarray}
%where we presume $n/(n-1) \approx 1$ to simplify \eqref{eq:changewithntrue}. 
For any $m$, let $a_{m+1} = x_n - x_{m+1}$ and $a_m = x_{n} - x_m$.  When $n\bar x >> U-x_1$, which should be the case in practice as long as $n$ is reasonably large and $U$ is sensible, $\Delta g(m+1,m'=n, a_{m+1}) - \Delta g(m,m'=n, a_m)$ for $m<n$ is determined primarily by the term inside the brackets of \eqref{eq:changewithn}, i.e.,.
%\textcolor{red}{We already defined $\alpha$ and $\beta$ for other uses.  
%We need to use different terms for these quantities.}  $\alpha = 1/(n\bar{x} + a_{m+1})$ and $\beta = 1/(n\bar{x} + a_m)$. We suppose $\alpha \to \beta$, \textcolor{red}{what does this assumption mean in practice? How does $\alpha$ and $\beta$ change with fixed $n$? Maybe you just want to say that $\alpha \approx \beta$?  Do we need to take limits?  }\textcolor{blue}{Yes, we can use $\alpha \approx \beta$ here, maybe we can say $1/(n\bar{x} + a_{m+1}) \approx $}  which can simplify our calculation. Thus, we have, \textcolor{red}{Do we require $m<n$ for these equations?}\textcolor{blue}{yes} Thus, for $m<n$, this difference 
%in the term in brackets when computing $\Delta g(m+1,n, a_{m+1}) - \Delta g(m,n, a_m)$ is determined mainly by 
\begin{align}
    %&\Delta g(m+1,m'=n, a_{m+1}) - \Delta g(m,m'=n, a_m) 
   % &= \lim_{\alpha \to \beta}\frac{1}{n \bar{x}+a_m}[(x_m-x_{m+1})(1-g)+\frac{2}{n-1}\sum_{i=m+2}^n(x_{m+1}-x_m)-\frac{2}{n-1}(x_m-x_{m})] \notag \\
    (x_m-x_{m+1})(1-g_0)+\frac{2(n-m)}{n-1}(x_{m+1}-x_m)) = \underbrace{\left(\frac{2(n-m)}{n-1}-1+g_0\right)}_{B}(x_{m+1}-x_m). \label{eq: changegm}
\end{align}
%\textcolor{red}{We need to use a different expression that A.  We already used A, and you don't want to reuse notation -- otherwise the reviewer will get confused. How about Term B? Or even better, what if we called it a letter like B?} \textcolor{blue}{Solved, I also changed Term A to A}

The sign of \eqref{eq: changegm} depends on the expression labeled $B$, which decreases strictly with $m$. This $B$ is positive when $m=1$ and eventually becomes negative. Thus, \eqref{eq:changewithn} with $m' = n$ is first monotone increasing and then monotone decreasing in $m$.
   %  If $m=1$, then $\Delta g(X:m+1,m'=n) - \Delta g(X:m,m'=n) = \frac{(1+g)(x_2-x_1)}{n\bar{x}+a} > 0$;
%If $m=n-1$, then $\Delta g(X:m+1,m'=n) - \Delta g(X:m,m'=n) = \frac{(g-1)(x_{m+1}-x_m)}{n\bar{x}+a}<0$.
The maximum of $\Delta g(m,m'=n)$ is attained at the last integer that makes $B$ positive, which is $m = m_I$. When $m' = s_m$, the maximum of $\Delta g(m,m'=n)$ occurs when $m=1$, which changes the smallest value in $X$. A proof is in the supplementary material.
%\textcolor{green}{we should include that proof in the appendix}
%\textcolor{blue}{added}

We note that computing $s_m$ and $m_I$ can be computationally expensive, since $s_m$ has no closed-form solution.  In Section \ref{sec:smoothUB}, we present a bound on the local sensitivity that avoids the need to compute these values.

%\textcolor{red}{What is Term 1?  Maybe just write it out if it is not too long.} \textcolor{blue}{same as $m^*$, as we set B in the equation 18) to 0, which is exactly the same expression as $m^*$} 
%in \eqref{eq: changegm} equals zero. \textcolor{red}{BUt what if it cannot exactly equal zero?} \textcolor{blue}{this is a very good question. m* should be rouned down, since it should be the last integer make B in euqation 19) positive, I have changed its expression.}
%\textcolor{red}{We should use a different notation than $m^*$.  We already defined $m^*$.  Giving the same notation two meanings is confusing for readers.  Make a new notation.} 
%With the above analysis, the extremum of absolute \(\Delta g\) under scenario \ref{subsec: suuinc} have following possible pair cases: \((m,m')=(m^*,n)\) and \((m,m')=(1,m^*)\).   

\subsubsection{Change in Gini index when decreasing one element in \texorpdfstring{$X$}{X}} \label{subsec: suudec}

%Instead of increasing some $x_m$ by $a$, i
In this section, we consider decreasing $x_m$ by $a>0$.  The decrement could result in a value with reduced rank $1 \leq m'<m$ or a value still with rank  $m'=m$.  
%Thus, we consi $m'(1\leq m' \leq m \leq n)$. 
We can use analyses akin to those in Section \ref{subsec: suuinc} to determine the maximum absolute change in the Gini index when reducing one $x_m$. For brevity, here we provide the upper bounds on this change and leave the details to the supplementary material. %\ref{appendix: proofs}
 
When decreasing $x_m$ by some amount $a$, the largest value of \(|\Delta g(m, m', a)|\) occurs when either \((m,m')=(n,q_m)\) or \((m,m')=(m_D,1)\). %\textcolor{red}{We used $z_i$ to define the original data.  We need a different notation for this quantity to avoid double use of the notation. How about $q_m$?} \textcolor{blue}{ok, fixed} 
Here,  $q_m=  \max \bigl\{1\leq m'\leq m:(g_0 - \frac{2m'}{n - 1} + w)n\bar{x} + \frac{2}{n - 1} \sum_{i=m'}^{m-1}(x_i - x_m) \leq 0\bigr\} $ and $ m_D = \max\{m>1:(g_0+1)n\bar x+\frac{2}{n-1}\sum_{i=1}^{m-1}(x_i-x_m)\geq 0\}$. 
%\textcolor{red}{use $m_I$ in 4.1.1 for the minimum m when we increment $a$ and $m_D$ for the minimum when we decrement $a$.  Can you make the change throughout?}\textcolor{blue}{sure!}
%this is the term that determine the sign of $ \frac{\partial \Delta g}{\partial a}$

\subsubsection{Bounds on local sensitivity for datasets differing from  \texorpdfstring{$X$}{X} in  \texorpdfstring{$k$}{k} elements}\label{sec:smoothUB}
%\textcolor{red}{This subsection is about changing one observation.  But this paragraph talks about changing $k$ observations.  I think this is confusing.  Should this paragraph go elsewhere?} \textcolor{blue}{fixed}

%Using the results from Section \ref{subsec: suuinc} and Section \ref{subsec: suudec}, we can compute the local sensitivity of the Gini index for any $X$. We then derive a smooth upper bound on the local sensitivity for datasets that differ from $X$ in up to $k$ elements. According to Definition \ref{def: smoothbound}, we analyze the smooth upper bound on the local sensitivity of function $G$ of the dataset $X$ with up to $k$ elements changing. 

For smooth sensitivity, we need to compute a bound on the local sensitivity for any dataset constructed by replacing $k$ elements in $X$ with $k$ other values in $[L,U]$.  Let $X_{(kt)}$ denote one of these modified datasets, where the index $k=1, \dots, n$ indicates the number of elements replaced and the index $t$ ranges over all possible datasets with   $d(X,X_{(kt)})=k$. The elements of $X_{(kt)}$ are arranged in increasing order.  For $k=1, \dots, n$, let $\min g_{(k)}$ and $\max g_{(k)}$ denote, respectively, the minimum and maximum values of the Gini index $g(X_{(kt)})$ over the set of all feasible datasets  \{$X_{(kt)}: d(X,X_{(kt)})=k\}$ of size $n$. %$g_{max}^{(k)}$ denote the maximum Gini index after modifying $k$ elements,
Similarly, let $\min\bar{x}_{(k)}$ and $\max \bar{x}_{(k)}$ denote, respectively, the minimum and maximum values of the average of the $n$ values in $X_{(kt)}$ over the set of all feasible datasets \{$X_{(kt)}: d(X,X_{(kt)})=k\}$ of size $n$.
%after modifying $k$ elements, and $\bar{x}_{max}^{(k)}$ denote the maximum average after modifying $k$ elements. 
%We have $g_{min}^{(k)}\leq G(X_k) \leq g^{(k)}_{max}$ and $\bar{x}_{min}^{(k)}\leq \overline{X_k} \leq \bar{x}_{max}^{(k)}$. 
To simplify our notation, we write $X_{(kt)} = (x_{t1}, \dots, x_{tn})$, $\bar{x}_t = \sum_{i=1}^n x_{ti}/n$, and $g_{t} = g(X_{(kt)})$.

%\textcolor{red}{We need some transition sentences.  What is each "candidate" below?  What are we trying to find?  Why do we consider these 4 candidates?  What do we do with 4 candidates once we get them? Some transition sentences will help the reader understand what it is we are trying to accomplish by listing these candidates. Here is a possible start to a transition paragraph; please finish it to make it clear what the candidates are for.}

As shown in Section \ref{subsec: suuinc} and Section \ref{subsec: suudec}, the maximum absolute change in $g_t$ when increasing or decreasing any single element occurs for one of four possible values of $(m, m')$.  These include $(m=m_I, m'=n)$, %\textcolor{red}{Check with previous sections, as we talked about setting $a=x_{m'}-x_m$ rather than $a=U-x_{tm}$.}, 
$(m=1, m'= s_m)$, $(m=m_D, m'= 1)$, and $(m=n, m'= q_m)$.  We use these facts to bound the local sensitivity for any $X_{(kt)}$.  Note that the values of $m_I$ and $m_D$ in these four candidates are computed using $X_{(kt)}$, not $X$.

%\textcolor{red}{I am confused here.  Is $k=1$ still in these derivations?  It seems like it from the notation since each candidate has one value for $m$ and $m'$ and $a$. }

%\textbf{Candidate 1:}
For the case where $(m=m_I, m'=n)$, we have 
%let  $m^{\ast}$ be the integer that satisfies $\;\; \min\bigl\{\,m\le m^* : \frac{\partial \Delta g(m,m'=n)}{\partial a} \le 0\bigr\}$.
\begin{align}
    &|\Delta g(m=m_I,m'=n, a)| = \bigl|\frac{1}{n\bar{x}_t+a}\left(a(\frac{2n}{n-1}-g_t-\frac{n+1}{n-1})+\frac{2}{n-1}\sum_{i=m_I}^n(x_{m_I}-x_i)\right)\bigr|\notag\\
    &\le \max\!\left(
        \frac{1}{\,n\bar{x}_t+a}a\,(1-g_t),\;
        \frac{2}{(n\bar{x}_t+a)(n-1)}\sum_{i=m_I}^{n}(x_{ti} - x_{tm_I})
      \right) \notag\\    
    &< \max\!\left(\frac{(U-L)(1-\min g_{(k)})}{n\min \bar{x}_{(k)}+(U-L)},\frac{2}{(n\bar{x}_{t}+0)(n-1)}\sum_{i=1}^{n}(x_{ti}-x_{t1})
    \right)\notag\\
    &<\max\!\left(\frac{(U-L)(1 - \min g_{(k)})}{n \min\bar{x}_{(k)}+(U-L)},\frac{2(n\max \bar{x}_{(k)}-nL)}{ n\min\bar{x}_{(k)}(n-1)}\right). \label{eqa: cand1}  
\end{align}
The second step holds because of the triangle inequality. The third step  utilizes the fact that $0 \leq a \leq U$ and that the largest value of the summation term occurs when the limits go from $1$ to $n$. %$g_{min}^{(k)}\leq G(X_k) \leq g^{(k)}_{max}$ ,
%and $\bar{x}_{min}^{(k)}\leq \overline{X_k} \leq \bar{x}_{max}^{(k)}$. 
Using this upper bound for the summation allows us to circumvent searching for $m_I$---which does not have a closed-form solution---thereby avoiding a search procedure with complexity $\mathcal{O}(log(n))$. The final step holds because we maximize the numerator and minimize the denominator. 

For the case where $(m=1, m'=s_m)$ we can use almost identical logic to show that \eqref{eqa: cand1} also provides an upper bound on  $|\Delta g(m=1,m'=s_m, a)|$.
%we need the accurate value, the complexity of our algorithm will increase to $\mathcal{O}(n^2log(n))$ instead of $\mathcal{O}(n^2)$. 
%Thus in the last step, it is obvious that $\sum_{i=m^*}^{n}(x_{m^*}-x_i) < \sum_{i=1}^{n}(x_{1}-x_i)$.
%By doing so, we can avoid calculating $m^*$, which has no closed form. 
%For the remaining candidate calculations, we adopt similar strategy to avoid calculating $m^*$.

%\textbf{Candidate 2:}

\begin{comment}
\begin{align}
    &|\Delta g(m=1,m=m^*, x[m]\leq a<x[m^*])|\\
    &=|\frac{1}{n\bar{x}+a}[a(\frac{2m^*}{n-1}-g-1)+\frac{2}{n-1}\sum_{i=2}^{m^*}(x_1-x_i)]|\notag\\
    &\leq \max(|\frac{a}{n\bar{x}+a}(\frac{2m^*}{n-1}-g-1)|,|\frac{2}{(n\bar{x}+a)(n-1)}\sum_{i=2}^{m^*}(x_i-x_1)| )\notag\\
    &< \max\left(\frac{U-L}{n\bar{x}_{min}^{(k)}+(U-L)}(1-g_{min}^{(k)}),|\frac{2}{(n\bar{x}_{min}^{(k)}+0)(n-1)}\sum_{i=2}^{n}(x_i-x_1)|\right)\notag\\
    &<\max(\frac{U-L}{n\bar{x}_{min}^{(k)}+(U-L)}(1-g_{min}^{(k)}),|\frac{2n(\bar{x}_{max}^{(k)}-L)}{n\bar{x}_{min}^{(k)}(n-1)}|) \label{eqa: cand2}
\end{align}
\end{comment}
%Notably, the results after scaling in Equation \ref{eqa: cand1} and Equation \ref{eqa: cand2} are the same.

For the case where $(m=n,m'=q_m)$, we have 
%\textbf{Candidate 3:} 
\begin{align}
    &|\Delta g(m=n,m'=q_m,a)|
    = \left|\frac{1}{n\bar{x}_t-a} \left(a(g_t - \frac{2}{n-1} + 1) + \frac{2}{n-1} \sum_{i=q_m}^{n} (x_{ti} - x_{tn})\right) \right|\notag\\
    & < \max\left(\frac{(U - L)(\max g_{(k)} + 1 -\frac{2}{n-1})}{n \min \bar{x}_{(k)} - (U - L)}, \frac{2n(U - \min \bar{x}_{(k)})}{(n \min \bar{x}_{(k)} - (U - L))(n-1)}\right). \label{eqa: cand3}
\end{align}
%\textcolor{blue}{In Section \ref{sec:speedups}, we show that our algorithm stops the iteration to find the greatest change in $\Delta g$ for datasets differing from $X$ in $k$ elements, which ensures the legitimacy of calculations e.g. we ensure $n\min \bar x_k > (U-L)$ within iterations.}
%\textcolor{red}{Is is guaranteed that $n\min \bar x_t > (U-L)$ always?  Maybe we should add a sentence indicating why this is true?} \textcolor{blue}{This is guaranteed, because we always compare $A_k$ with 1, and once it reaches, which means $n\min \bar x_t > (U-L)$ stays positive but become small than $2(U-min\bar{x_k}$ and we will stop iteration. So we dont need to worry about $n\min \bar x_t > (U-L)$.}

%\textbf{Candidate 4:}
\begin{comment}
For the case where $(m=m_I,m'=1, a = x_{m_I}-L)$, we have 
\begin{align}
    &|\Delta g(X_k:m=m_I,m'=1, a = x_{m_I}-L)| \\
    &= \left| \frac{1}{n\bar{x} - a} \times \left[ a\left(g - \frac{2}{n-1} + 1\right) + \frac{2}{n-1} \sum_{i=1}^{m_I-1} (x_i - x_{m_I}) \right] \right|\notag\\
    & < \max \left( \frac{(U - L)(g_{max}^{(k)} + 1)}{n\bar{x}_{min}^{(k)} - (U - L)} , \frac{2(U - \bar{x}_{min}^{(k)}) }{n\bar{x}_{min}^{(k)} - (U - L)} \right) \label{eqa: cand4}
\end{align}
Similarly, the results after scaling in Equation \ref{eqa: cand3} and Equation \ref{eqa: cand4} are the same.
\end{comment}

 For the case where $(m=m_D,m'=1)$, we can show that \eqref{eqa: cand3} also
 provides an upper bound on  $|\Delta g(m=m_D,m'=1, a)|$.
 
% \subsection{Change of Gini upon adding or deleting an element} 
\subsection{Efficiently Finding  \texorpdfstring{$\max g_{(k)}$}{max g(k)} and \texorpdfstring{$\min g_{(k)}$}{min g(k)} }\label{sec: minmaxkG}
As evident in \eqref{eqa: cand1} and \eqref{eqa: cand3}, our smooth upper bounds on the local sensitivity depend on minima and maxima of the Gini index.  In this section, we provide statements of theorems that offer computationally efficient ways to find these minima and maxima. 
%To find $A^{(k)}(X)$ as we defined in Corollary \ref{cor: ak_dp}, we need to calculate the minimal Gini Index and maximal Gini index of $X$ with up to $k$ elements changed. 
Specifically, we show that for $k=1, \dots, n-1$, we can find $\max g_{(k)}$ by removing and replacing $n-k$ consecutively ordered elements from $X$.  We also show that, for $k= 1, \dots, n-1$, we can find $\min g_{(k)}$ by removing and replacing a total of $n-k$ consecutively ordered elements from $X$ selected from the smallest and largest values of $X$.  
%Lemma \ref{lem: rankdepGadd} and Lemma \ref{lem: rankdepGdel} are used to prove that 
%$k$-maximal Gini subset in Definition \ref{def:kmvsmax} can be achieved by removing $n-k$ consecutive elements in Theorem \ref{the: maxgconsecutive}, and $k$-minimal Gini subset in Definition \ref{def:kmvsmin} can be achieved by removing a total of $n-k$ consecutive elements from both sides of $X$ in Theorem \ref{the: mingconsecutive}. 
%\textcolor{red}{but are we not trying to change $k$ elements, not remove them?  Why does the result of getting the maximum Gini by removing $k$ elements help us?  Is it because we say that the actual maximum change over all k  must be less than the change at n - k?  Also, why wouldn't the maximum Gini always be 1, since you might be able to remove enough points to get a Gini of 1?} \textcolor{blue}{Re: ; the right statement is "by removing $k$ elements and then adding $k$ elements".The maximum gini means the maximum gini index among datasets with up to $k$ elements changed. Fixed}
%We then show that 
%finding $\min g_{(k)}$ or $\max g_{(k)}$
%has complexity $\mathcal{O}(k(n-k))$.
%in Theorem \ref{cor: smoothupperboundtime}. These 
%Thus, we simplify the search 
%for a bound on the local sensitivity 
%from an NP-hard problem to a task that can be solved with  time complexity $\mathcal{O}(n^2)$.
%as shown in Theorem \ref{the: maxgtime} and Corollary \ref{cor: ak_dp}. The 
Proofs of the theorems  are provided in the supplementary material.
%appendix {appendix: proofs}.

%\textcolor{red}{Let's move all the lemmas and corollaries to the appendix so we just have the 2 definitions and 2 theorem statements in this section.}\textcolor{blue}{Ok, fixed}

%\textcolor{red}{Add some sentences that explain for readers what the k-maximal and $k$-minimal subset means, perhaps referring to our example with $n=4$ from Section 3 to illustrate the idea.  Also, I modified the notation a little bit in the statement -- please check.}

As part of the methodology, we define $k$-maximal Gini subsets and $k$-minimal Gini subsets as follows.
\begin{definition}[$k$-maximal Gini subset]\label{def:kmvsmax}
Given a set \(X=(x_{1},\ldots,x_{n})\) of \(n\) real numbers in ascending order and an integer \(k<n\), a subset \(Q\subset X\) is called a {$k$-maximal Gini subset} of \(X\) if \(|Q| = k\) and,
  $\forall\,Q'\subset X \text{ such that } |Q'| = k \textrm{ where } Q'\neq Q,$ we have $g(Q')\le g(Q).$
\end{definition}

\begin{definition}[$k$-minimal Gini subset]\label{def:kmvsmin}
Given a set \(X=(x_{1},\ldots,x_{n})\) of \(n\) real numbers in ascending order and an integer \(k<n\), a subset \(Q\subset X\) is called a $k$-minimal Gini subset of \(X\) if \(|Q| = k\) and,
  $\forall\,Q'\subset X \text{ such that } |Q'| = k \textrm{ where } Q'\neq Q,$ we have $g(Q')\ge g(Q).$
\end{definition}

To illustrate the concepts in Definition \ref{def:kmvsmax} and Definition \ref{def:kmvsmin}, we return the the example with $X = (3, 6, 7, 7.5)$ from Section \ref{sec:alg}. The $k$-maximal Gini subset of size $k = 2$ is the pair of values $(x_i, x_j)$ whose Gini index is largest, i.e., the pair that differs most. For this $X$, that subset is $\{3, 7.5\}$.
%because it spreads mass between the minimum and maximum observations and therefore maximizes $g(Q)$ over all 2-element subsets.
The $k$-minimal Gini subset of size $k = 2$ is the pair of values $(x_i, x_j)$ whose Gini index is smallest, which for this $X$ is $\{7, 7.5\}$. 
%because those two values are closest and thus minimize $g(Q)$. 
In general, for any $k < n$, the $k$-maximal/minimal Gini subsets tell us how much/little inequality can exist in a sample from $X$ of size $k$.
%and the $k$-minimal Gini subset tells us how equal it can possibly look. 

The $k$-maximal and $k$-minimal subsets are used to prove Theorem~\ref{the: maxgconsecutive} and Theorem \ref{the: mingconsecutive}. 

\begin{theorem}\label{the: maxgconsecutive}
Given a set $X = (x_1, \dots, x_n)$ of of $n$ real numbers in ascending order, where each $x_i \in [L,U]$, and an integer $1<k<n$, the $k$-maximal Gini subset can be achieved by removing $n-k$ consecutive elements in $X$.
\end{theorem}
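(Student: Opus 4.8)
The plan is to prove the theorem by an exchange (swapping) argument: starting from \emph{any} size-$k$ subset of $X$ that attains the maximal Gini value, I repeatedly replace one retained element by an omitted element of nearby rank in a way that never decreases the Gini value, while making measurable progress toward a subset whose retained ranks form an initial segment together with a final segment of $\{1,\dots,n\}$ --- equivalently, a subset obtained by deleting $n-k$ consecutively ranked elements.

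Identify a size-$k$ subset $Q$ with the set $S\subseteq\{1,\dots,n\}$ of ranks it keeps, and, using \eqref{eq:giniZ} on the subsample, write $g(Q)=N(Q)/\bigl(2k\,T(Q)\bigr)$ with $N(Q)=\sum_{i,j\in S}|x_i-x_j|$ and $T(Q)=\sum_{i\in S}x_i$. The key tool is a \textbf{swap lemma}. Fix a retained rank $p\in S$ and an omitted rank $u\notin S$, and let $Q'$ be obtained by replacing $p$ by $u$. Elementary algebra gives $T(Q')=T(Q)-x_p+x_u$ and $N(Q')=N(Q)+2\bigl(h(x_u)-h(x_p)\bigr)$, where $h(v)=\sum_{i\in S\setminus\{p\}}|v-x_i|$, so
\[
 g(Q')-g(Q)=\frac{\phi_Q(x_u)-\phi_Q(x_p)}{2k\,T(Q)\,T(Q')},\qquad
 \phi_Q(v):=2T(Q)\,h(v)-N(Q)\,v .
\]
Since $h$ is a sum of absolute values and $T(Q)>0$, the function $\phi_Q$ is convex and piecewise linear, so $g(Q')-g(Q)$ has the same sign as $\phi_Q(x_u)-\phi_Q(x_p)$. (The one exception is an all-zero subsample $T(Q')=0$, which can occur only when the maximal Gini value equals $(k-1)/k$ and is attained, among others, by $\{x_1,\dots,x_{k-1},x_n\}$, which already has the required form; I dispose of that case first and assume below that every subsample considered has positive mean.)

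Call $S$ \emph{prefix-plus-suffix} if $S=\{1,\dots,a\}\cup\{b,\dots,n\}$ for some $a<b$; this is exactly the statement that $Q$ arises by removing the $n-k$ consecutive ranks $\{a+1,\dots,b-1\}$. Let $Q$ be a Gini maximizer with retained set $S$, let $[1,a]$ and $[b,n]$ be the longest retained initial and final segments, and suppose $S$ is not prefix-plus-suffix. Then $S$ contains a rank $p$ with $a<p<b$; since $a+1$ and $b-1$ are omitted, in fact $a+1<p<b-1$, and, $X$ being sorted, $x_{a+1}\le x_p\le x_{b-1}$. Convexity of $\phi_Q$ gives $\phi_Q(x_p)\le\max\{\phi_Q(x_{a+1}),\phi_Q(x_{b-1})\}$, so swapping $p$ with whichever of $a+1,b-1$ attains this maximum does not decrease the Gini value, hence (starting from a maximizer) keeps it at the maximum. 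But swapping $p$ with $a+1$ lengthens the retained initial segment, and swapping $p$ with $b-1$ lengthens the retained final segment (excising the interior rank $p$ touches neither the opposite segment nor its endpoint). Thus the potential $\Psi(S):=(\text{length of longest retained initial segment})+(\text{length of longest retained final segment})$ strictly increases under this swap. Since $\Psi(S)\le k$ with equality exactly when $S$ is prefix-plus-suffix, iterating from any maximizer terminates after finitely many steps at a Gini-maximizing subset that is prefix-plus-suffix, i.e., obtained by removing $n-k$ consecutive elements, which proves the theorem.

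The step I expect to require the most care is the one just above: reconciling ``the swap that the convexity inequality hands you'' --- which a priori might move $p$ toward the endpoint \emph{closer} to the center of $X$ --- with ``a swap that demonstrably makes progress.'' The resolution is to (i) sandwich $p$ specifically between the ranks $a+1$ and $b-1$ adjacent to the already-formed retained segments, and (ii) measure progress by the combined segment length $\Psi$ rather than by any dispersion-from-the-center quantity, so that \emph{both} directions allowed by convexity register as progress. The remaining, routine, obstacles are checking that excising an interior retained rank never shrinks the opposite retained segment, handling ties among the $x_i$ (which merely permute which equal-valued ranks are retained and leave $g$ unchanged), and discharging the all-zero subsample case noted above.
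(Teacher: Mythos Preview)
Your proof is correct and follows the same overall exchange-argument strategy as the paper: start from a maximizer, swap an interior retained element for an omitted element adjacent to the current prefix or suffix, argue the swap does not decrease the Gini value, and iterate toward a prefix-plus-suffix configuration. The difference lies in the key technical tool. The paper first proves a separate lemma (in the supplement) showing that when a single element $x$ is inserted into a fixed $(k-1)$-set, the partial derivative $\partial\Delta g/\partial x$ changes sign exactly once as the insertion rank varies; this unimodality is established by a direct calculation and then invoked to justify each swap. Your route sidesteps that lemma entirely: writing $g(Q')-g(Q)$ as proportional to $\phi_Q(x_u)-\phi_Q(x_p)$ with $\phi_Q(v)=2T(Q)\,h(v)-N(Q)\,v$, you observe that $\phi_Q$ is convex because $h$ is a sum of absolute values, whence $\phi_Q(x_p)\le\max\{\phi_Q(x_{a+1}),\phi_Q(x_{b-1})\}$ whenever $x_{a+1}\le x_p\le x_{b-1}$. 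This is more elementary and more transparent than the paper's computation, and in fact your convexity statement is equivalent to (indeed slightly stronger than) what the paper's lemma delivers. You also make the termination explicit via the potential $\Psi$, which the paper handles only informally; that is a worthwhile addition.
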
 

%\begin{corollary}\label{the: maxgtime}
%Given a list of \(n\) bounded real numbers
%\(X=\{x_{1}, \ldots,x_{n}\}\) in the range of the interval
%\([L,U]\) and an integer \(k<n\), 

%In the appendix, we show that the \(k\)-maximal Gini subset can be computed in \(\mathcal{O}(n-k)\) steps, and that 
%\end{corollary}

\begin{theorem}\label{the: mingconsecutive}
Given a set $X = (x_1, \dots, x_n)$ of $n$ real numbers in ascending order, where each $x_i \in [L,U]$, and an integer $1<k<n$, the $k$-minimal Gini subset can be achieved by removing a total of $n-k$ elements from 
%both sides of $X$. \textcolor{red}{How about this wording:  a total of $n-k$ elements from 
$X$ so that $Q$ comprises $k$ consecutive elements.
%consecutive elements from both sides of $X$.
\end{theorem}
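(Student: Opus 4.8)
\textbf{Proof proposal for Theorem~\ref{the: mingconsecutive}.}

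The plan is to recast minimization of the Gini index over $k$-element subsets as minimization of a \emph{linear} functional of the chosen order statistics, and then run an exchange argument. Applying \eqref{eq:giniX} to a subset $Q$ with sorted elements $y_{(1)}\le\cdots\le y_{(k)}$ gives $g(Q)=\bigl(\sum_{a=1}^{k}(2a-k-1)y_{(a)}\bigr)/\bigl((k-1)\sum_{a=1}^{k}y_{(a)}\bigr)$. Let $g^\star=\min_{|Q|=k}g(Q)$ (attained, since there are finitely many subsets), put $\lambda=(k-1)g^\star$, set $c_a=2a-k-1-\lambda$, and define $\phi(Q)=\sum_{a=1}^{k}c_a y_{(a)}=\sum_a(2a-k-1)y_{(a)}-\lambda\sum_a y_{(a)}$. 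A one-line computation yields $\phi(Q)=(k-1)\bigl(\sum_a y_{(a)}\bigr)\bigl(g(Q)-g^\star\bigr)$, so (assuming the relevant sums of values are positive, as they are for income data) $\phi(Q)\ge 0$ for every $k$-subset, with equality exactly when $Q$ is a $k$-minimal Gini subset. The key structural fact is that the coefficients are strictly increasing: $c_1<c_2<\cdots<c_k$.

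Next I would set up the exchange. Index a minimizer $Q$ by $i_1<\cdots<i_k$, so $y_{(a)}=x_{i_a}$, and suppose $Q$ is not a window, i.e. $i_{a+1}\ge i_a+2$ for some position $a$; then $x_{i_a+1}\notin Q$ and $x_{i_{a+1}-1}\notin Q$. Consider the two inward moves that shrink this gap: (A) replace $x_{i_a}$ by $x_{i_a+1}$, which only raises the $a$th order statistic, changing $\phi$ by $c_a\,(x_{i_a+1}-x_{i_a})$; (B) replace $x_{i_{a+1}}$ by $x_{i_{a+1}-1}$, which only lowers the $(a{+}1)$th order statistic, changing $\phi$ by $c_{a+1}\,(x_{i_{a+1}-1}-x_{i_{a+1}})$. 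Since $x$ is sorted, the first increment is $\ge 0$ and the second is $\le 0$, so move (A) does not increase $\phi$ when $c_a\le 0$, and move (B) does not increase $\phi$ when $c_{a+1}\ge 0$. Because $c_a\le c_{a+1}$, at least one of these conditions holds at every gap; together with $\phi\ge 0$ this shows the move produces another $k$-minimal Gini subset.

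To finish I would force termination at a window with a monotone potential. Among all $k$-minimal Gini subsets pick one whose index set $I$ minimizes $\Psi(I)=\sum_{a:\,c_a\le 0}(n-i_a)+\sum_{a:\,c_a>0}i_a$, a nonnegative integer. If such a minimizer had a gap at position $a$, then applying move (A) when $c_a\le 0$ strictly decreases the term $n-i_a$ of $\Psi$, while applying move (B) when $c_a>0$ (hence $c_{a+1}>0$) strictly decreases the term $i_{a+1}$; in either case we obtain a $k$-minimal Gini subset with strictly smaller $\Psi$, a contradiction. Hence the chosen minimizer is a window, which is precisely the assertion: the $n-k$ discarded elements are the prefix $\{x_1,\dots,x_{i_0}\}$ and suffix $\{x_{i_0+k+1},\dots,x_n\}$ flanking that window.

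I expect the exchange step to be the part demanding care: one must check that moves (A) and (B) keep the subset of size $k$ and valued in $[L,U]$, that the $a$th (resp.\ $(a{+}1)$th) position is the only order statistic that changes so the change in $\phi$ really is a single term $c_a\,\delta$, and that the sign bookkeeping is consistent because $c_a$ depends only on the \emph{position} $a$ and on the fixed constant $\lambda=(k-1)g^\star$, not on the current subset. Equivalently, one can bypass the linearization and work directly with $g$: move (A) is non-increasing iff $2a-k-1\le (k-1)g(Q)$ and move (B) iff $2(a{+}1)-k-1\ge (k-1)g(Q)$, and monotonicity in $a$ again shows one of these holds, with the right-hand side equal to the constant $\lambda$ on any minimizer. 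A completely parallel argument, maximizing the same linear functional with the coefficients still increasing, recovers Theorem~\ref{the: maxgconsecutive}.
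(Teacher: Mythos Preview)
Your proof is correct and takes a genuinely different route from the paper's.

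The paper argues via the auxiliary Lemma~\ref{lem: addg}, which analyses the derivative $\partial \Delta g(X_{+x})/\partial x$ when a single point $x$ is inserted into a set, showing that the derivative changes sign exactly once as $x$ sweeps across the range. The contradiction is then set up by assuming a minimizer contains an element $x_m$ separated from a block $(x_l,\dots,x_u)$ of consecutive indices and showing that swapping $x_m$ for $x_{l-1}$ or $x_{u+1}$ decreases the Gini index; the case of several non-adjacent blocks is dispatched by an iterative appeal to the same swap.

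Your approach instead linearizes the objective at the optimal value: with $\lambda=(k-1)g^\star$ you replace $g$ by the linear functional $\phi(Q)=\sum_a(2a-k-1-\lambda)y_{(a)}$, which vanishes exactly on minimizers and is nonnegative elsewhere. Because the coefficients $c_a$ are strictly increasing, every gap in the index set admits an inward shift---either raising $y_{(a)}$ when $c_a\le 0$ or lowering $y_{(a+1)}$ when $c_{a+1}\ge 0$---that cannot increase $\phi$, hence preserves optimality. The potential $\Psi$ then forces termination at a window without any separate treatment of ``multiple stray points''. This is more elementary (no calculus, only the order structure of the $c_a$), handles arbitrary non-window configurations uniformly, and makes the termination explicit rather than relying on an informal iterative argument. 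The paper's approach, by contrast, reuses machinery already developed for the local-sensitivity analysis in Section~\ref{subsec: suuinc}, so it fits more naturally into that narrative. Your closing remark that a parallel argument recovers Theorem~\ref{the: maxgconsecutive} is correct in spirit, but note that there the exchange must be phrased on the \emph{complement} (the removed block must be contiguous), so the bookkeeping is not literally identical.
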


%\begin{corollary}\label{the: mingtime}
%    Given a list of \(n\) bounded real numbers
%\(X=\{x_{1},x_{2},\ldots,x_{n}\}\) in the range of the interval
%\([L,U]\) and an integer \(k\) \((k<n)\), the \(k\)maximal Gini

These theorems show that, instead of comparing all $\binom{n}{k}$ subsets with $k$ elements, we can find  $k$-maximal and $k$-minimal Gini subsets by removing  consecutive elements of the sorted list $X$. Thus, it is sufficient to analyze these structured deletions rather than every possible subset of size $k$.
In particular, the \(k\)-maximal Gini subset can be computed in \(\mathcal{O}(n-k)\) steps and the $k$-minimal Gini subset can be computed in \(\mathcal{O}(k)\) steps.  As a result, the bounds for smooth sensitivity presented in Section \ref{subsec: suuinc} and \ref{subsec: suudec} can be computed in  $\mathcal{O}(k(n-k))$ steps.

%\end{corollary}

%Given Theorem \ref{the: maxgconsecutive} and Theorem \ref{the: mingconsecutive}, and the formula of smooth upper bound of local sensitivity
%\begin{theorem}\label{cor: smoothupperboundtime}
%Given a list of $n$ bounded real numbers $X = \{x_1, x_2, \ldots, x_n\}$ in the range of interval $[L, U]$, the Smooth upper bound of the k-th local sensitivity of $X$ can be computed in $\mathcal{O}(k(n-k))$.
%\end{theorem}

\subsection{Smooth Sensitivity of Gini Index} \label{sec: ss_gini}
%The results after scaling in Equation \ref{eqa: cand3} and Equation \ref{eqa: cand4} are the same.

Putting it all together, we can find  $A^{(k)}(X)$ for $k=1, \dots, n-1$  for use in the smooth sensitivity as defined in Section \eqref{sec:smooth}. 
\begin{theorem} \label{cor: ak_dp}
For a given $k \in \{1, \dots, n\}$, let $C_1$ be the value of the bound in \eqref{eqa: cand1} and $C_2$ be the value of the bound in \eqref{eqa: cand3}.  Then, 
\begin{align} \label{eq: tightbound}
    A^{(k)}(X) = 
\begin{cases}
\max(C_1 ,C_2) & \text{if } n\bar{x}_k - (U-L) > 0 \text{ and } \max(C_1 ,C_2) \leq1 \\
1 & \textit{otherwise}.  
\end{cases}
\end{align}
   
\end{theorem}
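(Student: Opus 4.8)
The plan is to show that the right-hand side of \eqref{eq: tightbound} is a valid (conservative) value for $A^{(k)}(X)=\max_{X':d(X,X')\le k}LS_{g}(X')$, and that using it in place of the exact $A^{(k)}(X)$ in \eqref{eq: smoothsensitivity} still yields a $\beta$-smooth upper bound on $LS_g$, hence preserves the DP guarantee. I would proceed in four steps. \emph{Step 1: reduce to four candidates.} Fix any $X'=X_{(kt)}$ with $d(X,X')\le k$. Then $LS_g(X')$ is the largest of $|\Delta g|$ over all single-element increments and decrements applied to $X'$. Applying Lemma \ref{lem: gchange_fixm} with $X'$ in place of $X$, together with its decrement analogue from Section \ref{subsec: suudec}, shows that for each fixed source rank $m$ the extremal values of $\Delta g(m,m',a)$ occur at $m'\in\{n,s_m\}$ for an increment and at $m'\in\{1,q_m\}$ for a decrement. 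Hence $LS_g(X')$ equals the maximum, over all $m$ and all admissible $a\in[0,U-L]$, of $|\Delta g|$ evaluated on the four families $(m,n)$, $(m,s_m)$, $(m,1)$, $(m,q_m)$.

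\emph{Step 2: bound the four candidates uniformly.} This is exactly the computation in \eqref{eqa: cand1} and \eqref{eqa: cand3}: split each $\Delta g$ by the triangle inequality into an $a$-term and a summation term; use $0\le a\le U-L$; replace the summation terms $\sum_{i=m}^n(x_{tm}-x_{ti})$ and $\sum_{i=q_m}^n(x_{tn}-x_{ti})$ by the crude bounds $n(\max\bar{x}_{(k)}-L)$ and $n(U-\min\bar{x}_{(k)})$, which removes any need to locate $s_m,q_m,m_I,m_D$; and substitute $g(X')\in[\min g_{(k)},\max g_{(k)}]$ and $\bar{x}_t\in[\min\bar{x}_{(k)},\max\bar{x}_{(k)}]$ in the remaining prefactors and denominators. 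This gives $LS_g(X')\le\max(C_1,C_2)$ for every $X'$ with $d(X,X')=k$. Because changing fewer than $k$ elements is a special case, and because $\min g_{(k)},\min\bar{x}_{(k)}$ are non-increasing while $\max g_{(k)},\max\bar{x}_{(k)}$ are non-decreasing in $k$ (permitting more changes only widens these ranges), both $C_1$ and $C_2$ are non-decreasing in $k$, so the same bound controls $LS_g(X')$ for all $d(X,X')\le k$; that is, $A^{(k)}(X)\le\max(C_1,C_2)$ whenever the right-hand sides are well defined. The minima and maxima of the Gini index and of the mean that enter $C_1,C_2$ are produced by Theorems \ref{the: maxgconsecutive} and \ref{the: mingconsecutive} (and the corresponding statements for the mean), i.e., by Algorithms \ref{alg:fastming} and \ref{alg:fastmaxg}.

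\emph{Step 3: case split and validity.} The denominators appearing in \eqref{eqa: cand3} are of the form $n\bar{x}_t-a$, whose worst case is $n\bar{x}_k-(U-L)$, where $\bar{x}_k$ denotes $\min\bar{x}_{(k)}$; when this quantity is $\le 0$ the bound $C_2$ is vacuous or negative, and when $\max(C_1,C_2)>1$ the bound is weaker than the trivial one. In both situations I fall back on $GS_{g(Z)}=1$, which Section \ref{sec: bg_and_def} established as a bound on $|g(D)-g(D_1)|$ for every pair of neighbors, so $A^{(k)}(X)\le 1$; this is the second branch of \eqref{eq: tightbound}, while in the remaining branch $\max(C_1,C_2)\le 1$ is the operative bound. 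Finally, I would check that replacing the exact $A^{(k)}(X)$ by this larger, computable quantity keeps \eqref{eq: smoothsensitivity} a $\beta$-smooth upper bound in the sense of Definition \ref{def: smoothbound}: condition (1) follows from the $k=0$ instance, which bounds $LS_g(X)$ itself, and condition (2) follows from the monotonicity of the bound in $k$ noted in Step 2. Thus the quantity computed in Algorithm \ref{alg:ssgini} is legitimate for pure DP.

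\emph{Main obstacle.} The delicate part is the uniformity demanded in Step 2. The analyses of Sections \ref{subsec: suuinc}--\ref{subsec: suudec} pin down the extremal source rank $m$ only under heuristics such as $n\bar{x}\gg U-x_1$, which can fail for a dataset $X_{(kt)}$ whose mean has been driven down by setting many entries to $L$. One therefore has to bound $|\Delta g(m,m',a)|$ with no control over which $m$ is extremal, and the summation terms $\sum_{i=m}^n(x_{tm}-x_{ti})$ are what require care: extending their index range to $\{1,\dots,n\}$ and passing to $\min/\max\bar{x}_{(k)}$ is what makes the bound computable, but one must then verify that this relaxation is compatible with the denominators $n\bar{x}_t\pm a$ -- which is precisely what forces the denominator-sign case split of Step 3.
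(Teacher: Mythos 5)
Your proposal is correct and follows essentially the same route as the paper: reduce the local sensitivity of any $X_{(kt)}$ to the four candidate pairs $(m,m')$ from Lemma \ref{lem: gchange_fixm} and its decrement analogue, bound these uniformly by $C_1$ and $C_2$ via the relaxations in \eqref{eqa: cand1} and \eqref{eqa: cand3}, and fall back to the global sensitivity $1$ when the denominator condition $n\bar{x}_k-(U-L)>0$ fails or the bound exceeds $1$. You are additionally (and correctly) more careful than the paper in two places — reading the ``$=$'' in \eqref{eq: tightbound} as a computable upper bound rather than an exact value, and supplying the monotonicity-in-$k$ argument needed for the resulting quantity to remain a $\beta$-smooth upper bound — neither of which changes the approach.
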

%\textcolor{red}{Maybe we should move this to section 4.6?} \textcolor{blue}{Since this section is the main result closely follows section 4.1 and section 4.2, I moved it from 4.2(before) to 4.3} 

%Computing the smooth upper bound using  Corollary \ref{cor: ak_dp}, while much faster than the brute force method, can b relatively large time if we set a tight 
As noted at the end of Section \ref{sec:alg}, there may be contexts where analysts wish to compute some $A^{(k)}(X)$ with fewer computations, for example, when $n$ is very large.
%privacy budget $\epsilon$ and have a large sample size $n$. 
Theorem \ref{thm: ss_relaxing} presents an alternative upper bound that avoids computation of $\min g_{(k)}$ and $\max g_{(k)}$.
\begin{theorem} \label{thm: ss_relaxing}
Let $IQ(X) =\frac{U-L}{\bar{x}}$.  Then, for $k=1, \dots, n$, an upper bound on $A^{(k)}(X)$ is 
\begin{align}\label{eq: relaxingbound}
A^{(k)}(X) = 
\begin{cases}
\frac{2}{n(\frac{1}{IQ(X)}-\frac{k}{n})-1} & \text{if } \frac{1}{\frac{1}{IQ(X)}-\frac{k}{n}}\leq \frac{U-L}{L} \text{ and } \frac{2}{n(\frac{1}{IQ(X)}-\frac{k}{n})-1} <1\\
1 & \textit{otherwise.}  
\end{cases}    
\end{align}
\end{theorem}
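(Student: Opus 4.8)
The plan is to obtain \eqref{eq: relaxingbound} by coarsening the (already conservative) bounds that underlie Theorem~\ref{cor: ak_dp}. Recall that, in the regime where $n\bar x_k-(U-L)>0$ and the resulting value is at most one, that theorem bounds $A^{(k)}(X)$ by $\max(C_1,C_2)$, with $C_1,C_2$ the quantities in \eqref{eqa: cand1} and \eqref{eqa: cand3}, and by $1$ otherwise. Since the Gini index always lies in $[0,1]$, we have $A^{(k)}(X)\le 1$ unconditionally, which supplies the ``otherwise'' branch of \eqref{eq: relaxingbound}; likewise, whenever the first expression in \eqref{eq: relaxingbound} is at least one we may fall back on $1$. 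So it suffices to show that, when $\tfrac{1}{1/IQ(X)-k/n}\le (U-L)/L$ and the first expression in \eqref{eq: relaxingbound} is below one, both $C_1$ and $C_2$ are at most $\tfrac{2}{n(1/IQ(X)-k/n)-1}$.

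The first step removes the dependence on the Gini extrema, which is the whole point of the relaxation: I replace $\min g_{(k)}$ by $0$ and $\max g_{(k)}$ by $1$ throughout \eqref{eqa: cand1} and \eqref{eqa: cand3}. The second step replaces the perturbed averages $\min\bar x_{(k)},\max\bar x_{(k)}$ by quantities involving only $\bar x,L,U,n,k$: since each of the $k$ replaced coordinates already lies in $[L,U]$ and is reset to a value in $[L,U]$, the total moves by at most $k(U-L)$, so $n\bar x-k(U-L)\le n\bar x_{(k)}\le n\bar x+k(U-L)$, and of course $L\le\min\bar x_{(k)}$ and $\max\bar x_{(k)}\le U$. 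A subtlety worth flagging is that in the ``summation'' terms of \eqref{eqa: cand1} and \eqref{eqa: cand3} the perturbed average occurs in both numerator and denominator; rather than bounding the two separately, I will write each such term as a monotone function of $\bar x_{(k)}$ and substitute the relevant extreme. After these substitutions, using $1/IQ(X)=\bar x/(U-L)$ to rewrite the target as $\tfrac{2}{n(1/IQ(X)-k/n)-1}=\tfrac{2(U-L)}{n\bar x-(k+1)(U-L)}$, every remaining fraction has the form $c(U-L)/(n\bar x-k'(U-L))$ with $c\le 2$ and $k'\le k+1$, so the desired inequalities reduce to elementary facts such as $\tfrac{n-2}{n-1}<1$ and $\tfrac{n}{n-1}\le 2$ together with positivity of $n\bar x-(k+1)(U-L)$. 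The hypothesis $\tfrac{1}{1/IQ(X)-k/n}\le (U-L)/L$ rearranges to $n\bar x-k(U-L)\ge nL>0$, which is exactly what keeps every denominator positive and absorbs the residual $n/(n-1)$-type factors; when it fails, \eqref{eq: relaxingbound} returns $1$ and there is nothing to prove.

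The main obstacle will be the bookkeeping in the last step: one must determine, in each parameter regime, which of the (up to four) candidate $(m,m')$ expressions behind $C_1$ and $C_2$ is the active maximand, and then check that the coarse substitutions above never inflate it past the single clean target $\tfrac{2(U-L)}{n\bar x-(k+1)(U-L)}$. The tight cases are small $n$, $k$ near $n$, and datasets whose mean $\bar x$ is close to $U$, where the ``summation'' term comes nearest to the target and the monotone-in-$\bar x_{(k)}$ argument together with the hypothesis must be invoked carefully; elsewhere the inequalities have ample slack. Finally I would note that \eqref{eq: relaxingbound} requires nothing beyond $\bar x$---in particular no call to \textsf{FastMinG} or \textsf{FastMaxG}---which is precisely why it is cheaper to evaluate for each $k$.
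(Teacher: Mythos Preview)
Your proposal is correct and follows essentially the same route as the paper: start from the $\max(C_1,C_2)$ bound of Theorem~\ref{cor: ak_dp}, replace the Gini extrema by the trivial bounds $0$ and $1$, and control the perturbed means via $n\bar x-k(U-L)\le n\min\bar x_{(k)}$ and $n\max\bar x_{(k)}\le n\bar x+k(U-L)$ (together with $L\le\bar x_{(k)}\le U$), then rewrite the target as $\tfrac{2(U-L)}{n\bar x-(k+1)(U-L)}$ and check positivity of the denominators via the hypothesis. The paper compresses all of this into the single intermediate inequality $\max(C_1,C_2)\le \tfrac{2}{n/IQ(X_{(kt)})-1}$ followed by the bound on $IQ(X_{(kt)})$; your version simply unpacks the same steps. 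One small simplification: you do not need to decide which of the four candidate terms behind $C_1,C_2$ is the active maximand---just bound each of them separately by the common target, which avoids the case analysis you anticipate.
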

\begin{proof}
Let $X_{(kt)}$  be any ordered dataset of $n$ elements that satisfies $d(X,X_{(kt)})=k$. 
Let $C_1$ and $C_2$ be the values in \eqref{eqa: cand1} and \eqref{eqa: cand3}, respectively, computed with $X_{(kt)}$. Since $0 \leq \max g_{(k)} \leq 1$, we have 
%can pass these constraints to the formulas for $C_1$ and $C_2$ to obtain 
%we have \textcolor{red}{I think it would be good to add some explanation of why the next equation holds, for example, by examining each of the four possibilities from the $C_1$ and $C_2$.}
\begin{equation}
\max(C_1, C_2) \leq \frac{2}{\frac{n}{IQ(X_{(kt)})}-1}.\label{eq:maxc1c2}
\end{equation}
It is obvious that $\min \bar{x}_{(k)} = (\sum^{n-k}_{i=1}x_i + kL)/n \geq \bar{x}-k(U-L)/n$ and $\max \bar{x}_{(k)} = (\sum^{n}_{i=k+1}x_i + kU)/n \leq \overline{x}+k(U-L)/n$. Since $L \leq \bar{x}_t  \leq U$, we have
\begin{equation}
\frac{1}{\frac{1}{IQ(X)}+\frac{k}{n}}\leq IQ(X_{(kt)}) \leq min\left(\frac{U-L}{L},\frac{1}{\frac{1}{IQ(X)}-\frac{k}{n}}\right).\label{eq:IQ}
\end{equation}
The result in \eqref{eq:IQ} implies \eqref{eq:maxc1c2}.
%$\max(C_1 ,C_2) \leq \frac{2}{n(\frac{1}{IQ(X)}-\frac{k}{n})-1}.$
\end{proof}
%Thus, we can get the smooth upper bound of $k$-local sensitivity by taking the maximum of the $\Delta g$ smooth upper bound in the four cases we showed before.

The difference between the bounds in Theorem \ref{cor: ak_dp} and in Theorem \ref{thm: ss_relaxing} is of the order $\mathcal{O}(\frac{1-g_0}{n/IQ(X)-1})$. In practical situations, we expect this difference to be small, since the denominator typically is much larger than the numerator. It is theoretically possible, however, for the discrepancy to be large. 
To illustrate, if $n=2$ with  
$x_1 = 0.5$ and $x_2=1$, and $L=0.5$ and $U=1$, the difference between the bounds is $0.35$.  
The bounds in Theorem \ref{cor: ak_dp} imply smooth sensitivity, as stated formally in Theorem \ref{lem: accuracy_cauchy}.

%Based on the heavy-tail family mechanism (\ref{eq: mechanism}), we give the sample complexity of Algorithm \ref{alg:ssgini_relax}.

%Using Theorem \ref{thm: ss_relaxing}, we can satisfy sm
\begin{theorem} \label{lem: accuracy_cauchy}
    Let $\epsilon>0$ and  $\eta \in (0,1)$. 
    %\textcolor{red}{Do we need to restrict $\epsilon$ to be less than one?} \textcolor{blue}{no, we only need $\epsilon>0$} 
    Define $A^{(k)}(X)$ as in  \eqref{eq: relaxingbound} and fix $\gamma > 1$ in \eqref{eq: mechanism}. Let $\beta = \frac{\epsilon}{\gamma}$. %\textcolor{red}{Can we write the proof for a general $\gamma$ rather than only for $\gamma=2$?}. \textcolor{blue}{changed} 
    If
    \begin{equation}
    n \geq \max \big(2IQ(X)(\frac{2}{\epsilon \eta}+1),\frac{2IQ(X)\gamma}{\epsilon}ln(\frac{1}{\epsilon \eta})\big)
    \end{equation}
    then
    \begin{equation}
    S(X) = \max_{k= 1, \dots, n}e^{(-\beta k)}A^{(k)}(X) \leq \epsilon \eta.
    \end{equation}
\end{theorem}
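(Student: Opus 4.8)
The plan is to split the index set $\{1,\dots,n\}$ at the threshold $k_0:=\bigl\lceil(\gamma/\epsilon)\ln(1/(\epsilon\eta))\bigr\rceil$, controlling $e^{-\beta k}A^{(k)}(X)$ by the exponential factor for $k\ge k_0$ and by the size of $A^{(k)}(X)$ itself for $k<k_0$. First I would dispose of the degenerate case $\epsilon\eta\ge 1$: the value produced by \eqref{eq: relaxingbound} never exceeds $1$ and $e^{-\beta k}\le1$, so $S(X)\le1\le\epsilon\eta$ trivially. Hence assume $\epsilon\eta<1$, which yields the useful fact $2/(\epsilon\eta)>2$, and abbreviate $m:=n/IQ(X)=n\bar x/(U-L)$, so that the first branch of \eqref{eq: relaxingbound} has value $2/(m-k-1)$.

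For $k\ge k_0$ I would use only $A^{(k)}(X)\le1$: since $\beta=\epsilon/\gamma$ and $\beta k_0\ge\ln(1/(\epsilon\eta))$, we get $e^{-\beta k}A^{(k)}(X)\le e^{-\beta k_0}\le\epsilon\eta$. This half needs nothing about $n$.

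For $1\le k\le k_0-1$ the goal is $A^{(k)}(X)\le\epsilon\eta$, and the crux is the estimate $m-k_0\ge 2/(\epsilon\eta)$. To obtain it, set $P:=(\gamma/\epsilon)\ln(1/(\epsilon\eta))$ (so $k_0\le P+1$) and $R:=2/(\epsilon\eta)+1$; dividing the hypothesis through by $IQ(X)$ says precisely that $m\ge\max(2P,2R)$, and since $2\max(P,R)\ge P+R$ (a maximum is at least an average) we conclude $m\ge P+R\ge(k_0-1)+2/(\epsilon\eta)+1=k_0+2/(\epsilon\eta)$ — this is exactly where the factor $2$ in the hypothesis is consumed. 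Consequently, for every $k\le k_0-1$ we have $m-k-1\ge m-k_0\ge 2/(\epsilon\eta)>2$, so the value condition $2/(m-k-1)<1$ in \eqref{eq: relaxingbound} holds; the remaining structural condition $1/(1/IQ(X)-k/n)\le(U-L)/L$ is vacuous when $L=0$ (the setting used for income here) and holds on this range of $k$ in general. Hence \eqref{eq: relaxingbound} is in its first branch and $A^{(k)}(X)=2/(m-k-1)\le 2/(m-k_0)\le\epsilon\eta$, using that $m-k-1\ge m-k_0>0$.

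The two halves cover $\{1,\dots,n\}$: for $1\le k\le\min(n,k_0-1)$ we have $e^{-\beta k}A^{(k)}(X)\le A^{(k)}(X)\le\epsilon\eta$, and for $k_0\le k\le n$ we have $e^{-\beta k}A^{(k)}(X)\le\epsilon\eta$; taking the maximum gives $S(X)\le\epsilon\eta$. The step I expect to demand the most care is the bookkeeping around the transition of \eqref{eq: relaxingbound} to the constant value $1$: one must verify that the small-$k$ range stays strictly inside the first branch (the inequality $m-k-1>2$ does this) and that the $(U-L)/L$ condition is never triggered there, and one must align the ceilings with the inequality $2\max(P,R)\ge P+R$ so that exactly the stated lower bound on $n$ suffices.
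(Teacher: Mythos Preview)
Your argument is correct, but it takes a genuinely different route from the paper's. The paper splits at the \emph{data-dependent} threshold $k^\star=n/(2IQ(X))$: for $k\le k^\star$ one has $1/IQ(X)-k/n\ge 1/(2IQ(X))$, so $A^{(k)}(X)\le 2/(n/(2IQ(X))-1)\le\epsilon\eta$ using only the first hypothesis; for $k>k^\star$ one has $e^{-\beta k}\le e^{-\epsilon n/(2\gamma IQ(X))}\le\epsilon\eta$ using only the second hypothesis. Thus each hypothesis handles one half, and the split point is chosen precisely so that the formula for $A^{(k)}$ simplifies. Your split at $k_0=\lceil(\gamma/\epsilon)\ln(1/(\epsilon\eta))\rceil$ is \emph{parameter-dependent}: the large-$k$ half is then free (no condition on $n$ is needed), but the small-$k$ half requires both hypotheses, which you combine through the neat observation $2\max(P,R)\ge P+R$. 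Your approach makes transparent exactly where the factor $2$ in the hypothesis is spent; the paper's is a bit shorter because the split point does the combining implicitly.

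One caution you correctly flag but do not resolve: the assertion that the structural condition $1/(1/IQ(X)-k/n)\le(U-L)/L$ ``holds on this range of $k$ in general'' is not justified for $L>0$ (take $\bar x$ close to $L$). The paper's proof likewise silently treats $A^{(k)}(X)$ as the first-branch expression without checking this condition, so this is a gap you share with the paper rather than one introduced by your alternative split; it disappears in the intended income setting $L=0$.
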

\begin{proof}
    First, suppose $k \leq \frac{n}{2IQ(X)}$. Since $n \geq 2IQ(X)(\frac{2}{\epsilon \eta}+1)$, we have 
    \begin{equation}
    \exp(-\beta k)A^{(k)}(X) \leq \max_{k= 1,...,n}A^{(k)}(X) =\max_{k= 1,...,n}\frac{2}{n(\frac{1}{IQ(X)}-\frac{k}{n})-1} \leq \frac{2}{\frac{n}{2IQ(X)}-1} \leq \epsilon \eta.
    \end{equation}
    Now suppose $k > \frac{n}{2IQ(X)}$. Then, $\exp(-\beta k ) \leq \exp(-\frac{\beta n}{2 IQ(X)})=\exp(-\frac{\epsilon n}{2\gamma IQ(X)})$. Since $n \geq \frac{2IQ(X)\gamma}{\epsilon}ln(\frac{1}{\epsilon \eta})$ and $A^{(k)}(X) \leq 1$ for all $X$ and $k$, we have
    \begin{equation}
    \max_{k= 1, \dots, n}\exp(-\beta k)A^{(k)}(X) \leq \max_{k= 1, \dots, n} \exp(-\beta k) \leq \exp{-\frac{\epsilon n}{2 IQ(X)\gamma}\leq \epsilon \eta}.
    \end{equation}
\end{proof}
%\textcolor{red}{What is U(S)?}\textcolor{blue}{sry its S(X)}

\subsection{Computational Speedups}\label{sec:speedups}

%Specifically, we 
We use two computational strategies to make the algorithms run more efficiently, namely (i) pruning the search over values of \(k\) and (ii) maintaining auxiliary in-memory data structures.  The latter supports dynamic updates to the Gini index after modifying \(k\) records, thereby avoiding repeated computations using all of $X$.

With regard to {pruning the search over \(k\),} we first note that \(A^{(k)}(X)\le 1\).  Thus, 
        $e^{-\beta k}\,A^{(k)}(X)\;\le\;e^{-\beta k}$.
Let  $k^{\star}
        \;=\;
        \min\left(\,k\in\mathbb N \,\big|\, e^{-\beta k}\le A^{(0)}(X)\right)$.
    For every \(k \ge k^{\star}\) we  have
    \(e^{-\beta k}A^{(k)}(X)\le A^{(0)}(X)\), so no \(k>k^*\) can lead to a different maximization in \eqref{eq: smoothsensitivity}. Consequently, we can restrict the search range to \(0 \le k \le k^{\star}\).

 With regard to the dynamic Gini maintenance, a naive implementation of Algorithm~\ref{alg:fastming} and Algorithm~\ref{alg:fastmaxg} could slide a length-\(k\) window across $(x_1, \dots, x_n)$ and re-sort the records after advancing each window. This would result in a cost of \(\mathcal{O}\!\bigl((n-k)\,k\log n\bigr)\).  %\footnote{A complete sort is required if one follows the textbook “sort-then-Gini’’ computation.} 
  However, this is not necessary.  Only the $k$ elements in the window can change, while the remaining \(n-k\) items keep their original ranks. We  leverage this fact in the search strategy  
  %recomputing certain quantities %in  to save computations, we can 
%  perform a ternary search to find the Gini index of minimizing the Gini as in 
 in  Algorithm \ref{alg:fastming}.  
 %In Algorithm \ref{alg:fastmaxg}
 %\textcolor{red}{What do we mean by "target"}  \textcolor{red}{what do mean by irrelevant half} \textcolor{blue}{fixed, this is a common used trick, I feel it is better let reader directly read how the algorithm do tenary search}, 
  %   yielding $\mathcal{O}\bigl( log(n-k) \bigr)$ time. 
     %\textcolor{red}{can you be  more precise about what a bidirectional scan is?} instead of sliding the entire wi
    % After changing the order of the dataset, calculating the Gini index of the new dataset requires reordering the whole dataset, which costs $\mathcal{O}(log(n))$ times. 
  We also save time by not recomputing certain statistics that stay fixed.  In particular, we  maintain the order statistics $P_i = \sum_{t<i}x_t$, $C_i = \sum_{t<i}(2t-n-1)x_t$, and $R_i = \sum_{t<i}tx_t$ defined in Algorithm \ref{alg:fastmaxg} and Algorithm \ref{alg:fastming}.   Finally, we compute the Gini index efficiently using these stored statistics; see the supplementary material for the expression.

\section{Experiments}\label{sec:experiments}
In this section, we illustrate and evaluate the algorithm for generating a DP version of the Gini index using an open-source dataset comprising genuine incomes (Section \ref{subsec: realdataset}) and  hypothetical data with fixed values of $g(X)$ (Section \ref{subsec: sytheticdataset}). %Because the relaxed and tighter bounds in our algorithm are of the same asymptotic order and differ only by constant factors, 
Because we evaluate the repeated sampling properties of the algorithm, we use the 
%speedups described in Section \ref{sec:speedups} and the
bounds from Theorem \ref{thm: ss_relaxing}  to facilitate efficient computation.

%Since Algorithm 

% In Section \ref{subsec: realdataset}, we describe the income data source and preprocessing for the income dataset. \ref{subsec: result_real} reports algorithm performance across different privacy budgets $\epsilon$ and show the posterior distribution of a DP gini release. Section \ref{subsec: sytheticdataset} shows the lowest attainable bound of the normalized range of data sets with a fixed Gini index. Section \ref{subsec: result_simulated} evaluates the algorithm in the most favorable setting, where datasets have the minimum normalized range for a given Gini index, and then assesses robustness by testing noise magnitude across sample sizes and privacy budget. Then, we report the Root mean squared error (RMSE) of the DP Gini estimator across different normalized ranges under different privacy budgets.  

\subsection{Illustrations Using Genuine Incomes} \label{subsec: realdataset}
 
 As the data $Z$, we use income values from the 2024 Current Population Survey public use file \citep{cps_asec_2024_puf_csv}. This reports calendar-year 2023 income (named ``PTOTAL'' on the file). We 
 %analyze the Person file, 
 restrict $Z$ to individuals aged 16 and older, resulting in $n=$ 115,777 individuals. %\textcolor{red}{IS it 11,577 or 115,777?}\textcolor{blue}{115,777}. 
 The resulting $g(Z) = 0.574.$ Note that we do not consider the survey weights when computing $g(Z)$; see Section \ref{Sec: Discussion} for further discussion of survey weights.
 
 %retain records with positive person weights (the ASEC person weight indicating how many people in the population each record represents), and 
% We use the variable PTOTVAL as total person income.

\begin{figure}[t]
    \centering
    \includegraphics[width=0.8\linewidth]{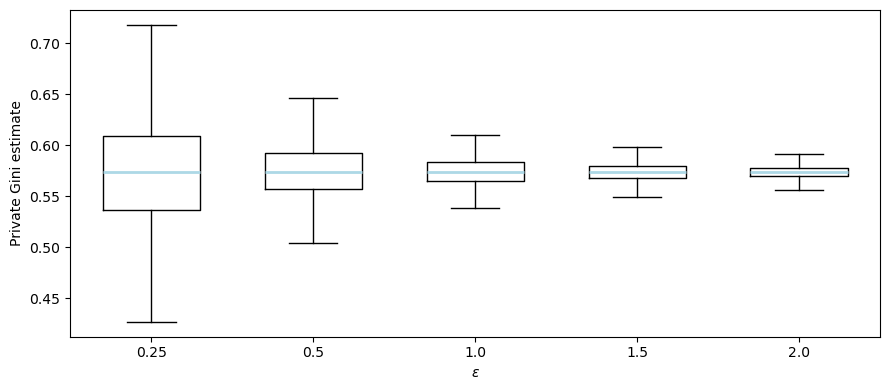}
    \caption{Distribution of the DP Gini estimator under various privacy budgets using the Current Population Survey data as $Z$. Box plots summarize 1,000,000 independent draws of $\tilde{g}(Z)$.}
    \label{fig:realdata}
\end{figure}

\begin{figure}
    \centering
    \includegraphics[width=0.6\linewidth]{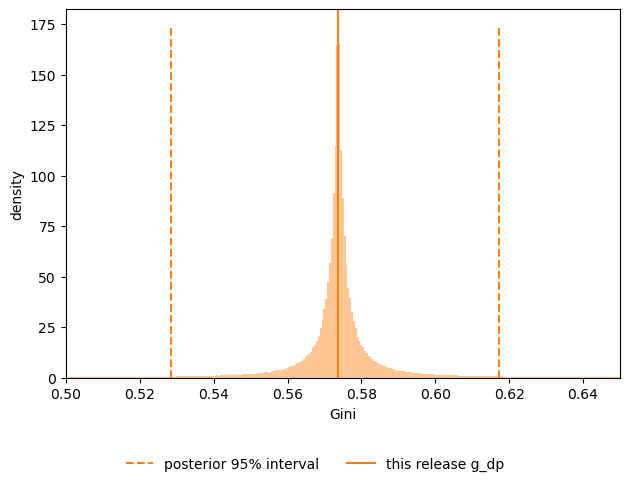}
    \caption{Posterior distribution $p(g| \tilde{g})$ when $\tilde{g}(Z)=0.574$.  Here, $\tilde{g}(Z)$ derives from one application of the DP algorithm where $Z$ is the Current Population Survey data and
    %a single  release for the real income dataset (with $g = 0.5737$) in Section \ref{subsec: realdataset} 
     $\epsilon = 2$.  Dashed vertical lines indicate central 95\% posterior credible intervals. }
     %The posterior concentrates around the observed release while remaining within \([0,1]\), reflecting the \(\mathrm{Uniform}(0,1)\) prior and the noise distribution induced by the DP mechanism.}
    \label{fig:glvssm}
\end{figure}

%To show the performance of our algorithm on the CPS data dataset, we
We apply the DP Gini index algorithm with privacy budgets $\epsilon \in \{0.25, 0.5, 1, 1.5, 2\}$. We set the lower bound on incomes $L=0$.  Since we do not know an upper bound  for these incomes, we use an additional $0.15$ 
%\textcolor{red}{Are we pulling away from or adding to the total privacy bidget?} \textcolor{blue}{add to total budget} 
of privacy budget, i.e., under composition increase the total $\epsilon$ by 0.15, to estimate a value that exceeds the maximum in $Z$, $x_n =$ 2,108,379,  with high probability.  To do so, we modify the AboveThreshold algorithm in \citet{durfee2023unbounded}. As originally designed, this algorithm iteratively guesses at the maximum value in a dataset.  To begin the algorithm, we generate a differentially private version $\tilde n$ of $n$, which we do via the Laplace mechanism with privacy budget $\epsilon_1 = 0.075$. The algorithm then starts with some initial guess at the maximum and counts the number of observations in $X$ that exceed that guess, adding a modest amount of noise to the count to ensure differential privacy.  We add this noise via the Laplace mechanism with privacy budget $\epsilon_2 = 0.075$. The algorithm stops at the first instance the noisy count exceeds $\tilde n$.
%When the noisy count exceeds $\tilde n$, the algorithm increases the value of the guess and repeats.  It stops the first time the noisy count is less than $\tilde n$.  
The value of the guess at this instance is the differentially private  estimate of the maximum value in $X$, $\tilde{x}_n$.  
%\textcolor{red}{Wenjie: please confirm this is true.} \textcolor{blue}{The algorithm scans the ladder by starting at $i = 0$ and, for each step $i$, computing the cutoff $t_i = \beta^i + L - 1$, then the true count $f_i(x) = \#\{x_j \leq t_i\}$, then the noisy count noisy\_count$_i = f_i(x) + \text{Lap}(4/\varepsilon_i)$. it also samples a noisy threshold $T_{\text{flat}} = T + \text{Lap}(2/\varepsilon_1)$, where $T = n$ is the target width (with $n = 115{,}777$ individuals). The stopping rule is: halt at the first index $k$ such that noisy\_count$_k \geq T_{flat}$. that $k$ determines DP maximum estimate via $t_k$, and then our output $2.5 \cdot t_k$ as a private estimate of the maximum.}
%proposes increasingly larger values of the maximum.creates bounds scans a geometric ladder \(\mu_i=\ell-1+\rho^i\) for the candidate upper bound, and halts at the first index \(i\) where the noisy count \(\hat\mu_i = \mu_i+\mathrm{Lap}(4\Delta/\varepsilon_2)\) exceeds a noisy decision count (since we are counting the upper bound, we set to $n$) \(\hat{n}=n+\mathrm{Lap}(2\Delta/\varepsilon_1
%the hyperparameters \(\rho,\varepsilon_1,\varepsilon_2\) control, respectively, the ladder granularity and the noise magnitudes on the threshold and each query, while \(\Delta\) is the query sensitivity (for counts, \(\Delta=1\)). In our setting \(\varepsilon_1=\varepsilon_2=0.075\), the threshold noise scale is \(\mathrm{Lap}(26.67)\) and the per-query noise scale is \(\mathrm{Lap}(53.33)\). 
%At the step just below the true maximum, 

The AboveThreshold algorithm can produce an $\tilde{x}_n < x_n$. Setting $U$ equal to such an $\tilde{x}_n$ would require clipping any $x_i>\tilde{x}_n$ at $\tilde{x}_n$, which would introduce bias in the differentially private Gini index. To reduce the risk of using a $U < x_n$, we use a post-processing step and set $U = 2.5 \tilde{x}_n$.   This represents a conservative inflation, in that  $\tilde{x}_n$ generated by our implementation of the AboveThreshold algorithm with $(\epsilon_1, \epsilon_2)$ is highly unlikely to underestimate $x_n$ by 60\%, per the results in  \citet{durfee2023unbounded}. 
%Using the two Laplace mechanisms with $\epsilon_1$ and $\epsilon_2$, as well as other parameters of the AboveThreshold mechanism, one can show that multiplying by 2.5 ensures the resulting value has at least a 99\% chance of exceeding $x_n$. \textcolor{red}{Wenjie: can you make this a better description?  I could not follow your explanation in the paragraph that I commented out below.}  
 Across all of our simulation runs, 
 %the range of 
 $2.5 \tilde{x}_n$ always exceeds $x_n$ with a maximum of approximately %2,427,000 to 
 3,640,000.
 %, all well above $x_n$.  
 We run this modified AboveThreshold algorithm to set $U$ in  each simulation run.

For each of the five values of $\epsilon$, we generate  1,000,000 values of $\tilde{g}(Z)$ using the algorithms in Section \ref{sec:alg}. As the DP mechansim, we use \eqref{eq: mechanism} with $\gamma = 2$. As evident in Figure \ref{fig:realdata},  
%$95\%$ interval ($2.5th-97.5th$ percentiles). \textcolor{red}{I don't think Figure 1 shows interval estimates.  Isn't it showing the boxplots of the 1000000 simulated values?}  
%the draws of $\tilde{g}(Z)$ behave as expected.  The 
the DP Gini index is approximately unbiased, as  the averages of the draws of $\tilde{g}(Z)$ are centered at $g(Z)$. The unbiasedness results because
%unlike protecting confidentiality via topcoding, 
the DP mechanism does not clip the observed data (as long as $U>x_n$) and uses mean-zero noise. 
% \textcolor{red}{BUT, this is not true since we use a noisy estimate of $U$.  So we do clip, right? } 
%with respect to different $\epsilon$. 
As  $\epsilon$ increases, the variances of $\tilde{g}(Z)$ shrink. Even at $\epsilon = 0.25$, which is a strong privacy guarantee, we are likely to generate a $\tilde{g}(Z)$ within just a few points of $g(Z).$ 
%The dispersion of noise shrinks markedly as $\epsilon$ increases. 
%Consistent with this, the MSE declined by two orders of magnitude. These results indicate negligible bias and variance-dominated error that decreases as privacy is relaxed. 

Figure ~\ref{fig:glvssm} illustrates the posterior distribution $p(g \mid \tilde{g})$ for a single value of $\tilde{g}= 0.574$ when $\epsilon = 2$. The posterior distribution is computed using the Bayesian post-processing procedure in \eqref{rem: bayesdp}. As evident in the figure, the posterior distribution provides a sense of the uncertainty about $g(Z)$.
%The posterior distribution show posterior belief of the Gini index after bayesian post-processing in Remark \ref{rem: bayesdp}. 

% Then, we evaluate the tight bound and the relaxed bound from Algorithm~\ref{alg:ssgini} on the real income dataset. As discussed earlier, the tight bound produces a smaller noise magnitude; however, when the privacy budget is small and the sample size is large, the algorithm becomes computationally expensive. In contrast, the relaxed bound does not require calculating the maximum and minimum Gini values over $k$ elements, making it more efficient. 

\subsection{Illustrations Using Simulated Data}
  \label{subsec: sytheticdataset}

In this section, we evaluate the performance of the algorithm from Section \ref{sec:alg} on hypothetical datasets with pre-specified values of $g(X)\in\{0.2, 0.5, 0.7\}$, sample sizes $n \in \{10000, 100000, 1000000\}$, and privacy budgets $\epsilon \in\{0.5,1,2\}$.  We first describe how we generate the hypothetical datasets, followed by the results.

\subsubsection{Method for generating simulated data}

%As we implement smooth sensitivity using 
The bounds in Theorem \ref{thm: ss_relaxing} are based on $IQ(X)$, which requires values for $L$, $U$, and $\bar{x}$. Thus, when using these bounds in  simulations, 
%Many datasets can produce the same Gini index. Thus, as we work directly with $g(X)$ in the DP algorithms, we 
we need not generate individual records in $X$ that yield a $g(X) \in \{0.2, 0.5, 0.7\}$. 
%However, we do need to specify values of $(L, U)$ to compute $IQ(X)$.  To set these bounds, 
Instead, we find the minimum value of $IQ(X)$ that is consistent with a particular $g(X)$, as shown in Proposition \ref{prop: gini-range}.
%which relates the 
%To make use of use the smallest 
%normalized range $IQ(X) = (U-L)/\bar x$, to $g(X)$. 
%\textcolor{red}{We use R(X) in Section 4.  Use one notation.} \textcolor{blue}{changed}.
The proof of Proposition  \ref{prop: gini-range} is in the supplementary material.
%We use the absolute error for the y-axis. 

\begin{proposition}\label{prop: gini-range}
Let $X$ have $n\ge2$ non-negative elements with mean $\bar x =n^{-1}\sum_{i=1}^n x_i>0$.  For any $g\in[0,1)$,
\begin{equation}\label{boundcases}
\inf\{IQ(X): g(X)=g\,\}\;=\;
\begin{cases}
4g, & 0\le g\le \tfrac12,\\[4pt]
\dfrac{1}{1-g}, & \tfrac12< g<1.
\end{cases}
\end{equation}
\end{proposition}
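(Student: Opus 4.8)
The plan is to compute, for a fixed target value $g \in [0,1)$, the infimum over all admissible configurations of $X$ of the ratio $IQ(X) = (U-L)/\bar x$, where $U$ and $L$ are the permissible upper and lower bounds. Since $L = 0$ is allowed for non-negative data and only the ratio matters, I would first reduce the problem: without loss of generality set $L = 0$ and $U = \max_i x_i$ (shrinking $U$ toward $\max_i x_i$ only decreases $IQ(X)$, and enlarging $L$ toward $\min_i x_i$ likewise only decreases it), so that $IQ(X) = x_{\max}/\bar x$ where $x_{\max} = \max_i x_i$. Hence the question becomes: over all non-negative vectors $X$ with $g(X) = g$, how small can $x_{\max}/\bar x$ be? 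Equivalently, after rescaling so that $\bar x = 1$, how small can $x_{\max}$ be subject to the Gini index equaling $g$?

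Next I would set up the optimization. Fix $\bar x = 1$ and $x_{\max} = M$; I want to know for which $M$ there exists a feasible $X$ (non-negative, mean $1$, max $M$) with $g(X) = g$, and then take the infimum of such $M$. The key is to identify the range of achievable Gini values given $M$. Using the mean-difference form $g(X) = \frac{1}{2n^2\bar x}\sum_{i,j}|x_i - x_j|$, I would argue that for fixed mean and fixed maximum $M$, the Gini index ranges continuously over an interval $[0, g^*(M)]$ as the other coordinates vary (taking $n$ large and allowing a near-continuous distribution). The maximum achievable Gini $g^*(M)$ is attained by the most spread-out configuration consistent with mean $1$ and max $M$: put as much mass as possible at $0$ and the rest at $M$. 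If a fraction $1 - 1/M$ of the mass is at $0$ and $1/M$ at $M$, the mean is $1$; this two-point distribution gives $g = 1 - 1/M$. So configurations with max $M$ can realize any Gini value up to $1 - 1/M$. Therefore $g$ is achievable with $x_{\max}/\bar x = M$ iff $g \le 1 - 1/M$, i.e. iff $M \ge 1/(1-g)$. This already yields the infimum $1/(1-g)$ — but it is only valid when $1/(1-g) > 1$, i.e. always, so I need to double-check against the competing lower bound on $M$: the constraint $x_{\max} \ge \bar x$ forces $M \ge 1$. For $g \le 1/2$ we will see $4g$ is the binding value, not $1/(1-g)$, so something in the above reasoning about the achievable interval must be refined.

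The refinement — and this is the main obstacle — is that the two-point extremal distribution is not always optimal: with $n$ finite (or with discreteness/integer multiplicities), or more fundamentally because putting mass only at $\{0, M\}$ forces $g$ to be exactly $1 - 1/M$ rather than allowing smaller $g$ with the same $M$, I must instead consider: for fixed $M$, what is the \emph{full} set of achievable Gini values, and in particular can I realize a \emph{small} $g$ while keeping $M$ small? To get $g = g$ small with max $M$, the cheapest construction concentrates near the mean but needs at least one point at $M$; a natural candidate is a distribution with most mass at some value $c \le 1$ and a sliver at $M$. Carefully optimizing this (a two-parameter family: mass $p$ at $M$, mass $1-p$ at $c$, with $pM + (1-p)c = 1$) and computing its Gini as a function of $p$, $c$, $M$, then minimizing $M$ subject to hitting the target $g$, should produce the case split: for $g$ large ($> 1/2$) the binding configuration drives $c \to 0$ and recovers $M = 1/(1-g)$; for $g$ small ($\le 1/2$) a different vertex of the feasible region is active, giving $M = 4g$. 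I would verify the crossover at $g = 1/2$ where $4g = 2 = 1/(1-g)$, confirming continuity. The genuinely delicate part is proving these are \emph{infima} — i.e. that no cleverer multi-point configuration beats them — which I would handle by a convexity / majorization argument: for fixed mean and fixed max, the Gini functional is maximized at extreme points of the feasible polytope of distributions, and among those extreme points the relationship between $M$ and the attainable $g$ is exactly as computed; taking $n \to \infty$ shows the infimum is approached but (generically) not attained, matching the ``$\inf$'' in the statement.
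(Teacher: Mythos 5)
Your opening reduction is where the argument breaks. You observe that raising $L$ to $\min_i x_i$ and lowering $U$ to $\max_i x_i$ only decreases $IQ(X)$, but then you nevertheless fix $L=0$, so the quantity you go on to minimize is $x_{\max}/\bar x$ rather than the normalized range $(x_{\max}-x_{\min})/\bar x$. These are not interchangeable for this infimum: the $0\le g\le\tfrac12$ branch is achieved precisely by configurations whose smallest element is strictly positive. With $\bar x=1$, put half the points at $1-2g$ and half at $1+2g$; the mean is $1$, the Gini index is $g$, and the normalized range is $4g$, whereas $x_{\max}/\bar x=1+2g$, so forcing $L=0$ reports a value strictly larger than $4g$ for every $g<\tfrac12$. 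Your own machinery confirms the problem: you correctly show that with $\bar x=1$ and maximum $M$ the largest attainable Gini is $1-1/M$, so minimizing $M$ gives $1/(1-g)$ for \emph{every} $g$; and your proposed repair --- the family with mass $p$ at $M$ and $1-p$ at $c$, mean $1$ --- gives $M-c=(M-1)/(1-p)$ and hence Gini $p(M-1)$, so the minimal $M$ is again $1/(1-g)$. Meanwhile the asserted outcome ``$M=4g$'' is impossible for $g<\tfrac14$, since $M=x_{\max}/\bar x\ge 1$. The true case split is not between two vertices of your feasible region for $M$; it is between whether the symmetric two-point configuration $\{1-2g,\,1+2g\}$ with equal weights is feasible (lower atom non-negative, i.e.\ $g\le\tfrac12$) or whether the lower atom must be clamped at $0$, forcing unequal weights $\alpha=g$ and $U=1/(1-\alpha)$, whence $IQ=1/(1-g)$.

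The paper's proof works directly with the normalized range on two-point data: a fraction $\alpha$ at $L$ and $1-\alpha$ at $U$ gives $g=\alpha(1-\alpha)(U-L)$ when $\bar x=1$, so $U-L=g/\bigl(\alpha(1-\alpha)\bigr)\ge 4g$ with equality at $\alpha=\tfrac12$ provided $L=1-2g\ge0$; for $g>\tfrac12$ one sets $L=0$ and solves. If you redo your optimization with objective $(x_{\max}-x_{\min})/\bar x$ you will recover exactly this. Two smaller points: the claim that the infimum is ``approached but not attained'' is false for $g\le\tfrac12$ (it is attained already at $n=2$ with $X=(1-2g,\,1+2g)$); and, like the paper, you would still owe a genuine argument that two-point configurations are extremal among all $X$ with the given mean and range --- the ``convexity/majorization'' sentence is a placeholder, not a proof, though the paper's own write-up is no more complete on this point.
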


% When $0\le g\le 1/2$ and $n$ is even, 
%\textcolor{red}{What do we do when $n$ is odd?  Do we say the same result holds?} \textcolor{blue}{I reproofed this in the appendix, we can delete this paragraph} %the infimum is attained when 
%if and only if the empirical measure is two-point of the corresponding form:(i) 
%if $0\le g\le\tfrac12$, 
% exactly $n/2$ elements of $X$ equal $\,\bar x(1-2g)$ and $n/2$ elements of $X$ equal $\,\bar x(1+2g)$. When  $1/2<g<1$ and $n$ is even, the infimum is attained when exactly $gn$ elements of $X$ equal $0$ and $(1-g)n$ elements of $X$ equal $\bar x/(1-g)$; note that this requires $gn$ to be an integer. When the required frequencies are not integers, the bound in \eqref{boundcases}
%is the infimum and is approached by the nearest integer count.

Using Proposition~\ref{prop: gini-range}, we find the minimum $IQ(X)$ that corresponds to  each prescribed $g(X)$ and use that $IQ(X)$ in Theorem \ref{thm: ss_relaxing}.
%using that to set $U$ when $L=0$. %\textcolor{red}{Or, are we fixing $(L, U) = (0,1)$ and then setting $\bar{x}$ to match?}\textcolor{blue}{we can, just need to add one more step to normalize the simulations} 
%This facilitates
%investigation the effects of varying $g(X)$, $n$, and $\epsilon$ in simulations. 
%Admittedly, this is the most favorable setting for our algorithm. 
%Holding $R(X)$ fixed, the noise magnitude decreases as the sample size $n$ increases. We set $g\in\{0,2, 0.5, 0.7\}$ and $n=\{1e5,1e6,1e7\}$ with $r = \inf\{R(X):X\in D^n,g(X) = g\}$ to generate synthetic income datasets to show the sensitivity of $g$ and $\epsilon$.
We also examine the performance of the DP algorithm when $IQ(X)$ exceeds the minimum in \eqref{boundcases}. We perform the simulations setting  $g(X) = 0.5$ and $n=100000$. 
%\textcolor{red}{How do we sample these datasets?  Or do we just}
%and vary the normalized range $r$ to show how the noise magnitude changes as the normalized range increases.

\subsubsection{Results}

Figure~\ref{fig: diffgini} summarizes the absolute differences $|\tilde{g}(X) - g(X)|$  across 1,000,000  draws of  $\tilde{g}(X)$ when $IQ(X)$ is at its minimum value. 
%most favorable normalized range, the algorithm remains strong.
The absolute differences are typically small, even when $\epsilon = 0.5$. The absolute differences decrease as both $n$ and $\epsilon$ increase.  The absolute differences tend to be largest when $g(X) = 0.7$.  This arises mainly because the minimum $IQ(X)$ bound increases with $g(X)$, which results in greater noise variance. In particular, as $g(X)\!\to\!1$, 
%by Proposition \ref{prop: gini-range} 
we would expect $X$ to include a few very large values (i.e., near $U$) of $x_i$. Even small perturbations of these large values can inflate the local sensitivity.  These results suggest that the DP Gini index may require relatively large $\epsilon$ or $n$ to give useful estimates in cases where $g(X)$ is near one.  The results also suggest that, when implementing the DP Gini index algorithms,  setting $U$ and $L$ to make $IQ(X)$ as small as possible---which can be done in practice by allocating some privacy budget to estimate a noisy bound on $x_n$---can help reduce the magnitudes of the absolute differences.
%$g(X)$ typically lies between $0.2$ and $0.7$, 
%calibrating the noise level to the observed normalized range is therefore essential for maintaining utility.
%When the distribution is balanced, specifically, when only a few extreme people account for most of the income, the procedure is robust even for relatively large $g(X)$. In applied settings, a higher $g(X)$ usually comes with a larger normalized range.  
%When a few people dominate all income, it becomes the worst case discussed in Section~\ref{sec: bg_and_def}.
%In practice, $g(X)$ typically lies between $0.2$ and $0.7$, calibrating the noise level to the observed normalized range is therefore essential for maintaining utility.

%\subsection{Result}
%\subsubsection{Result from real Income dataset} \label{subsec: result_real}

%\subsubsection{Result on the simulated Income Datasets}
%\label{subsec: result_simulated}

\begin{figure}[t]
    \centering
    \includegraphics[width=1\linewidth]{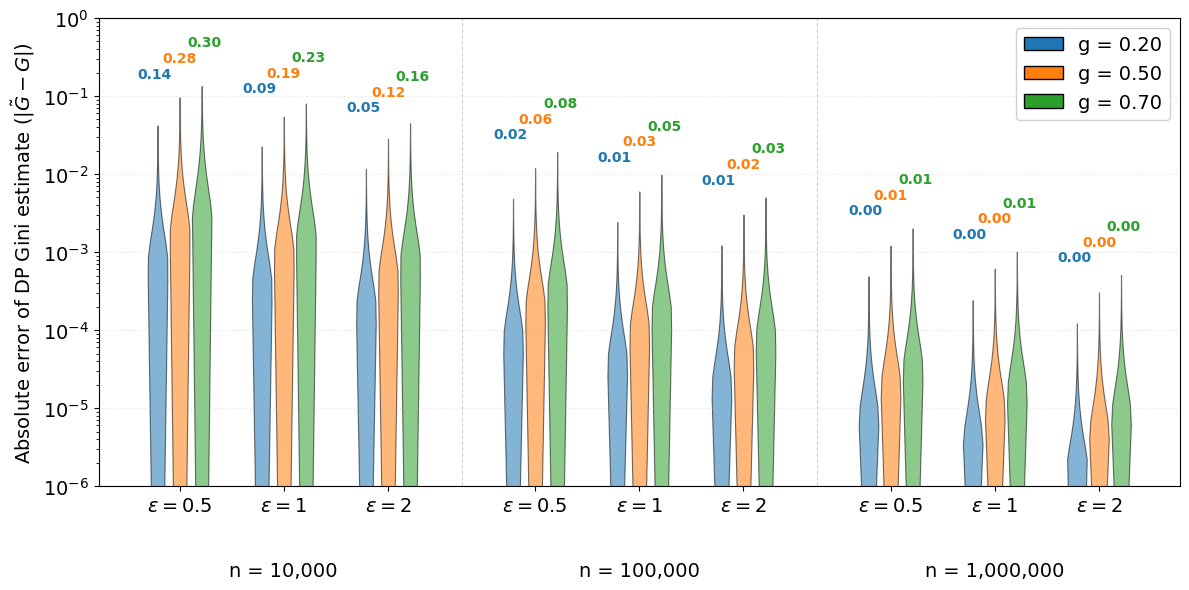}
    \caption{Simulated absolute errors in the differentially private Gini index $\tilde g(X)$ using hypothetical datasets. Each plot summarizes the absolute errors of 1,000,000 independent replications. Plots summarize up to the 99\% percentiles of the absolute errors.  The largest errors among the remaining 1\% are displayed on top of the plots.
    %\textcolor{red}{If we only show the 95\%, we are dropping all of the biggest errors.  This seems like it is too much favoring our method.  We need to find a way to summarize the final 5\% as well.  Maybe provide the largest absolute error in text (2 decimals) at the top of each plot?}\textcolor{blue}{but since the noise is from the Laplace mechanism, which is unbounded. It is very likely to be large through 1000000 independent iterations. I plot 99\% instead. }
    %\textcolor{red}{There are not any confidence intervals in the figure.  It is only absolute errors. Do you mean that you are plotting up to the 95\% percentile of the absolute errors?  If so, why not plot the distribution of all the errors?} \textcolor{blue}{I am plotting up to the 95\% percentile of the absolute errors. Since the y-axis is log scale, we can not plot the whole distribution. If we do not scale it, the difference between small errors will be small} 
    The vertical axis is scaled as $\log\!\big(\max\{|\tilde g(X)-g(X)|,10^{-6}\}\big)$ to distinguish  small errors.} 
%    Colors denote  $g(X)\in\{0.2,0.5,0.7\}$.}
    \label{fig: diffgini}
\end{figure}

Figure \ref{fig:diff_range} examines sensitivity to values of $IQ(X)$ that range from $2$ to $100$.  Note that $\inf\{IQ(X): g(X)=0.5\,\}=2$ when $n = 100000$. The root mean squared error (RMSE) increases steadily with $IQ(X)$, illustrating the potential downsides of using a large normalized range in the privacy mechanism, particularly when $\epsilon$ is small.
%with no abrupt blow-ups. 
%Nonetheless, the algorithm remains reasonably accurate even for large normalized ranges.
We also see that 
%the privacy–utility trade-off is favorable: modest 
increases in $\epsilon$ yield sizable reductions in RMSE across all $IQ(X)$. Thus, 
%Third, performance is stable across settings, making the procedure easy to tune $\epsilon$ for a target error level. Under most situations, to preserve informativeness for high-inequality datasets, one can allocate a larger privacy budget. 
for $X$ where the normalized range is expected to be high, it may be beneficial to increase the privacy budget, essentially trading privacy protection for analytic utility.  
%increasing the privacy budget preserves informativeness. However, in the pathological extreme where a small share of individuals holds most income (\(G\!\to\!1\) or very large \(r\)), the local, smooth, and global sensitivities coincide at their maximum (1 under our normalization), necessitating substantial noise and yielding a highly dispersed sampling distribution. 
Regardless, when $g(X)$ is close to 1, the normalized range is large, making the RMSE potentially large as well. We discuss a potential remedy for this issue in Section \ref{Sec: Discussion}.

%we discuss an approach that sacrifices unbiasedness to reduce variance by adapting the clipping strategy 

\begin{figure}[t]
    \centering
    \includegraphics[width=0.8\linewidth]{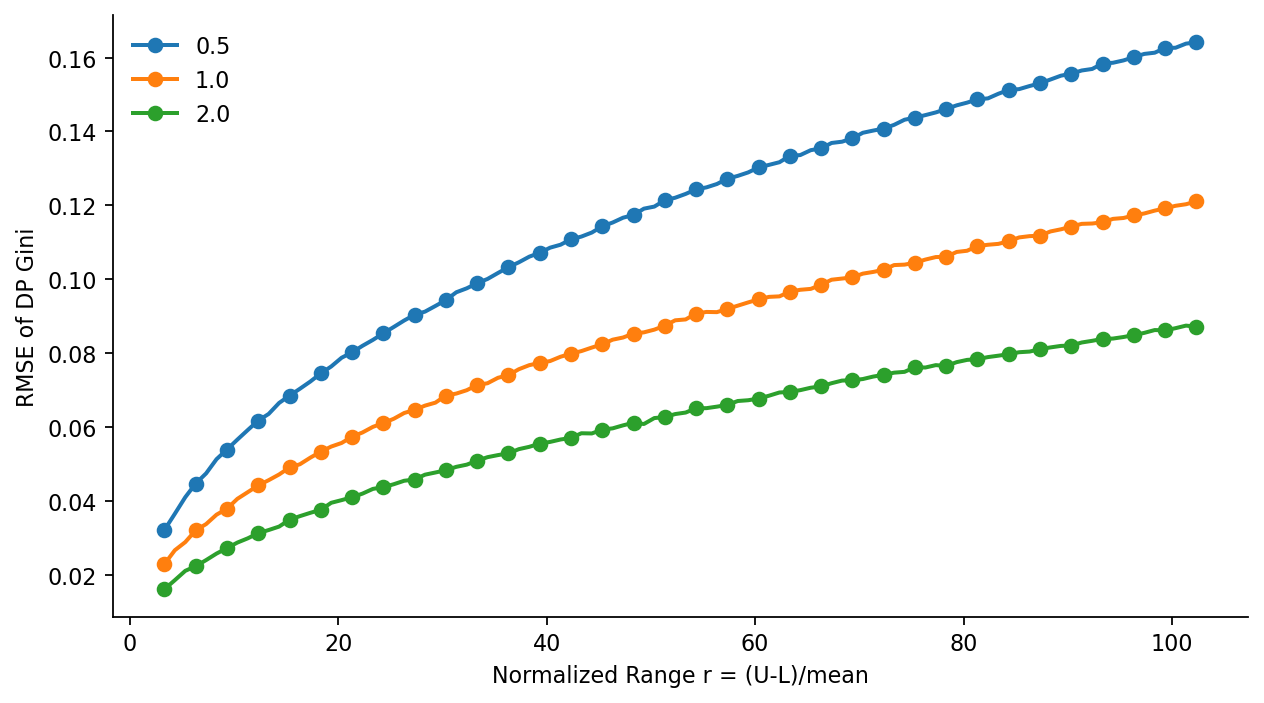}
    \caption{RMSE for different values of the normalized range, $r=U-L/\bar{x},$ varying from $\inf\{IQ(X):g(X)=0.5\}$ to $100$.  Here, we fix $g(X) = 0.5$ and $n = 100000$.}
    \label{fig:diff_range}
\end{figure}

\section{Discussion} \label{Sec: Discussion}

The differentially private Gini index algorithm can perform well when the normalized range $IQ(X)$ of $X$ is not too large.
%dataset, which is defined in Prop. \ref{prop: gini-range}, 
%is moderate.
When this is not the case, we need a relatively large sample size or  privacy budget to release reliable results. With small sample sizes, extreme observations can result in large local sensitivity, potentially pushing it close to worst-case levels. In such settings, we suspect that no unbiased differentially private mechanisms can deliver simultaneously strong privacy and high utility. 

One motivation for using smooth sensitivity is to avoid clipping some $x_i$ at a value below $x_n$, as clipping induces bias. However, for $X$ with high $IQ(X)$, a practical trade-off between privacy and utility may be necessary. One possible strategy is to allocate part of the privacy budget to privately estimate some upper quantile $q$ of the values in $X$, say the 99th percentile, release its differentially private estimate $\tilde{q}$, and compute the Gini index using values of $X$ clipped at quantile $q$ (i.e., set all $x_i>\tilde{q} = \tilde{q}$) using the DP algorithm presented here. This procedure could substantially reduce the DP noise in the released Gini index in cases with large $IQ(X)$. 

%Additionally,
%by releasing both the clipped Gini and DP quantiles, 
Analysts may be able to use a post-processing method to adjust for the effects of clipping. 
%Prior research on top-coding datasets has already shown how to revise Gini estimates through tail modeling. Our method provides additional private information (clipped quantile and private quantile) that can facilitate such modeling.
%\textcolor{red}{maybe we dont need to discuss this idea in detail}
%To illustrate our idea, we outline an adjustment procedure. Specifically, we publish (i) a private upper–tail cut quantile—namely $q$ and its DP estimate $\tilde{\tau}=\tilde{Q}(q)$ and ii) a single private Gini index computed on the data clipped at $\tau$. This permits readers to adjust the bias through post-modeling. 
The general idea is for analysts to presume some model for the tail values beyond $\tilde{q}$, 
%Any reader-specified bottom interpolation or whole-curve Lorenz model must reproduce the \emph{clipped} Gini implied by $(q,\tilde{\tau})$ and the release.
%To model the tail, 
%Analysts can posit any admissible tail distribution, 
%for $p\in(q,1]$, 
for example, a Pareto distribution or a shape-constrained Lorenz continuation, anchored at $(q,\tilde{q})$ and constrained to reproduce the released Gini statistic. The collection of tails that satisfy these constraints determines an identified interval for the Gini index of the data without clipping. Similar approaches have been suggested to adjust disclosure-protected Gini index values for the effects of topcoding \citep{LarrimoreBurkhauserFengZayatz2008,BurkhauserFengJenkinsLarrimore2008}.
%Accounting for the DP error and varying the tail parameters over a plausible range, we can infer the adjusted DP Gini, thus adjusting bias.
We leave investigation of this suggestion to future work.
%conceptual hybrid DP release, privately estimating an upper quantile and a clipped Gini, with accompanying correctness and privacy-utility guarantees.

Our algorithms do not incorporate survey weights or other kinds of weights.  The literature on differential privacy does not yet provide effective methods for handling survey weights, particularly when they are adjusted for nonresponse or calibration.  This is because the sensitivity can be greatly increased by using survey-weighted analyses \citep{reiter2019differential, drechsler}. It may possible to modify our algorithms to include other kinds of observation-level weights, particularly those that are fixed features of each record in the population. In fact, downweighting extreme values could reduce the local sensitivity of the Gini index. Developing methods for computing a differentially private, weighted Gini index is a direction for future research.

\bigskip
\begin{center}
{\large\bf SUPPLEMENTARY MATERIAL}

\end{center}

\begin{description}
\item[Supporting Theory:]  Proofs and derivations supporting the results (pdf file) \label{appendix: proofs}
\item[Python-package for DP Gini Algorithm:] Python-package DP-Gini containing code to implement the algorithms. The package also contains all datasets used in Section \ref{sec:experiments}. Code is also available at 
\href{https://github.com/YancyLan/DP_Gini}{\url{https://github.com/YancyLan/DP_Gini}}
\end{description}

%The authors report there are no competing interests to declare.

%\bibliographystyle{chicago}
\bibliographystyle{agsm}
\bibliography{Bibliography-MM-MC}

\appendix
\section{Appendix}

This section includes supplementary material for the main text.  Table \ref{tab:notation} includes a list of all notation.  Section \ref{sec:proof4.1} includes proofs and derivations relevant for Section 4.1 in the main text. Section \ref{sec:proof4.2} includes proofs and derivations relevant for Section 4.2 in the main text. Section \ref{sec:proof4.4} includes expressions relevant for Section 4.4 in the main text. 
Section \ref{sec:proofsec5} includes proofs and derivations relevant for Section 5 in the main text.

\begin{table}[htbp]
  \centering
  \caption{Notation}
  \label{tab:notation}
  \begin{tabularx}{\linewidth}{C Y}
    \toprule
    \textbf{Symbol} & \textbf{Meaning} \\
    \midrule
    z_i & income for individual $i$ \\
   Z = (z_1, \dots, z_n) & incomes for $n$ individuals\\
   x_i & $i$th largest value  in $Z$\\
    X=(x_1,\dots,x_n) & ordered version of $Z$ \\
    L,U & The lower and upper bounds of $X$\\ 
    w & $(n+1)/(n-1)$\\
    g(X) & Gini index for the dataset $X$ (we sometimes write $g_0$) \\
    LS_{f}(D) & local sensitivity for function $f$ and database $D$ \\
    A^{(k)}(X) & The maximum of the local sensitivity over all $X'$, where $d(X,X')=k$.\\
    X_{-m} & dataset $X$ with the $m$th entry removed \\
   a & the amount of increment/decrement in $x_m$ when changing it\\
    X'_m & $X_{-m}\cup (x_m+a)$ when increasing $x_m$, \\ & $X_{-m}\cup (x_m-a)$ when decreasing $x_m$\\
    m' & The new rank of $x_m +a$ or $x_m-a$ in $X'_m$ \\
    \Delta g(m, m', a) & $g(X_m')-g(X)$\\
%    \Delta g(m,m',a) &   $\Delta g(m,m',a)$ is used to emphasis $\Delta g$ dependency on three quantities: \\
    X_{(kt)}={(x_{t1}, \dots, x_{tn})} & Dataset constructed by replacing $k$ elements in $X$ with $k$ other values in $[L,U]$\\
    \overline{x}_t & the average of  $X_{(kt)}$ \\
    g_t & the Gini index of $X_{(kt)}$\\
    \bottomrule
  \end{tabularx}
\end{table}

\subsection{Proofs from Section 4.1 in Main Text}\label{sec:proof4.1}

\subsubsection{Derivation of expression for \texorpdfstring{$\Delta g(m,m',a)$}{g(m,m',a)}}
Recall that $g(X) = \sum_{i=1}^n(2i-n-1)x_i/\left((n-1)n\bar{x}\right)$. Let $w=(n+1)/(n-1)$. In Section 4.1, we show in (13) that by changing some $x_m$ to $x_m+a$ where $a>0$, we have  
\begin{equation}
\Delta g(m, m', a) = \frac{1}{n\bar{x}+a}\left[a\left(\frac{2m'}{n-1}-w-g_0\right)+\frac{2}{n-1}\sum_{i=m}^{m'}(x_m-x_i)\right].\label{eq:deltagappend}
\end{equation}
%Now we give exact algebraic manipulations. 
When adding $a$ to $x_m$,  
%(with the block $\{x_i: m<i\leq m'\}$ shifting left one position). 
the  denominator $(n-1)n\bar{x}$ in $g(X)$ changes to $(n-1)(n\bar{x}+a)$.  The numerator $\sum_{i=1}^{n}(2i-n-1)x_i$ in $g(X)$ changes to 
\begin{equation} 
\sum_{i < m}(2i-n-1)x_i+  (2m'-n-1)(x_m+a)  + \sum_{m < i \leq m'}(2(i-1)-n-1)x_i  + \sum_{i > m'} (2i - n - 1)x_i. 
\end{equation}
%$\sum_{i=1}^{n}(2i-n-1)x_i+(2m'-n-1)(x_m+a)-\sum_{i=m}^{m'}(2i-n-1)x_i+\sum_{i=m}^{m'-1}(2i-n-2)x_i$. 
Thus, we have
\begin{align}
g(X_m') &=
\frac{1}{(n\bar{x}+a)(n-1)}
\Bigg(\sum_{i < m}(2i-n-1)x_i+  (2m'-n-1)(x_m+a)  \notag \\
& \quad + \sum_{m < i \leq m'}(2(i-1)-n-1)x_i  + \sum_{i > m'} (2i - n - 1)x_i\Bigg) \notag\\
&=\frac{1}{(n\bar{x}+a)(n-1)}
\Bigg(
  \sum_{i=1}^{n}(2i-n-1)x_i
  -  (2m-n-1)x_m   \notag\\
&\quad
  + (2m'-n-1)(x_m+a)
  - \sum_{m<i\leq m'}2x_i
\Bigg) \notag\\
&= \frac{n\bar{x}(n-1)g_0+(2m'-n-1)(x_m+a)-(2m-n-1)x_m-\sum_{m<i\leq m'}2x_i}{(n\bar{x}+a)(n-1)}\notag\\
& = \frac{n\bar{x}(n-1)g_0+(2m'-n-1)a+2(m'-m)x_m-\sum_{m < i \leq m'}2x_i}{(n\bar{x}+a)(n-1)} \notag\\
&=g_0+\frac{-(n-1)g_0a+(2m'-n-1)a+2\sum_{m < i \leq m'}(x_m-x_i)}{(n\bar{x}+a)(n-1)} \notag\\
&=g_0+\frac{1}{n\bar{x}+a}\left(a\left(\frac{2m'}{n-1}-\frac{n+1}{n-1}-g_0\right) + \frac{2}{n-1}\sum_{i=m}^{m'}(x_m-x_i) \right). \label{eq: xmchange}
\end{align}
%To get to the expression in \eqref{eq:deltagappend}, we presume that $n$ is large so that $(n+1)/(n-1) \approx 1$.
%\textcolor{red}{We didn't really need to make this simplication.  We could have just named $n+1/n-1$ something like $w$, and used $w$ throughout the development.  This would avoid an approximation that could result in a lower LS than needed to ensure DP.  The DP reviewers are very precise!} \textcolor{blue}{Yes, you are right! I checked the deriation and found out keep w will lead to same result, and in the equation 18), we won't need to use the approximation if we keep w. And so does 21) and 22)}
%where we used the fact that $G(X)=\frac{\sum_{i=1}^n(2i-n-1)x_i}{n\bar{x} (n-1) }$ to pass from the first to the second equality. Thus, $\sum_{i=1}^{n}(2i-n-1)\,x_i=n\bar{x}(n-1)g$. By moving $g$ to the LHS in Equation \eqref{eq: xmchange}, we show Equation \ref{eq:deltag}.

\subsubsection{When \texorpdfstring{$m'=s_m$}{m'=sm} the maximum of  \texorpdfstring{$\Delta g$}{g} occurs when \texorpdfstring{$m=1$}{m=1} } \label{sec: proofsm}
%\textcolor{red}{What is $s_{m+1}$?  Is is the value of $s_m$ when $m=m+1$? We should define it here.} 
%\textcolor{blue}{
%reply red questions in red
%\begin{itemize}
%    \item $s_{m+1}$ is the value of $s_m$ computed with base index $m+1$ (different $m$ have different $s_m$).
%    \item We compare $\Delta g(m, s_m, \cdot)$ and $\Delta g(m+1, s_{m+1}, \cdot)$ because showing the former $\geq$ the latter for each $m$ proves the maximum is at $m=1$.
%    \item The parameter $a$ is \textbf{not} the same across $m$: for $\Delta g(m, s_m, \cdot)$ it is $a = x_{s_m} - x_m$, and for $\Delta g(m+1, s_{m+1}, \cdot)$ it is $a = x_{s_{m+1}} - x_{m+1}$.
%\end{itemize}}
%Same as the main part in Equation ([14]), we define 
In Section 4.1.1 of the main text, we claim that $\Delta g(m,s_m,a = x_{s_m}-x_m)$ reaches a maximum when $m=1.$  In this section, we provide additional details on this claim. As a reminder, we define
 \begin{equation}
          s_{m}=
          \min\Bigl\{m'<n : 
          \bigl(\frac{2m'}{n-1}-w-g_0\bigr)n\bar x
          +\frac{2}{n-1}\sum_{i=m}^{m'}(x_i-x_m)\ge 0
          \Bigr\}. \label{eq:changewithsm}
        \end{equation}
Increasing $m$ by one, we also define 
 \begin{equation}
          s_{m+1}=
          \min\Bigl\{m'<n : 
          \bigl(\frac{2m'}{n-1}-w-g_0\bigr)n\bar x
          +\frac{2}{n-1}\sum_{i=m+1}^{m'}(x_i-x_{m+1})\ge 0
          \Bigr\}. \label{eq:changewithsm1}
        \end{equation}
When \(m' = s_m\),
%and determine which choice of \(m\) maximizes $|\Delta g(m,m'=s_m,a)|$. W
we have 
\begin{eqnarray}
\Delta g(m,m'=s_m,a) &=& \frac{1}{n\bar{x}+a}\left[a(\frac{2s_m}{n-1} - g_0 - w)+\frac{2}{n-1}\sum_{i=m}^{s_m}(x_m-x_i)\right] \label{eq:changewithntrueappend}
\end{eqnarray} with $a = x_{s_m}-x_m$

% From the definition of $s_m$ in  \eqref{eq:changewithsm}, the following  inequalities hold.
%\textcolor{red}{I am not following why we use $\frac{n-1}{2s_{m-1}}$ instead of $\frac{2s_m}{n-1}$. } \textcolor{blue}{Fixed.  factors $\frac{n-1}{2(s_m-1)}$ and $\frac{n-1}{2(s_{m+1}-1)}$ should not appear, I rewrite the proof}

% \begin{align}
% \Bigl(\tfrac{2(s_m-1)}{n-1}-w-g_0\Bigr)n\bar{x}
% +\tfrac{2}{n-1}\sum_{i=m}^{s_m-1}(x_i-x_m)&<0, \label{eq: negativem}
% \end{align}
From the definition of $s_m$, we define %\textcolor{red}{We should not use $\epsilon$ since that is a DP parameter.  How about $c_m$?}\textcolor{blue}{sure}
\begin{align}
    c_m = \left( \frac{2s_m}{n-1} - w - g_0 \right) n\bar{x} + \frac{2}{n-1} \sum_{i=m}^{s_m} (x_i - x_m)  \geq 0,
\end{align}
%\textcolor{blue}{
where $c_m$ is non-negative per the definition of $s_m$. Using $c_m$, $\Delta g(m,s_m,a)$ can be written as 
\begin{align} \label{eq: simplifyg}
    \Delta g(m, s_m, a) =\underbrace{(\frac{2s_m}{n-1} - w- g_0)} _A - \frac{c_m}{n\bar{x} + a}.
\end{align}
%\paragraph{Claim 1.} \label{claim: 1}
We now argue that $s_{m+1}\ge s_m$.
 For any $m' \ge m+1$, let
 \begin{align}
&F(m,m') = \bigl(\frac{2m'}{n-1}-w-g_0\bigr)n\bar x+\frac{2}{n-1}\sum_{i=m}^{m'}(x_i-x_m), \\
&F(m+1,m') = \bigl(\frac{2m'}{n-1}-w-g_0\bigr)n\bar x+\frac{2}{n-1}\sum_{i=m+1}^{m'}(x_i-x_{m+1}).
\end{align}
Notice that $\sum_{i=m+1}^{m'} (x_i - x_{m+1}) - \sum_{i=m}^{m'} (x_i - x_m) = -(m'-m)(x_{m+1} - x_m) \leq 0$. Thus, $F(m+1,m') \leq F(m,m')$. This inequality implies that if $F(m+1,m')\geq0$, we must have $F(m,m')\geq 0$. Since $s_{m+1}$ is the smallest $m'$ for which $F(m+1,m')\geq0$, we have $s_{m+1}\geq s_m$. 

Since $c_m$ is small relative to $(n\bar x+a)$, 
%\textcolor{red}{how does it approach 0?  As $m$ increases? How do we know it approaches zero since $c_m$ has $n\bar{x}$ in it?} \textcolor{blue}{Since $c_m$ is small, $n\bar x$ is large} 
$\Delta g(m,s_m,a)$ mainly depends on the term labeled $A$ in \eqref{eq: simplifyg}. Since $s_{m+1} \geq s_m$, we have $s_m \geq s_1$ and hence, by comparing the terms labeled $A$ for $m$ and $m=1$, $\Delta g(m,s_m,a=x_{s_m}-x_m) \geq \Delta g(1,s_1,a=x_{s_1}-x_1)$. When $m=1$, $\Delta g(1,s_1,a=x_{s_1}-x_1)$ is negative. The definition of $s_m$ also ensures $\Delta g(m, s_m, a)$ to be negative. 
%Since in this case we focuse the the maximal decrease of $\Delta g$, and we have showed the maximal increase is in scenario $(m=m_I, n)$. 
We conclude that when $m'=s_m$ the maximum of $|\Delta g(m, s_m, a)|$ occurs when $m=1$.

\subsubsection{Proofs for Section 4.1.2}
In Section 4.1.2 of the main text,  we claim that when we decrease $x_m$ by some amount $a$, the largest value of \(|\Delta g(m, m', a)|\) occurs when either \((m,m')=(n,1)\) or \((m,m')=(m_D,1)\). In this section, we provide support for this claim.

Suppose the observation \(x_m \in X\) is decreased by a positive amount \(a>0\). After the perturbation, the rank of $x_m-a$ among the values in $X_m'$ decreases to $m'$ where $1\leq m' \leq m$. Denote the new dataset as $X'_m$. 
%Let $g$ denote $g(X)$, $\Delta g = g(X'_m)-g(X)$. 
Using algebraic manipulations akin to those used to compute \eqref{eq:deltagappend}, we can show that
% \[
% \begin{alignedat}{2}
% &\text{Numerator of G(X')}   &&={} \sum_{i=1}^{n} (2i - n - 1)x_i + \sum_{i=k'}^{k-1} 2x_i - (2k - n - 1)x_k + (2k' - n - 1)(x_k - a)\\
% &\text{Denominator of G(X')} &&={} ( n \bar{x} - a)(n - 1)
% \end{alignedat}
% \]
% Thus, the new Gini index can be written as:
\begin{align}
    g(X'_m) &= \frac{1}{(n\bar{x} - a)(n - 1)}(\sum_{i < m'} (2i-n-1)x_i + (2m' - n - 1)(x_m - a) \notag \\
     &\quad + \sum_{m' \leq i < m} (2(i+1)-n-1)x_i + \sum_{i>m} (2i - n - 1)x_i) \\ 
    &= \frac{n\bar{x}(n-1)g_0 - (2m - n - 1)x_m + (2m' - n - 1)(x_m - a) + \sum_{m' \leq i < m} 2x_i}{(n\bar{x} - a)(n - 1)} \\
    & = g_0+ \frac{ a(n - 1)g_0 + \sum_{m' \leq i < m} 2x_i - (2m - n - 1)x_m + (2m' - n - 1)(x_m - a)}{(n\bar{x} - a)(n - 1)}\\
    &= g_0 +\frac{ a\left((n - 1)g_0- 2m' + n + 1\right) + 2\sum_{i=m'}^{m} (x_i - x_m)}{(n\bar{x} - a)(n - 1)}.
\end{align}
Thus, dividing through by $(n-1)$, we have %\textcolor{red}{Do we have any approximations with we need to replace with $w$?}\textcolor{blue}{no, the above deriation uses $\frac{n+1}{n-1}$, without filling in $w$. In (2.15), we replace $\frac{n+1}{n-1}$ with $w$.}
\begin{align}
    \Delta g(m,m',a) &= 
    %\frac{a
    %\left((n - 1)g_0 - 2m' + n + 1\right) + \sum_{i=m'}^{m-1} 2x_i + 2(m' - m)x_m}{(n\bar{x} - a)(n - 1)}\\
    \frac{a(g_0 - \frac{2m'}{n - 1} + w) + \frac{2}{n-1}\sum_{i=m'}^{m} (x_i -x_m)}{n\bar{x} - a}.
\end{align}
Within any open interval $(m', m]$, $\Delta g(m, m', a)$ is continuous in $a$ and differentiable. Considering \(m\) and \(m'\) as constants, we have 
\begin{align}
    \frac{\partial g(m,m',a)}{\partial a} %\notag\\
    &= \frac{(g_0 - \frac{2m'}{n - 1} + w)(n\bar{x} - a) + a(g_0 - \frac{2m'}{n - 1} + w) + \frac{2}{n - 1}\sum_{i=m'}^{m}(x_i - x_m)}{(n\bar{x} - a)^2}  \\
    &= \frac{1}{(n\bar{x} - a)^2} \left( \left(g_0 - \frac{2m'}{n - 1} + w\right)n\bar{x} + \frac{2}{n - 1} \sum_{i=m'}^{m}(x_i - x_m) \right).\label{eq: gmmaD}
\end{align}

We now provide a lemma that is the analogue of Lemma 4.1 from the main text.  Here, we call it Lemma \ref{lem: ming}.

\begin{lemma} \label{lem: ming}
    Fix  $m\in\{1,\dots,n\}$ and let $m'$ be any integer such that $1\leq m'\leq m$. The following two facts hold.
    \begin{enumerate}
        \item When $g_0-\frac{2m}{n-1}+w <0$, $\Delta g(m,m',a)$ attains a unique minimum at the integer
        \begin{equation}
         m'=q_m=  \max \bigl\{1\leq m'\leq m:(g_0 - \frac{2m'}{n - 1} + w)n\bar{x} + \frac{2}{n - 1} \sum_{i=m'}^{m}(x_i - x_m) \leq 0\bigr\}.
       \end{equation}
       When $g_0-\frac{2m}{n-1}+w \geq 0$, $\Delta g$ attains its minimum at $m' = 1$.
       \item $\Delta g(m, m', a)$ attains its maximum at $m'=m$.
    \end{enumerate}
\end{lemma}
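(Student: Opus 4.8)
\textbf{Proof proposal for Lemma~\ref{lem: ming}.}

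The plan is to mirror the structure of the proof of Lemma~4.1 in the main text, since the decreasing case is algebraically symmetric to the increasing case. The central object is the bracketed factor in \eqref{eq: gmmaD}, which I will call
\[
\widetilde A(m,m') = \Bigl(g_0-\tfrac{2m'}{n-1}+w\Bigr)n\bar x + \tfrac{2}{n-1}\sum_{i=m'}^{m}(x_i-x_m),
\]
since $(n\bar x - a)^2>0$ (we are in the regime $n\bar x > a$), the sign of $\partial\Delta g/\partial a$ is exactly the sign of $\widetilde A(m,m')$. The first step is to record the monotonicity of $\widetilde A$ in $m'$ for fixed $m$: as $m'$ \emph{decreases} from $m$ toward $1$, the linear term $-\tfrac{2m'}{n-1}n\bar x$ increases, while the sum $\sum_{i=m'}^{m}(x_i-x_m)$ gains extra nonpositive terms $x_i-x_m\le 0$ (since $i<m$ implies $x_i\le x_m$ in sorted order), hence decreases. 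So $\widetilde A$ is a sum of a term increasing in $(m-m')$ and a term decreasing in $(m-m')$; I need the net effect. The clean way, as in the main-text lemma, is to argue that $\widetilde A(m,m')$ is \emph{monotone in $m'$} — here nondecreasing in $m'$ — by comparing $\widetilde A(m,m'+1)$ with $\widetilde A(m,m')$: the difference is $\tfrac{2}{n-1}n\bar x - \tfrac{2}{n-1}(x_{m'}-x_m)\cdot(\pm 1)$-type expression; I would write out $\widetilde A(m,m'+1)-\widetilde A(m,m') = \tfrac{2}{n-1}\bigl(n\bar x + (x_{m'}-x_m)\bigr)$, and since $x_{m'}-x_m \ge -\,n\bar x$ in any realistic configuration (indeed $|x_{m'}-x_m|\le U-L \ll n\bar x$), this is positive. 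So $\widetilde A(\cdot,m')$ is strictly increasing in $m'$.

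Given this monotonicity, the argument proceeds exactly as before. The second step is to evaluate $\widetilde A$ at the endpoints $m'=1$ and $m'=m$. At $m'=m$ the sum is empty, so $\widetilde A(m,m) = \bigl(g_0-\tfrac{2m}{n-1}+w\bigr)n\bar x$, whose sign is the sign of $g_0-\tfrac{2m}{n-1}+w$ — this is precisely the dichotomy in the two cases of the lemma. At $m'=1$, I would check that $\widetilde A(m,1) = (g_0+1)n\bar x - \tfrac{2}{n-1}n\bar x + \tfrac{2}{n-1}\sum_{i=1}^m(x_i-x_m)$; the dominant term $(g_0+1)n\bar x$ with $g_0\ge 0$ dwarfs the correction terms, so $\widetilde A(m,1)<0$ — hence $\partial\Delta g/\partial a<0$ near $m'=1$, meaning $\Delta g$ is decreasing in $a$ there. (I should note $\Delta g$ here is measured against decreasing $x_m$, so "$\Delta g$ decreasing in $a$" corresponds to $g$ falling — I will be careful about the sign convention established in the main-text analogue.)

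The third step handles continuity: just as in Lemma~4.1, I must verify that $\Delta g(m,m',a)$ has no jumps at the breakpoints $a = x_m - x_{m'}$ where the rank $m'$ changes. As $a\to (x_m-x_r)^-$ the rank is $m'=r+1$ (or $r$, depending on tie convention), and as $a\to(x_m-x_r)^+$ it becomes $m'=r$ (resp.\ $r-1$); substituting and taking the limit, the bracketed expression telescopes to a term proportional to $\bigl(a - (x_m-x_r)\bigr)$, which vanishes in the limit, exactly as computed for the increasing case. So $\Delta g$ is continuous and piecewise-differentiable in $a$ across the whole admissible range. The fourth and final step assembles the conclusion: by the sign analysis, when $g_0-\tfrac{2m}{n-1}+w<0$ we have $\widetilde A(m,m)<0$ and $\widetilde A(m,1)<0$ but $\widetilde A$ is increasing in $m'$; however both endpoints negative would force $\widetilde A<0$ throughout, so actually the relevant statement (mirroring the main text) is that as $m'$ \emph{decreases} from $m$, $\Delta g$ first decreases then increases, with the turning point at the \emph{largest} $m'\le m$ for which $\widetilde A(m,m')\le 0$ — that is $q_m$ — giving the unique minimum; when $g_0-\tfrac{2m}{n-1}+w\ge 0$, $\widetilde A(m,m)\ge 0$ and by monotonicity... no, wait: $\widetilde A$ increasing in $m'$ with $\widetilde A(m,1)<0$ and $\widetilde A(m,m)\ge0$ means $\widetilde A$ changes sign once as $m'$ increases, so $\Delta g$ as a function of $m'$ decreases then increases (reading in increasing $m'$), with minimum at the sign-change point; I will reconcile this with the $q_m$ formula and the $m'=1$ claim by tracking the direction in which the perturbation moves the rank. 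The maximum over $m'$ is attained at $m'=m$ (no rank change) in both cases, by the same endpoint argument.

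\textbf{Main obstacle.} The hard part is bookkeeping the sign conventions and the direction of monotonicity — here $m'$ runs \emph{down} from $m$, the numerator and denominator of $\Delta g$ both shrink, and the statement of the lemma packages the turning point as a $\max$ rather than a $\min$ — so keeping the inequalities consistent with the main-text Lemma~4.1 (which is the increasing mirror image) requires care. The underlying analysis (monotone sign-determining term, no discontinuities, convex-like behavior) is routine once the correct orientation is fixed; the derivation of $\Delta g$ itself and of \eqref{eq: gmmaD} is already supplied in the preceding subsection, so no new computation is needed there.
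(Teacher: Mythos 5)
Your plan --- read the sign of $\partial\Delta g/\partial a$ off the bracketed factor
\[
\widetilde A(m,m')=\Bigl(g_0-\tfrac{2m'}{n-1}+w\Bigr)n\bar x+\tfrac{2}{n-1}\sum_{i=m'}^{m}(x_i-x_m)
\]
in \eqref{eq: gmmaD}, establish its monotonicity in $m'$, evaluate the endpoints, and check continuity across the rank breakpoints --- is the same route the paper takes (the mirror image of Lemma \ref{lem: gchange_fixm}). However, the execution has a sign error at the one step that drives the whole argument. The correct difference is
\[
\widetilde A(m,m'+1)-\widetilde A(m,m')=-\tfrac{2}{n-1}\bigl(n\bar x+x_{m'}-x_m\bigr)\le 0,
\]
since the $x_i$ are nonnegative so $x_m\le n\bar x$; you dropped the minus sign and concluded that $\widetilde A$ is \emph{increasing} in $m'$. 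It is nonincreasing in $m'$, and that is exactly what makes the lemma work: as the decrement $a$ grows the rank $m'$ falls, so the sign-determining factor rises and changes sign at most once from negative to positive, whence $\Delta g$ first decreases and then increases in $a$. Your endpoint evaluation at $m'=1$ is also internally inconsistent: you note that the dominant term of $\widetilde A(m,1)$ is $(g_0+1)n\bar x>0$ and then conclude $\widetilde A(m,1)<0$. Generically $\widetilde A(m,1)>0$ (for $X=(1,2,6)$ and $m=3$ one computes $\widetilde A(3,3)=-4$, $\widetilde A(3,2)=1$, $\widetilde A(3,1)=5$), and that positivity is precisely why $\Delta g$ turns upward for large decrements.

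Because of these two reversals the final step never lands: you notice the contradiction yourself (``both endpoints negative would force $\widetilde A<0$ throughout''), attempt a patch, and close with ``I will reconcile this\ldots,'' which leaves both conclusions of the lemma --- the location of the minimum ($q_m$ versus $m'=1$ in the two cases) and the location of the maximum --- unproved. The missing content is exactly the case analysis: with the monotonicity corrected, when $\widetilde A(m,m)=(g_0-\tfrac{2m}{n-1}+w)\,n\bar x<0$ the factor is negative for $m'$ near $m$ and becomes nonnegative as $m'$ decreases, so $\Delta g$ attains an interior minimum at the transition rank, while when $\widetilde A(m,m)\ge 0$ the factor is nonnegative for every $m'\le m$ and $\Delta g$ is monotone in $a$. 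Carrying this out also forces you to confront the orientation of the set defining $q_m$ and the roles of $m'=1$ versus $m'=m$ in the two conclusions (compare Section \ref{subsec: suudec}, where the extremes are reported at $(n,q_m)$ and $(m_D,1)$); that reconciliation is the substance of the lemma and is precisely what your proposal defers.
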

\begin{proof}
When $m'=m$,  $\frac{\partial g(m,m',a)}{\partial a} = \frac{1}{(n\bar{x} - a)^2}[(g_0-\frac{2m}{n-1}+w)n\bar{x}]$.
%Given the domain of $m$, we have %\[\frac{2}{n-1}\leq\frac{2m}{n-1}\leq \frac{2n}{n-1}=\]
When $g_0-\frac{2m}{n-1}+w>0$, increasing $m'$ eventually switches the sign of $\frac{\partial g(m,m',a)}{\partial a}$ from negative to positive. Otherwise, when $g_0-\frac{2m}{n-1}+w\leq 0$, the sign of %$\frac{\partial g}{\partial a}$
$\frac{\partial g(m,m',a)}{\partial a}$ 
remains positive for all $m'<m$.
When $m'=1$, we have $\frac{\partial g(m,m',a)}{\partial a} = \frac{1}{(n\bar{x} - a)^2}[(g_0+1)n\bar{x}+\frac{2}{n-1}\sum_{i=1}^{k-1}(x_i-x_k)]$. 
\end{proof}

For a fixed $m$, the largest change of the Gini index occurs at $m' = 1$ or at $m' = q_m$.
%Similar to the calculation of Change in Gini index when increasing one element in $X$ in Section 4.1.1, first 
We first consider $m'=1$. From the sign of the partial derivative %$\frac{\partial g(m,m',a)}{\partial a}$
in \eqref{eq: gmmaD}, we can show that the largest change happens when $m_D = \max\{m>1:(g_0+1)n\bar x+\frac{2}{n-1}\sum_{i=1}^{m-1}(x_i-x_m)\geq 0\}$. 

We next consider $m'= q_m$.
%follow the proof for Section \ref{sec: proofsm}, \textcolor{red}{We should write out the proof. In the appendix we want to provide all the details to fully convince readers.}\textcolor{blue}{sure} it is similar to show that maximum change happens when $m=n$.
%We claim the maximal decrease in $\Delta g $ occurs when $m=q_m$ is when $m=n$.
Using the definition of $q_m$, define 
\begin{align}
    d_m = \left( g_0 - \frac{2m'}{n-1} + w \right) n\bar{x} + \frac{2}{n-1} \sum_{i=m'}^m (x_i - x_m)  \leq 0,
\end{align}
We can rewrite $\Delta g( m,q_m, a)$: 
\begin{align} \label{eq: simplifygd}
    \Delta g(m, q_m, a) = \underbrace{- \left( g_0 - \frac{2q_m}{n-1} + w \right) }_{A}+ \frac{d_m}{n\bar{x} - a}.
\end{align}
Since $d_m$ is negative and relatively small, the value of $\Delta g(m, q_m, a)$ mainly depends on the term labeled $A$ in \eqref{eq: simplifygd}. We now argue that $q_{m+1}\ge q_m$. We define $F(m, m') = \left( g_0 - \frac{2m'}{n-1} + w \right) n\bar{x} + \frac{2}{n-1} \sum_{i=m'}^m (x_i - x_m).$ Notice that $F(m+1, m')-F(m, m') = \frac{2}{n-1} (m - m' + 1) (x_m - x_{m+1}) \leq 0$. Therefore, $q_{m+1}\geq q_m$.  We also have $\Delta g(n,q_n,a =x_n-x_{q_n}) \leq \Delta g(m,q_m,a=x_m-x_{q_m})$ for any $m < n$. Since $\Delta g(m, q_m, a) < 0$ under the definition for $q_m$, we conclude our claim.

%Thus, we conclude that the maximum $|\Delta g|$ when decreasing one element in $X$ of Section 4.1.2 occurs when either $(m,m')=(n,q_m)$ or $(m,m')=(m_D,1)$.

\subsection{Proofs for Section 4.2}\label{sec:proof4.2} 

%When adding an element to $X$, the Gini index is affected by the position and value of the inserted element. For example, suppose $x$ is the inserted value. When $x \to (x_m)^-$, the position of the inserted element is $m$; whereas $x \to (x_m)^+$, the position of the inserted element is $m+1$. 
%In contrast, when deleting an element from $X$, the Gini index is affected only by the rank of the deleted element.  
%Lemma \ref{lem: addg} 
%and Lemma \ref{lem: delg} 
%analyzes the effects of insertion and deletion, respectively.  

In this section we prove Theorem \ref{the: maxgconsecutiveappend} and Theorem \ref{the: mingconsecutiveappend} from the main text.  To do so, we first prove a lemma that characterizes the effects of inserting an element on the Gini index.  Let $X_{+x}=(x_1,\dots ,x_{m-1}, x, x_{m}, \dots, x_{n})$ be the ordered dataset after inserting  \(x\in[x_{m-1},x_m)\) at position $m$. 
 %adding one element, there are $n+1$ elements in tatal, 
%Fix \(m\in\{1,\dots ,n+1\}\).
%\textcolor{red}{There are only $n$ elements in $X$.  Why is this $n+1$?  Are we defining $x_0=L$ and $x_{n+1}=U$?  If we do not define $x_0$, then the lemma should start at 2 rather than 1.} \textcolor{blue}{Since after adding one element, there are n+1 elements in total} and let \(x\in[x_{m-1},x_m)\). Note thatWhen $m=n+1$, $x \in [x_n,U]$. \textcolor{blue}{When $m=1$, $x \in [L, x_1].$}

\begin{showas}{4}{2}{lemma}{definition} \label{lem: addg} 
 %Let $X_{+x}=(x_1,\dots ,x_{m-1}, x, x_{m}, \dots, x_{n})$ be the ordered dataset after inserting $x$. 
 %adding one element, there are $n+1$ elements in tatal, 
 Fix \(m\in\{1,\dots ,n+1\}\). 
%\textcolor{red}{There are only $n$ elements in $X$.  Why is this $n+1$?  Are we defining $x_0=L$ and $x_{n+1}=U$?  If we do not define $x_0$, then the lemma should start at 2 rather than 1.} \textcolor{blue}{Since after adding one element, there are n+1 elements in total} 
%and let \(x\in[x_{m-1},x_m)\). Note that when $m=n+1$, $x_m \in [x_n,U]$.
Let
$\Delta g(X_{+x}) = g(X_{+x})-g_0$.  There exists a unique integer \(m_A\in[1,n+1]\) %\textcolor{red}{Use a different notation than $m^*$... something more specific to this lemma.} 
such that
\begin{equation}
\frac{\partial\Delta g(X_{+x})}{\partial x}<0\;\textrm{ when } m<m_A,\qquad
\frac{\partial\Delta g(X_{+x})}{\partial x}>0\; \textrm{ when } m > m_A.
\end{equation}
\end{showas}

\begin{proof}
  % Suppose we insert an observation $x \in [L,U]$ into the original dataset $X$. The new data set thus becomes $X' = X \cup \{x_m\}$. Then $x_m$ occupies the $m$ position in the order statistics of $X'$, where $\{m:m\in \mathbb{N},1 \leq m \leq n+1\}$.  Follow the notation in the previous proofs, let $g$ denotes $G(X)$, $\Delta g$ denotes $G(X') - G(X)$, which depends on two quantities: the value of $x_m$ and the rank $m$ in dataset $X'$. To emphasis this dependence, we write $\Delta g$ as $\Delta g(x_m,m)$, 
Recognizing that $-(n+1)-1=-n-2$, we can write $g(X_{+x})$ as
\begin{eqnarray} 
g(X_{+x}) &=& \frac{1}{n(n\bar x+x)}\left(\sum_{1 < i \leq m-1}(2i - n - 2)x_i + (2m - n - 2)x + \sum_{m \leq i \leq n}(2(i+1) - n - 2)x_i\right) \notag\\
&=& \frac{1}{n(n\bar x+x)}\left(\sum_{i=1}^n(2i - n - 1)x_i + (2m - n-2)x - \sum_{1 \leq i \leq m-1}x_i +  \sum_{m \leq i \leq n}x_i \right) \notag\\
&=& \frac{1}{n(n\bar x+x)}\left(g_0(n\bar{x}(n-1)) + (2m - n-2)x - \sum_{1 \leq i \leq m-1}x_i +  \sum_{m \leq i \leq n}x_i\right). \label{eq:gx+}
\end{eqnarray}
Hence, subtracting $g_0$ from \eqref{eq:gx+}, we have 
 \begin{equation}
    \Delta g(X_{+x})    = \frac{1}{n(n\bar{x}+x)}\left(-g_0n(\bar{x}+x) + (2m - n-2)x - \sum_{1 \leq i \leq m-1}x_i + \sum_{m \leq i \leq n}x_i\right).\label{eq:deltag2}
    \end{equation}
%\textcolor{red}{Wenjie: You got the following equation.  They differ in the last two terms.  Which one is right? I think it is mine since yours uses $n$ and mine uses $n+1$ as the number of elements in $X'$.  Please check and change later results as needed. Also we should use $x$ rather than $x_m$ since $x_m$ is reserved for the mth ordered element of $X$.} \textcolor{blue}{I use $n+1$ for the denominator. In my previous derivation, I forgot to change $2i-n-1$ to $2i-n-2$. Thank you for reminding! There are some little errors in your calculation; I fixed them in blue. If I  rewrite $x_m$ to $x$, I think my previous calculation aligns with the current calculation, except for one change, as I colored in red}
    % \begin{equation}
    % \Delta g(X_{+x})    = \frac{1}{n(n\bar{x}+x)}[-n(\bar{x}+x)g_0 + \sum_{i=m}^n2x_i+(2m-n-\textcolor{red}{2})x]
    % \end{equation}

%$2m - n(1+g_0)-2$

    %Since the rank $m$ only depends on the value of $x_m$.
    Viewed as a function of $x$, the rank $m$ of the added element is a step function that increases each time $x$ exceeds some $x_i \in X$.
%    Since the rank $m$ only depends on the value of $x_m$. To help understand, $m$ can be written as a function of $x_m$:
%    \[
%    m(x_m) = 
% \begin{cases}
% 1 & x_m < x_{(1)} \\
% 2 & x_{(1)} \leq x_m < x_{(2)} \\
% 3 & x_{(2)} \leq x_m < x_{(3)} \\
% \vdots \\
% n+1 & x_{(n)} \leq x_m
% \end{cases}
%    \]
Consider the rank $m$ of $x$ as fixed, and consider values of $x$ such that $(x_{m-1} < x \leq x_m)$. We take the derivative of 
    $\Delta g(X_{+x})$ with respect to $x$.
%     \begin{align}
%     &\frac{\partial \Delta g(X_{+x})}{\partial x}
%     %&=
% %\frac{1}{(n\bar{x} + x_m)^2} 
% %\left[(-g - 1 + \frac{2m - 1}{n})(n\bar x +x_m) - ((-g - 1 + \frac{2m - 1}{n})x_m+ \sum_{i = m}^{n} \frac{2}{n} x_{(i)}^2 + \bar{x}g)\right]\\
% = \frac{1}{(n\bar{x} + x_m)^2} \times \underbrace{\left[(-g - 1 + \frac{2m - \textcolor{red}{2}}{n})n\bar{x} - \sum_{i = m}^{n}\frac{2}{n} x_{i}^2 + \bar{x}g\right] }_{\text{(I)}} \label{eq: addg}
% \end{align}
\begin{align}
    &\frac{\partial \Delta g(X_{+x})}{\partial x} = \frac{1}{(n\bar{x} + x)^2}  \underbrace{\left[( \frac{2m - 2}{n}-1)n\bar{x} - \sum_{i = m}^{n}\frac{x_{i}}{n}  + \sum_{1 \leq i \leq m-1}\frac{x_i}{n}\right] }_{\text{(I)}} \label{eq: addg} 
\end{align}
%The sign of $\frac{\partial \Delta g(X_{+x})}{\partial x}$ depends only on the rank index \(m\). Therefore, 
With \(m\) held constant, the sign of 
\(\partial \Delta g(X_{+x})/\partial x\) is the same for all \(x\): it is either negative or positive. The expression labeled (I) in \eqref{eq: addg} controls the sign of the derivative. This (I) term increases monotonically in $m$.

%Now check the border point of m, where $\{m:m\in \mathbb{N},1 \leq m \leq n+1\}$. First, When 
When $x$ has rank $m=n+1$ in $X_{+x}$, we have
    \begin{equation}
    \frac{\partial \Delta g(X_{+x})}{\partial x} =
    \frac{1}{(n\bar{x} + x)^2}
    \left[n\bar{x} + \sum_{1\leq i <n}\frac{x_i}{n}\right] > 0.\label{eq:addmax}
    \end{equation}
 When $x$ has rank $m=1$ in $X_{+x}$, we have
    \begin{equation}
    \frac{\partial \Delta g(X_{+x})}{\partial x} =
    \frac{1}{(n\bar{x} + x)^2}
    \left[-n\bar{x}-\sum_{i=1}^n\frac{x_i}{n}\right] < 0. \label{eq:addmin}
    \end{equation}
Thus, there exists a point 
$x_{m_A}$ such that $\frac{\partial \Delta g(X_{+x})}{\partial x} > 0$ for all $x > x_{m_A}$, and $\frac{\partial \Delta g(X_{+x})}{\partial x} < 0$ for all $x < x_{m_A}$.  
%\textcolor{red}{This is not the same as the statement of the lemma.}\textcolor{blue}{The lemma statement is correct; this part I wrote wrongly. We can check with $m=n+1$, it is in equation 3.5}

% We have shown the behaviour of the function $\Delta g$ within each interval, now consider the behaviour between adjacent interval. In other word, we want to investigate the sign of $\lim_{x_m \to x_{r}^+} \Delta g - \lim_{x_m \to x_{r}^-} \Delta g$, where $  \lim_{x_m\to x_{r}^{-}} m(x_r)+1
%   \;=\;
%   \lim_{x_m\to x_{r}^{+}} m(x_r),$
We next show that $\Delta g(X_{+x})$ does not have discontinuities at the points defined by $x_i \in X$.  Let $r$ be an integer satisfying $1 \le r < n$.  As $x \to x_r^-$, we have $m = r$; as $x \to x_r^+$, we have $m = r+1$. We write  $X_{x_r^-}$ as the version of $X_{+x}$ including  $x=x_r^-$ and $X_{x_r^+}$ as the version of $X_{+x}$ including  $x=x_r^+$.  Thus, for any $r$ we have 
\begin{align*}
   % & \Delta g(X:x_m = x_{r}^+,m=r+1)-\Delta g(X:x_m = x_{r}^-,m=r)\\
   % &=\lim_{x_m\to x_r}\frac{1}{n\bar{x}+x_r}(\frac{2}{n}x_r-\frac{2}{n}x_r)\\
   % &=0
   &\Delta g\!\left(X_{x_r^+}\right)
    - \Delta g\!\left(X_{x_r^-}\right) 
    = \lim_{x \to x_r} 
    \frac{1}{n\bar{x} + x_r}
    \left[\frac{2x_r}{n} - \frac{2x_r}{n}\right] = 0.
\end{align*}

Thus, the differences in \(\Delta g(X_{+x})\) at the endpoints of any two adjacent intervals are negligible.  This completes the proof of Lemma \ref{lem: addg}.
\end{proof}

Note that though we write the proof for Lemma \ref{lem: addg} 
%and Lemma \ref{lem: delg}
using $X$ which comprises $n$ records, the lemma holds for any dataset $Q$ comprising $k>1$ records. We use this fact to help prove Theorem 4.1 and Theorem 4.2 from the main text.  As a reminder, we can write the Gini index for any dataset $Q$ comprising $k$ elements $(y_1, \dots, y_k)$ as
\begin{equation}
    g(Q) = \frac{\sum_{i=1}^{k}\sum_{j=1}^k|y_i-y_j|}{2n^2\bar{y}}.\label{eq:giniZappend}
\end{equation}

\begin{showas}{4}{1}{theorem}{theorem}
\label{the: maxgconsecutiveappend}
%Given a list of $n$ real numbers $X = \{x_1, \dots, x_n\}$ in the range of $[L,U]$ 
Given a set $X = (x_1, \dots, x_n)$ of $n$ real numbers in ascending order, where each $x_i \in [L,U]$, and an integer $1<k<n$, the $k$-maximal Gini subset can be achieved by removing $n-k$ consecutive elements in $X$.
\end{showas}

%\begin{showas}{4}{1}{corollary}{theorem}
%\label{the: maxgtime}
%Given a list of \(n\) real numbers
%\(X=\{x_{1},x_{2},\ldots,x_{n}\}\) in the range of 
%\([L,U]\) and an integer \(k<n\), the \(k\)-maximal Gini
%subset can be computed in \(\mathcal{O}(n-%k)\).
%\end{showas}
\begin{proof}

%    Suppose that a subset $Q$  of $k$ elements of $X$ has the largest Gini index among all subsets of $X$ of size $k$. Then, all elements of $Q$ must be among the $k$ largest or $k$ smallest values in $X$.  To see this, suppose instead that $Q$ contains an element $x_m$ that is not among the $k$ largest or $k$ smallest values in $X$.  
%Then, there exists some $x_i$ such that if we replace $x_m$ with $x_i$, we can increase the Gini index.  \textcolor{red}{Why?  Can we cite one of our previous results?  Or is this what we will show below? If the latter, we can say that we show this below.} This would contradict the presumption that $Q$? is a $k$-maximal Gini subset.

%By Lemma \ref{lem: addg}, $m_A$ gives the largest decrease of the Gini Index, move value away from it can lead to an increase of the Gini index. 

%We prove the theorem by induction. 

First, consider $k=2$.  Clearly, the $2$-maximal subset $Q$ includes $x_1$ and $x_n$, the smallest and largest elements of $X$.  We obtain this set by removing the $n-2$ consecutive values from $X$ excluding $x_1$ and $x_n$.

%note that when can achieve the largest Gini index of 
%Thus, consider $S'$ with $k-1$ elements, we want to know that plug in $x_m$ or $x_r$ which maximise the Gini index. This is to examine the Gini index of $S'\cup x_r$ and $S'\cup x_m$ which has larger Gini index. \textcolor{red}{I am lost here. What are $x_m$ and $x_r$?  }

%When $k=2$, it is easy to show that taking the max individual and the minimum individual can maximise the Gini index, 

Next consider $k=3$.  The $3$-maximal subset $Q$ includes $x_1$ and $x_n$, since $|x_1 - x_n|$ is the largest possible contribution we can add to the numerator of \eqref{eq:giniZappend}.  When adding a third element, we want to make the numerator as large as possible to make $g(Q)$ as large as possible.  This is done by including either $x_2$ or $x_{n-1}$.  Thus, we obtain $Q$ by removing $n-3$ consecutive ``interior'' elements from $X$.

%Now suppose that the $k$-maximal subset $Q$ is obtained by removing $n-k$ consecutive elements from $X$. We write $Q=(x_1, \dots, x_l, x_u, \dots, x_n)$, where $x_l$ and $x_u$ are the lower and upper endpoints of each run of consecutive elements.  We need to show that we can obtain the $(k+1)$-maximal subset $Q'$ by removing $(n-k-1)$ consecutive elements from $X$.  

To prove the case for $k>3$, 
we use a proof by contradiction. Let the subset $Q=(y_1, \dots, y_k)$  of $X$ have the largest Gini index among all subsets of $X$ of size $k$.  Assume that $Q$ contains one element $y_m$, where $m$ is the rank among the $k$ elements in $Q$, that is not among the $k+1$ largest or $k+1$ smallest values in $X$, in which case the lemma is false.   We will show that we can replace  $y_m$ with some $x_i \in X$ to increase the Gini index.  
%\textcolor{red}{Why?  Can we cite one of our previous results?  Or is this what we will show below? If the latter, we can say that we show this below.}
This contradicts the assumption that $Q$ is a $k$-maximal Gini subset; hence, we have a proof by contradiction. All the elements of $Q$ must be among the $k$ largest or $k$ smallest values in $X$.

%\textcolor{blue}{We begin by considering how to make $g(Q')$ as large as possible when adding one $x_i$ from the elements in $X - Q$. 

Under our posited assumption about the composition of $Q$, we can write it as $(y_1, \dots, y_k)= (x_1, \dots, x_l, y_m, x_u, \dots, x_n)$.  Here,  $x_l$ and $x_u$ are the lower and upper endpoints of each run of consecutive elements.  We use $l$ and $u$ to refer to their indices. 

We now consider $Q_{-m} = Q - y_m$. Lemma \ref{lem: addg} ensures that, if we set add a point $x$ with rank $m$ to $Q_{-m}$, there is some $m_A$ such that $g(Q_{+x})$ decreases when $m_A>m$ and $g(Q_{+x})$ increases when $m_A<m$. 

%\textcolor{red}{If we look at (3.2), can we just say that when $2m - n(1+g_0)-2 > 0$, we want to add the largest possible $x$ that we can to maximize the Gini index, and when $2m - n(1+g_0)-2<0$ we want to add the smallest possible $x$ that we can to maximize the Gini index?  }

First, consider the case when $m_A>m$.  Then, adding any element $x$ to $Q_{-m}$ selected from $\{x_{l+1}, \dots, x_{u-1}\}$ decreases $g(Q_{+x})$.  We therefore want to add $x$ that results in the smallest decrement of $g(Q_{+x})$. Because of the monotonic decrease in \eqref{eq: addg} with $x$, we select $x_{u-1}$.
Second, consider  the case when $m_A < m$.  Then, adding any element $x$ to $Q_{-m}$ selected from $\{x_{l+1}, \dots, x_{u-1}\}$ increases $g(Q_{+x})$.  We therefore want to add $x$ that results in the largest increment of $g(Q_{+x})$. Because of the monotonic increase in \eqref{eq: addg} with $x$, we therefore should select $x_{l+1}$.  When $m=m_A$, we can select one of  $x_{l+1}$ or $x_{u-1}$ in lieu of $x_m$ and find a larger  Gini index.
%\textcolor{red}{We need to add $m=m_A$ in one of the cases.} \textcolor{blue}{I think add it to either one is fine, since we just want to prove the contradiction (it cannot reach maximum with a value in the middle), not the exact maximum.}

In any of these cases, we can find a $Q_{+x}$ comprising $k$ elements such that $g(Q_{+x}) \geq g(Q)$.  Thus, we have a contradiction: a $Q$ that includes some $y_m$ not among the $k$ smallest or $k$ largest values in $X$ cannot be the $k$-maximal subset.

For showing the contradiction, we need not consider any other $Q'$ including more than one element $y_m$ that is not among the $k$ smallest or $k$ largest elements. This is because, for any such $Q'$, we have $g(Q) \geq g(Q')$, since Lemma \ref{lem: addg} indicates that we can increase the Gini index by iteratively replacing elements in the ``interior'' with elements among the $k-1$ smallest or largest elements in $X$.
\end{proof}
%could iteratively replace elements other than $q_m$ in the interior an other elements with v

% \textcolor{blue}{Suppose we start from a non-continuous subset $S'$ and, after adding an element $x$, obtain $S'_{+x}$, and $S'_{+x}$ has a larger base value. However, the proof shows that if $S'$ is not continuous, then $S'$ itself is not optimal, i.e., there exists a contiguous subset $S$ (with the same size $k$) such that $g(S) \ge g(S')$.} \textcolor{green}{Similarly, suppose we have $S'_{+x}$ that is not continuous; it must have some elements in the middle. By moving these points to the tail elements, you can find a consecutive dataset that has a greater Gini than $S'_{+x}$ following the same logic as we showed in the three cases.}
%Thus, given the dataset $X$, we know the maximal Gini will be selected from the tail of the ordered dataset, moving any element away choose a element not in the tail will increase the Gini index. 

Using Theorem \ref{the: maxgconsecutiveappend}, we can search over all possible $n-k$ sequences of consecutive elements in $X$ to find the $k$-maximal Gini subset.
%over all the possible cases,  for example, select i in $\{i:0\leq i \leq k\}$ elements from the beginning of $X$, and $k-i$ elements from the end of the dataset.
We now prove Theorem \ref{the: mingconsecutiveappend}.
\begin{showas}{4}{2}{theorem}{theorem}
\label{the: mingconsecutiveappend}
%Given a list of $n$ real numbers $X = \{x_1, \dots, x_n\}$ in the range $[L,U]$
Given a set $X = (x_1, \dots, x_n)$ of $n$ real numbers in ascending order, where each $x_i \in [L,U]$, and an integer $1<k<n$, the $k$-minimal Gini subset can be achieved by removing a total of $n-k$ consecutive elements from 
%both sides of $X$. \textcolor{red}{How about this wording:  a total of $n-k$ elements from 
$X$ so that $Q$ comprises $k$ consecutive elements.
\end{showas}

%\begin{showas}{4}{2}{corollary}{theorem}
%\label{the: mingtime}
%Given a list of \(n\) bounded real numbers
%\(X=\{x_{1},x_{2},\ldots,x_{n}\}\) in the range 
%\([L,U]\) and an integer \(k\) \((k<n)\), the \(k\)-maximal Gini
%subset can be computed in \(\mathcal{O}%(k)\).
%\end{showas}

\begin{proof}
%We can view finding the $k$-minimal Gini subset as finding which of $n-k$ elements to remove from $X$ so that the set of remaining elements $Q$ has the smallest $g(Q)$ among all subsets of $X$ with $k$ elements.
   
 %   The proof is similar to the proof of Theorem \ref{the: maxgconsecutive}. Find the $k$-minimal Gini subset is same as take $n-k$ elements out of $X$. 

 First consider $k=2$. Clearly, the $2$-minimal subset $Q$ must include two consecutive elements in $X$, e.g., $Q = \{x_i, x_{i+1}\}$.  For any other $Q'$ that does not have two consecutive elements of $X$, e.g., $Q'= \{x_{i-1}, x_i\}$, we can decrease the Gini index by replacing one of the elements, e.g., replace $x_{i-1}$ with $x_i$. 
 
 %Thus, we can construct the $k$-minimal Gini subset by removing $n-2$ elements from $X$ leaving 2 consecutive elements.

    %The base case is taking one element out of the whole dataset. By theorem in section 4, the largest decrease (or least increase) of the Gini index will happen when we take out the elements in the tail (which means $x_1$ or $x_n$).  
    %The optimal solution would remove $\{x_1,..,x_i\}\cup\{x_{i+k+1},...,x_n\}$ The rest proof is similar to the proof for theorem 5.1. we can get the optimal solution by iterating through all $\{i:0\leq i \leq n-k\}$.

Now consider $k \geq 3$ elements from $X$, i.e., we  remove $n-k$ elements from $X$.  
We use a proof by contradiction. Assume that a   subset $Q=(y_1, \dots, y_k)$  of $X$ has the smallest Gini index among all subsets of $X$ of size $k$.  Assume that $Q$ contains one element $x_m$, where $m$ is the rank among the $n$ elements in $X$, that is not consecutive with any of the other $k-1$ elements in $Q$.   We will show that we can replace  $x_m$ with some $x_i \in X$ that is consecutive with the remaining elements in $Q$ to decrease the Gini index.  This contradicts the assumption that $Q$ is a $k$-minimal Gini subset; hence, we have a proof by contradiction. All the elements of $Q$ must be consecutively ordered in $X$.

Under our posited assumption about the composition of $Q$, we consider two possibilities for $Q$,  namely $(y_1, \dots, y_k)= (x_m, x_l, \dots, x_u)$ or $(y_1, \dots, y_k)= (x_l, \dots, x_u, x_{m})$. Here,  $x_l$ and $x_u$ are the lower and upper endpoints of the set of consecutive elements in $Q$.  We use $l$ and $u$ to refer to their indices. For notational convenience, when $m<(l-1)$, we refer to $(x_m, x_l, \dots, x_u)$ as $Q_l$.  When $m>(u+1)$ we refer to $(x_l, \dots, x_u, x_{m})$ as $Q_{u}$.  Let $Q_{-m}$  be the $k-1$ element vector obtained by removing $x_m$ from either $Q_l$ or $Q_u$. 

First, consider adding some $x_i \in X$ such that $i<l$ to $Q_{-m}$, resulting in $Q'$.  To compute the possible changes in the Gini index $\Delta g(Q')$, we can apply the expression in \eqref{eq:deltag2} with $m=1$. It is clear that $\Delta g(Q')$ decreases the most when we make $x$ in that expression as large as possible while maintaining $m=1$; that is, we add $x_{l-1}$. %\textcolor{blue}{since we use $(x_m, x_l, \dots, x_u)$ before, it sounds like we are adding $x_1$ when $m=1$, maybe use $m=l-1$?}
 Thus, when adding one element to $Q_{-m}$, the set of $k$ consecutive integers $(x_{l-1}, \dots, x_u)$ has the smallest Gini index.  However, this contradicts our assumption that $Q_l$ has the smallest Gini index.  Therefore, the assumption is untrue.

Likewise, we can adding some $x_i \in X$ such that $i>u$ to $Q_{-m}$, which we again call  $Q'$.  To compute the possible changes in the Gini index $\Delta g(Q')$, we can apply the expression in \eqref{eq:deltag2} with $m=n+1$.  We see that $\Delta g(Q')$ increases the least when we make $x$ in that expression as small as possible while maintaining $m=n+1$; that is, we add $x_{u+1}$. Thus, when adding one largest element to $Q_{-m}$, the set of $k$ consecutive integers $(x_{l}, \dots, x_u, x_{u+1})$ has the smallest Gini index.  However, this contradicts our assumption that $Q_u$ has the smallest Gini index.  Therefore, the assumption is untrue.

%increase in $g(Q_l)$ by adding one of $x_{l-1}$ or $x_{u+1}$.    

%adding an element $Q_m$. Finding the $k$-minimal Gini subset involves removing one element from $Q_{+x_{l}}$.  Suppose we remove $m=1$, the first element in   \textcolor{red}{STOPP HERE} Lemma \eqref{lem: delg} ensures that, if we delete a point $x$ with rank $m$ to $Q{-m}$, there is some $m_B$ such that $g(Q_{+x})$ decreases when $m_A>m$ and $g(Q_{+x})$ increases when $m_A<m$. 

For showing the contradiction, we need not consider any other $Q$ that includes more than one element $y_m$ that is not consecutive with the remaining elements. This is because for any such $Q$, 
%we have $g(Q) \geq g(Q')$, since 
%Lemma \eqref{lem: addg} indicates that 
we can decrease the Gini index by iteratively replacing the non-consecutive elements with elements that are consecutive to smallest or largest elements in $X$.
\end{proof}
Using Theorem \ref{the: mingconsecutiveappend}, we can search over all possible $k$ sequences of consecutive elements in $X$ to find the $k$-minimal Gini subset.

\begin{comment}
\begin{showas}{4}{3}{remark}{theorem}
\label{the: maxmintime}
Given a list of $n$ bounded real numbers $X = \{x_1, \ldots, x_n\}$ in the range $[L, U]$, the smooth upper bound of the $k$th local sensitivity of $X$ can be computed in $\mathcal{O}(k(n-k))$ steps.
\end{showas}
    By Theorem \ref{the: maxgconsecutive}, the $(n-k)$-maximal Gini subset can be achieved by removing $k$ consecutive elements. 
    %The maximum change happens when we insert one element to $L$ or $U$,
    We can iterate through all sets of $k$ consecutive elements in $X$.
    %assign $t$ of them to be $L$, and assign the other $k-t$ to be U.
    The maximum over all possible cases would be the maximum Gini index, which can be computed in $\mathcal{O}(k^2)$. \textcolor{red}{do we want $n-k$ maximal or $k$ maximal set in this time compuation?}
\end{comment}

%Similarly we can get the minimum Gini index by, Replace $\{x_1,..,x_j\}\cup\{x_{j+k+1},...,x_n\}$ with $x_{j_D}$ in Lemma \ref{lem: delg}.

\subsection{Expression for Gini Index from Section 4.4}\label{sec:proof4.4}

In our algorithms, we compute the Gini index after perturbing $k$ elements.  To compute the maximum Gini index, after we remove $n-k$ consecutive elements with the first element starting at rank $s$, with $j$ elements set equal to  $L$ and $k-j$ elements set equal to  $U$, 
  %\textcolor{red}{I don't understand what elements to U means} \textcolor{blue}{it should be setted to U} 
  the Gini index of the new dataset $X_{(kt)}$ equals
     \begin{align}
     g(X_{(kt)}) = \frac{Lj(j-n)+C_s+2j P_s
      +(C_n-C_{s+k})
     -2(k-j)(P_n-P_{s+k})
     +U(k-j)(n-k+j)}{P_s+P_n-P_{s+k}+jL+(k-j)U}.\end{align}
To compute the minimum Gini index, after we remove $i$ elements from the lower tail and $k - i$ elements from the upper tail, with the remaining middle block running from rank $L = i + 1$ to $R = n - k + i$, and then set all $k$ removed elements equal to the value $x_j$ for some $j$ with $L \leq j \leq R$, the Gini index of the new dataset $X_{(kt)}$ equals
\begin{align}
g(X_{(kt)}) &= \frac{1}{P_R - P_i + k\, x_j} \Big(2\left[(R_{j-1} - R_i) - i(P_{j-1} - P_i)\right] - (n+1)(P_{j-1} - P_i) \notag\\
  &\quad + k\, x_j(2(j - i - 1) + k - n) 
  + 2\left[(R_R - R_{j-1}) - (j-1)(P_R - P_{j-1})\right] \notag\\
  &\quad + (2(j - i - 1 + k) - n - 1)(P_R - P_{j-1})
\Big).
\end{align}
    %Intuitively, the rank of elements with lower rank than $s$ will increase $j$, and the rank of elements with higher rank than $n-k+s$ will decrease $n-j$, as we showed in Algorithm \ref{alg:fastmaxg} and Algorithm \ref{alg:fastmaxg}. 
    These facts allow us to compute the Gini index with \(\mathcal{O}(1)\) time costs. %For Algorithm 1, we adopt a similar method that avoids reordering the entire dataset. 

\subsection{Proof for Section 5}\label{sec:proofsec5}

We use Proposition 5.1 in the main text to design several simulation studies. This proposition allows us to examine the effects on the differentially private Gini index of increasing the $IQ(X)$, and thereby the effects of increasing $U-L$, given a fixed value of the Gini index.  We now provide a proof of that proposition.
\begin{showas}{5}{1}{proposition}{proposition}
Let $X$ have $n\ge2$ non-negative elements with mean $\bar x = n^{-1}\sum_{i=1}^n x_i>0$. For any $g\in[0,1)$,
\begin{equation}\label{eq: boundcases}
\inf\{IQ(X): g(X)=g\,\}\;=\;
\begin{cases}
4g, & 0\le g\le \tfrac12,\\[4pt]
\dfrac{1}{1-g}, & \tfrac12< g<1.
\end{cases}
\end{equation}
\end{showas}
\begin{proof}
%Scale invariance means a statistic $f$ is unchanged by positive rescaling: $f(cX) = f(X)$ for all $c > 0$. Here
To begin, we note that $IQ(X) = (U - L) / \bar{x}$ and $g(X)$
%= \frac{1}{2n^2\bar{x}} \sum_{i,j} |x_i - x_j|$ 
are both scale invariant.  That is, suppose we create $Y=(y_1, \dots, y_n) = (cx_1, \dots, cx_n)$ for some constant $c>0$, and we let $(L_y, U_y)=(cL, cU)$.  Then, $IQ(Y)=IQ(X)$ and $g(Y)$ = $g(X)$. 
%and not change the values of $IQ(X)$ or $g_0$
%because under $x_i \mapsto cx_i$ we have $U \mapsto cU$, $L \mapsto cL$, $\bar{x} \mapsto c\bar{x}$, and $|x_i - x_j| \mapsto c|x_i - x_j|$, so 
%$R$ and $g$ are unchanged;
Hence, for this proposition, we presume the data have $\bar{x} = 1$. 
%By convexity of $t \mapsto |t - a|$ (or by majorization), for fixed endpoints $L \leq x_i \leq U$ and fixed mean, 
With $\bar{x}=1,$ $g(X) = \sum_{i,j} |x_i - x_j|/n^2$. %It is maximized when 
%The smallest that we can make $IQ(Y)$ 

Consider $X$ in which some fraction $\alpha$ of its elements equal $L$ and the remaining fraction $(1- \alpha)$ of its elements equal $U$. We can obtain any $g(X)$ by specifying $(L,U)$ accordingly. 
%to  obtain $g(X)$ by setting some fraction $\alpha$ of elements in $X$ equal to $L$ and the remaining fraction of elements of $X$ equal to $U$,
%at the endpoints, so 
%Thus, it suffices to consider $Y$ vectors with some fraction $\alpha$ at $L_y$ and $1 - \alpha$ at $U_y$. Then
In particular, we have 
\begin{equation}
g(X) = \alpha(1 - \alpha)(U - L).\label{eq:IQandG}
%\quad \alpha L + (1 - \alpha)U = 1.
\end{equation}
%From the first display, for
For fixed $g(X)=g$, we can minimize $IQ(X) = (U-L)$ by maximizing $\alpha(1 - \alpha)$. Since $\alpha(1 - \alpha) \leq 1/4$, we have  $IQ(X) \geq 4g$. However, this solution only applies when $0 \leq g \leq 1/2$. % To see this, 
%set $\alpha = 1/2$ and solve $(L + U)/2 = 1$ with $U - L = 4g$, i.e. $L = 1 - 2g \geq 0$, $U = 1 + 2g$; this attains $R = 4g$. 
When $g > 1/2$, setting $\alpha = 1/2$ would force $L = 1 - 2g < 0$, which is infeasible since by assumption the elements of $X$ are non-negative. 
%Minimizing $R$ on the feasible boundary puts $L = 0$.
Setting $L = 0$, the constraint on $\bar x$ implies that $U = 1/(1 - \alpha)$. Hence, in these cases, $IQ(X) = U = 1/(1 - \alpha)$. Using \eqref{eq:IQandG}, we find  
%$g = \alpha(1 - \alpha)U$ yields $\alpha = g$ and hence 
$IQ(X) = 1/(1 - g)$. 
%The two formulas meet at $g = 1/2$ . Therefore Equation \ref{eq: boundcases} holds with equality achieved by the two-point constructions above.
\end{proof}

\end{document}